\newcommand{\eps}{\varepsilon}
\newcommand{\defeq}{{\text{}:=}}
\newcommand{\OWF}{\mathrm{OWF}}
\newcommand{\EFI}{\mathrm{EFI}}
\newcommand{\OWSG}{\mathrm{OWSG}}
\newcommand{\sOWSG}{\textnormal{sv-OWSG}}
\newcommand{\PRSG}{\mathrm{PRSG}}
\newcommand{\PRG}{\mathrm{PRG}}
\newcommand{\support}{\mathrm{support}}
\newcommand{\Id}{\mathbb{1}}
\renewcommand{\E}{\mathbb{E}}
\newcommand{\tab}{\hspace{2cm}}
\newcommand{\supp}{\mathrm{supp}}
\newcommand{\Tr}{\mathrm{Tr}}
\renewcommand{\poly}{\mathrm{poly}}
\newcommand{\ketbra}[2]{\ket{#1}\!\!\bra{#2}}
\newcommand{\good}{\ensuremath{\mathsf{Good}}}
\newtheorem{theorem}{Theorem}
\newtheorem{cor}[theorem]{Corollary}
\newtheorem{fact}{Fact}
\newtheorem{claim}{Claim}
\theoremstyle{definition}
\newtheorem{definition}{Definition}
\newcommand{\negl}{\mathsf{negl}}
\newcommand{\QPT}{\ensuremath{\mathrm{QPT}}}
\newcommand {\diver} [2] {{\mathrm{D}}(#1 \| #2)}
\newcommand {\dmax} [2] {{\mathrm{D}_{\infty}}(#1 \| #2)}
\newcommand {\dtwo} [2] {{\mathrm{D}_{2}}(#1 \| #2)}
\newcommand{\bern}{\textsf{Bern}}
\newcommand{\onenorm}[2]{\left\Vert #1 - #2\right\Vert_1}
\newcommand{\ext}{\mathsf{Ext}}
\newcommand{\gen}{\textsc{StateGen}}
\newcommand{\keygen}{\textsc{KeyGen}}
\newcommand{\ver}{\textsc{Ver}}
\def\tang{\ThisStyle{\abovebaseline[0pt]{\scalebox{-1}{$\SavedStyle\perp$}}}}
\title{Commitments are equivalent to statistically-verifiable one-way state generators}
\author{Rishabh Batra$^1$  \and Rahul Jain$^{1,2, 3}$  }
\date{${}^{1}$ Centre for Quantum Technologies\\
${}^{2}$Department of Computer Science, National University of Singapore\\
${}^{3}$ MajuLab, UMI 3654, Singapore}
\begin{document}
\maketitle

\begin{center}
ArXiv: 2404.03220 \quad ; \quad Appeared at FOCS 2024
\end{center}
\begin{abstract}

One-way state generators ($\OWSG$) \cite{owsg} are natural quantum analogs to classical one-way functions. {We consider statistically-verifiable $\OWSG$s ($\sOWSG$), which are potentially weaker objects than $\OWSG$s.} We show that $O\left(\frac{n}{\log(n)}\right)$-copy $\sOWSG$s ($n$ represents the input length) are equivalent to $\poly(n)$-copy $\sOWSG$s and to quantum commitments. Since known results show that  $o\left(\frac{n}{\log(n)}\right)$-copy $\OWSG$s cannot imply commitments~\cite{cavalar2023computational}, this shows that $O\left(\frac{n}{\log(n)}\right)$-copy $\sOWSG$s are the weakest $\OWSG$s from which we can get commitments (and hence much of quantum cryptography).

Our construction follows along the lines of H\r{a}stad, Impagliazzo,  Levin and  Luby~\cite{HILL}, who obtained classical pseudorandom generators ($\PRG$) from classical one-way functions ($\OWF$), however with crucial modifications. Our construction, when applied to the classical case, provides an alternative to the construction provided by~\cite{HILL} to obtain a classical mildly non-uniform $\PRG$ from any classical $\OWF$. Since we do not argue conditioned on the output $f(x)$, our construction and analysis is arguably simpler and may be of independent interest. For converting a mildly non-uniform $\PRG$ to a uniform $\PRG$, we can use the same construction as~\cite{HILL}.

\end{abstract}
\section{Introduction}
Commitments are one of the most fundamental primitives in cryptography. A commitment scheme allows a sender to commit some message $m$ (in an encrypted form) to a receiver such that the following properties hold:

\textbf{Hiding}: the receiver should not be able to learn the message $m$ before the sender decides to reveal it. 

\textbf{Binding}: the sender should not be able to change the message to some $m'\neq m$ once it has committed to $m$.

An intuitive way to think about commitments is as follows: the sender Alice locks the message to be sent in a physical box and sends it across to the receiver Bob and keeps the key with herself. 
When Alice wants to reveal the message, she just sends the key across to Bob. Hiding follows from the fact that Bob cannot open the box unless Alice sends the key. Binding follows from the fact that Alice can't change the message once she has sent the box across.

There are two types of securities considered in cryptography: statistical security (against computationally unbounded adversaries) and computational security (against computationally bounded adversaries). Most of the results in classical cryptography are based on some computational assumptions.

It is known that both statistical hiding and statistical binding cannot be achieved using classical protocols. There are some tasks e.g. statistically secure key distribution that are not possible using classical protocols, however are possible using quantum protocols~\cite{BB84,Ekert91}. This leads to the question if it is possible to have both statistical hiding and statistical binding using quantum protocols for commitments. It is known that
this is also not possible~\cite{Mayers_BC_Impossible, Lo_imposibility_stuff, LO_Chahu_BC_Impossible}. Thus in the quantum regime also, computational assumptions are necessary for the existence of many cryptographic primitives. 

A pseudorandom generator ($\PRG$) is an efficiently computable function, $\PRG: \{0,1\}^n \rightarrow \{0,1\}^{l(n)}$ such that $l(n)>n$ and $\PRG(U_n)$ ($U_n$ represents the uniform distribution on $n$ bits) is computationally indistinguishable from $U_{l(n)}$. A one-way function ($\OWF$) is a function that is efficiently computable and hard to invert. 
The existence of one-way functions is a widely used computational assumption in classical cryptography and many of the classical cryptographic primitives are known to imply the existence of one-way functions. An EFID pair~\cite{EFID} is a pair of efficiently samplable distributions that are statistically far and computationally indistinguishable. It is known that one-way functions, EFID and $\PRG$ are equivalent to each other~\cite{Goldreich_2001}. 

In the quantum world, however, the story is not as straightforward. Morimae and Yamakawa \cite{Morimae_2022} showed that quantum commitments can be based on pseudorandom state generators ($\PRSG$) \cite{JLS_oneway, OT3}, which can be thought of as a generalization of $\PRG$s with quantum outputs. It is known that $\PRSG$s can exist even if BQP = QMA (relative to a quantum oracle)~\cite{Kre} or if P = NP (relative to a classical oracle)~\cite{KSTQ}, which does not allow for the existence of one-way functions (relative to these oracles).

An $\EFI$ (see Definition~\ref{def:imEFI}) is a quantum polynomial time ($\QPT$) algorithm that generates a pair of quantum states that is statistically far and computationally indistinguishable~\cite{brakerski2022computational}. Quantum commitments are known to be equivalent to EFI~\cite{brakerski2022computational, EFI}. A  $k^*$-imbalanced $\EFI$ is a $\QPT$ algorithm that takes as input a parameter $k$ and generates a pair of states, say $\rho_0(k)$ and $\rho_1(k)$, that is computationally indistinguishable for $k\leq k^*$ and is statistically far for $k\geq k^*$. It is easily seen that an $\EFI$ implies an imbalanced-$\EFI$. Khurana and Tomer~\cite{KT} show that the converse is also true (below $\lambda$ represents the security parameter).  
\begin{fact}[\cite{KT}]\label{fact:imbalanced_EFI_to_EFI-1}
Let $k^*(\lambda) \in \poly(\lambda)$. Then a $k^*$-imbalanced $\EFI$ implies $\EFI$. 
\end{fact}
One-way state generators ($\OWSG$) \cite{owsg} are quantum generalizations of one-way functions in which the output $\phi(x)$ (for input $x\in \{0,1\}^n$) is a quantum state and the function is difficult to invert even when multiple copies of the output state, that is $\phi(x)^{\otimes m}$, is given. Note that for classical one-way functions, providing many copies of $f(x)$ to the adversary was not changing anything since the adversary could have made the copies itself. However, the quantum no-cloning theorem does not allow the adversary to copy quantum states and hence providing many copies may help the adversary. 

We use a definition of statistically-verifiable $\OWSG$ ($\sOWSG$, see Definition~\ref{def:OWSG}) that differs from the definitions used in~\cite{owsg} and~\cite{KT}. In our definition, the verifier $\ver$ is an unbounded algorithm and the adversary $A$ is a non-uniform $\QPT$ algorithm, that is a $\QPT$ algorithm that is provided a polynomial-size classical advice string. Providing polynomial-size advice string to adversaries is quite common in classical cryptography~\cite{Goldreich_2001} and allows for several reductions to go through. In the definition by~\cite{owsg}, both $\ver$ and $A$ are uniform $\QPT$ algorithms. In the definition by~\cite{KT}, $\ver$ is uniform $\QPT$ and $A$ is a $\QPT$ algorithm that is provided polynomial-size quantum states as advice. {Note that increasing the power of $\ver$ allows for a larger class of $\OWSG$s and increasing the power of $A$ allows for a smaller class of $\OWSG$s.} In that sense, our definition is incomparable to that of~\cite{owsg}. { An $\OWSG$ according to~\cite{KT} is also an $\sOWSG$.}

\cite{KT} show that $\OWSG$s with pure-state output imply $\EFI$. Construction of an $\EFI$ from a mixed-state output $\OWSG$ was left open. {We resolve this question by constructing EFI from $\sOWSG$ (which is implied by $\OWSG$). } Our main result is (below $n$ represents the key-length of the $\sOWSG$):
\begin{theorem}\label{thm:main}
   Let $m = \frac{cn}{\log(n)}$ for some constant $c>0$. An $m$-copy {$\sOWSG$} implies a $k^*$-imbalanced $\EFI$  for some $k^*(\lambda) \in \poly(\lambda)$. 
\end{theorem}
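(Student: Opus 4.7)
My plan is to construct the $k^*$-imbalanced $\EFI$ in the spirit of the HILL leftover-hash-lemma paradigm, tailored to the quantum $\OWSG$ setting. For each input parameter $k$, the $\QPT$ algorithm outputs
\[
\rho_0(k) = \bbE_{x,h}\big[\ketbra{h, h(x)}{h, h(x)} \otimes \phi(x)^{\otimes m-1}\big], \qquad
\rho_1(k) = \bbE_{x,h,u}\big[\ketbra{h, u}{h, u} \otimes \phi(x)^{\otimes m-1}\big],
\]
where $x \sim U_n$, $h$ is drawn from a universal hash family $\{0,1\}^n \to \{0,1\}^k$, and $u \sim U_k$. One copy of $\phi(x)$ is reserved for the security reduction. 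I would pick $k^*(\lambda) \in \poly(\lambda)$ lying below the $\QPT$ pseudo conditional min-entropy and above the unbounded conditional min-entropy of $X$ given $\phi(X)^{\otimes m-1}$.

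\textbf{Computational indistinguishability at $k \leq k^*$.} A non-uniform $\QPT$ distinguisher between $\rho_0(k)$ and $\rho_1(k)$ implies, via a hybrid and hardcore-bit-style argument, a $\QPT$ adversary that guesses $x$ given $\phi(x)^{\otimes m-1}$ with probability noticeably above $2^{-k}$. Using the reserved $m$-th copy, such a guesser directly produces a preimage accepted by $\ver$ in the $\OWSG$ game, contradicting security. The $\OWSG$ security directly gives pseudo-min-entropy $\omega(\log n)$; the slack up to a polynomial $k^*$ is closed by parallel repetition of the construction (tensoring $t = \poly(n)$ independent key--state pairs), which amplifies pseudo-entropy additively, combined with a quantum-secure conditional leftover hash lemma.

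\textbf{Statistical farness at $k \geq k^*$.} An unbounded distinguisher, given the $m-1$ copies of $\phi(x)$, runs the unbounded $\ver$ over all candidates $x' \in \{0,1\}^n$ to certify the true $x$ with non-negligible probability via a plausibility score (e.g., accepting the $x'$ that passes $\ver$ on the majority of copies). This quantum state-certification step is precisely why $m = \Theta(n/\log n)$ copies are needed: the matching lower bound of~\cite{cavalar2023computational} shows that fewer copies make such certification impossible. With $\tilde{x}$ identified, the distinguisher compares $h(\tilde{x})$ to the hashed register and observes that it matches with noticeably higher probability on $\rho_0$ than on $\rho_1$, yielding statistical distance $\Omega(1)$ after amplification.

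\textbf{Main obstacle.} The central difficulty is ensuring that a single polynomial threshold $k^*$ simultaneously lies above the unbounded guessing min-entropy (for statistical farness) and below the amplified $\QPT$ pseudo-min-entropy (for computational indistinguishability); in other words, one must verify that the gap between these two hardness notions remains polynomial. The copy count $m = \Theta(n/\log n)$ is what makes both sides tight, matching the impossibility of~\cite{cavalar2023computational}. The authors' remark that their analysis avoids conditioning on specific outputs of $\phi$ suggests the pseudo-entropy amplification is carried out by a direct averaged argument rather than the classical HILL case split over heavy/light preimage weight classes. In the quantum setting this simplification is likely essential: conditioning on measurement outcomes disturbs the remaining copies, so an averaged pseudo-min-entropy analysis is effectively forced on us by the no-cloning constraint.
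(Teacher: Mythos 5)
Your proposal captures the right high-level intuition — hash against $\phi(x)^{\otimes m-1}$, pick $k^*$ between two entropy thresholds, and amplify — but the construction as stated omits a mechanism that the paper's proof cannot do without, and the statistical-farness argument is unsound.

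\textbf{Missing hardcore function.} Your $\rho_0(k)$ and $\rho_1(k)$ differ only in whether the hash output $h(x)$ is replaced by uniform. By the leftover hash lemma this pair is \emph{statistically} close whenever $k \lesssim S_2(X\,|\,\phi(X)^{\otimes m-1})$ and statistically far only once $k \gtrsim S_0(X\,|\,\phi(X)^{\otimes m-1})$; in the (potentially $\Omega(n)$-wide) region in between, \emph{neither} condition of the imbalanced EFI is guaranteed, so no single threshold $k^*$ separates them. The paper closes this gap with two ingredients you do not use: (i) a \emph{hardcore function} $g(X,R)$ of length $O(\log n)$ appended alongside the hash, so that $\tau_0 = Q^{i^*}HH^{l}(X)R\,g(X,R)$ versus $\tau_1 = Q^{i^*}HH^{l}(X)\otimes U_{2n}\otimes U_{|g|}$ are computationally indistinguishable (Fact~\ref{fact:quantum_hardcore}) yet have an $\Omega(\log n)$ von~Neumann entropy gap (Claim~\ref{claim*:entropy_diff_large}), and (ii) a ``brothers'' flattening extension $JB$ that brings $S_0$ and $S_2$ of the joint state to within $O(1)$ of each other (Claim~\ref{claim:flatness_mixed}), together with the choice of $i^*\in[m]$ where consecutive conditional R\'enyi entropies drop by at most $O(\log n)$ (Claim~\ref{claim:difference_of_H_2_small}). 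Without the hardcore function there is simply no source for the computational-vs-statistical separation; the hash alone cannot create it.

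\textbf{Unsound statistical-farness step.} Your unbounded distinguisher ``runs $\ver$ over all candidates $x'$ to certify the true $x$.'' The $\OWSG$ definition (Definition~\ref{def:OWSG}) gives security only against $\QPT$ adversaries; there is no guarantee that $\ver(x',\phi_x)$ rejects for $x'\neq x$, and an unbounded party may well find exponentially many $x'$ that $\ver$ accepts on every copy. So the ``certification'' step has no foundation, and the appeal to~\cite{cavalar2023computational} is misdirected (their result concerns existence of few-copy $\OWSG$s, not a decodability lower bound that would support this step). The paper instead proves statistical farness purely information-theoretically: the entropy gap between $\rho_0(\tau)$ and $\rho_1(\tau)$ is amplified by taking $t=\poly$ tensor copies (Fact~\ref{fact:smooth_entopy_multiple_copies}) and made observable by a quantum extractor (Fact~\ref{fact:quantum_extractor}) whose seed length $s_Q(k)$ — not the hash length — is the $k$-parameter of the imbalanced EFI; the output is compared against a genuinely maximally mixed state, not against a hashed-vs-uniform pair.

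\textbf{Computational side is also under-argued.} The claim that a distinguisher between your $\rho_0(k)$ and $\rho_1(k)$ yields a guesser of $x$ with probability ``noticeably above $2^{-k}$'' is not a usable contradiction for $k=\poly(\lambda)$, since $2^{-k}$ is then negligible and $\OWSG$ security only forbids noticeable inversion. A careful Goldreich--Levin-style reduction works only for $O(\log n)$-bit outputs, which is precisely why the paper confines the hardcore function to $O(\log n)$ bits and pushes the rest of the $k$-dependence into the extractor seed after amplification. In summary, your construction is structurally different from the paper's, but the differences are not a simplification — they remove the two pieces (hardcore function and flattening) that make the entropy accounting close, and they substitute a $\ver$-based statistical test that the $\OWSG$ definition does not support.
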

From Theorem \ref{thm:main} and Fact \ref{fact:imbalanced_EFI_to_EFI-1}, we get:
\begin{cor}\label{cor:forward}
        Let $m = \frac{cn}{\log(n)}$ for some constant $c>0$. An $m$-copy $\sOWSG$ implies an $\EFI$.
\end{cor}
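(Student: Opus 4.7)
The corollary is immediate once we chain together Theorem~\ref{thm:main}, Fact~\ref{fact:imbalanced_EFI_to_EFI-1}, and the known equivalence between quantum commitments and $\EFI$ recalled in the introduction (attributed to~\cite{brakerski2022computational, EFI}). So the plan is simply to assemble these three pieces in sequence, with no new technical content.

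Concretely, I would proceed as follows. First, invoke Theorem~\ref{thm:main} on the given $m$-copy $\OWSG$ with $m = cn/\log(n)$ to obtain a $k^*$-imbalanced $\EFI$ with $k^*(\lambda) \in \poly(\lambda)$. Second, since $k^*$ is polynomial, apply Fact~\ref{fact:imbalanced_EFI_to_EFI-1} to convert this imbalanced $\EFI$ into a quantum commitment scheme. Third, cite the equivalence of quantum commitments and $\EFI$ to extract an $\EFI$ from the commitment scheme produced in the previous step.

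There is really no obstacle here beyond bookkeeping: the two statements being combined are tailored to plug into one another. The only thing worth remarking on is that the conversion in Fact~\ref{fact:imbalanced_EFI_to_EFI-1} crucially uses that $k^*$ is polynomially bounded in $\lambda$, which is precisely what Theorem~\ref{thm:main} delivers. The actual work lies in proving Theorem~\ref{thm:main}; once that is in hand, Corollary~\ref{cor:forward} is a one-line consequence.
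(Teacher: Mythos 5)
Your proposal matches the paper's proof exactly: the paper derives Corollary~\ref{cor:forward} by combining Theorem~\ref{thm:main} with Fact~\ref{fact:imbalanced_EFI_to_EFI-1} (implicitly also using the commitments--$\EFI$ equivalence from~\cite{brakerski2022computational}, which you make explicit). No differences in substance.
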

 Combining this with previous works that show  {oblivious transfer} and secure multi-party computation  from quantum commitments~\cite{OT1, OT2, OT3,brakerski2022computational} (which are in turn equivalent to EFI), we get: 
 \begin{cor}
       Let $m = \frac{cn}{\log(n)}$ for some constant $c>0$. An $m$-copy $\sOWSG$ implies {oblivious transfer} and secure multi-party computation for all functionalities.
 \end{cor}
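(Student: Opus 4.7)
The statement is immediate once one has Corollary~\ref{cor:forward}, and my plan is simply to trace the chain of implications already set up in the preceding text, pointing out at each step which previously established result is being invoked.

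First, I would apply Corollary~\ref{cor:forward}: given an $m$-copy $\OWSG$ with $m = cn/\log(n)$, this yields an $\EFI$. This step already absorbs the main technical content of the paper, since Corollary~\ref{cor:forward} itself relies on Theorem~\ref{thm:main} (constructing a $k^*$-imbalanced $\EFI$ from the $\OWSG$ via a HILL-style analysis adapted to mixed quantum outputs) composed with Fact~\ref{fact:imbalanced_EFI_to_EFI-1} (promoting any $k^*$-imbalanced $\EFI$ to a standard $\EFI$, due to~\cite{KT}). Next, I would use the known equivalence between $\EFI$ and quantum commitment schemes from~\cite{brakerski2022computational, EFI} to extract a quantum commitment from the $\EFI$ produced above. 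Finally, I would invoke the constructions of~\cite{OT1,OT2,OT3}, which realise secure multi-party computation for all functionalities from any quantum commitment scheme. Composing these three black-box reductions, the assumed $m$-copy $\OWSG$ yields a protocol for secure MPC in the claimed parameter regime.

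There is no independent obstacle to overcome for the corollary itself: all of the technical difficulty sits upstream in Theorem~\ref{thm:main}, and the role of the present statement is only to collect the downstream cryptographic consequences of that construction. The only point worth flagging is that the quantitative copy bound $m = cn/\log(n)$ is inherited unchanged through the entire chain, because every reduction after Theorem~\ref{thm:main} is a black-box transformation with at most polynomial overhead in the security parameter, and none of them reopen the dependence on $n$ or on the number of copies of the $\OWSG$ output.
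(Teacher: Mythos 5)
Your proposal matches the paper's own derivation exactly: apply Corollary~\ref{cor:forward} to obtain an $\EFI$, pass to quantum commitments via the equivalence of~\cite{brakerski2022computational, EFI}, and then invoke~\cite{OT1,OT2,OT3} for secure multi-party computation. This is precisely the chain of black-box reductions the paper uses, so your argument is correct and essentially identical.
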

We show the converse to Corollary \ref{cor:forward} is also true.
\begin{theorem}
    An $\EFI$  implies a $\poly(n)$-copy $\sOWSG$.
\end{theorem}
Combining the above results we get: 
\begin{cor}
    The following cryptographic primitives are equivalent:
    \begin{itemize}
        \item $\poly(n)$-copy $\sOWSG$,
         \item $O\left(\frac{n}{\log(n)}\right)$-copy $\sOWSG$,
         \item $\EFI$,
         \item quantum commitments.
    \end{itemize}
\end{cor}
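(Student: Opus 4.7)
The plan is to assemble a cycle of implications through the four listed primitives; the corollary is then bookkeeping over the preceding theorems together with the cited equivalences.

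I would start with the trivial reduction $\poly(n)$-copy $\OWSG \Rightarrow O(n/\log n)$-copy $\OWSG$. Since $m(n) = cn/\log n$ is itself at most polynomial in $n$, and since the security of an $m$-copy $\OWSG$ is monotone in $m$ --- any attack with fewer copies of $\phi(x)$ lifts trivially to an attack with more copies by ignoring the extras, and the verifier $\ver$ is unaffected --- this direction is immediate from the definition.

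Next I would chain the main results of the paper with the literature. Corollary~\ref{cor:forward} supplies $O(n/\log n)$-copy $\OWSG \Rightarrow \EFI$; the equivalence $\EFI \Leftrightarrow$ quantum commitments from Brakerski--Canetti--Qian and Yan is cited in the introduction; and the theorem preceding this corollary gives $\EFI \Rightarrow \poly(n)$-copy $\OWSG$. Composing these yields the cycle
\[
\poly(n)\text{-copy }\OWSG \;\Longrightarrow\; O(n/\log n)\text{-copy }\OWSG \;\Longrightarrow\; \EFI \;\Longleftrightarrow\; \text{quantum commitments} \;\Longrightarrow\; \poly(n)\text{-copy }\OWSG,
\]
which closes the loop and shows the four primitives to be pairwise equivalent.

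Because the proof is pure composition, there is no real technical obstacle here; the substantive content already sits in Theorem~\ref{thm:main} (the mixed-state $\OWSG$ to imbalanced $\EFI$ construction, combined with Fact~\ref{fact:imbalanced_EFI_to_EFI-1}) and in the reverse direction via the quantum analogue of the H\r{a}stad--Impagliazzo--Levin--Luby construction. The only point worth stating explicitly is the monotonicity of the $\OWSG$ definition under reducing the number of copies, which holds because our definition keeps $\ver$ unbounded and $A$ non-uniform $\QPT$ --- neither of which is affected when an adversary simply discards copies of the output state.
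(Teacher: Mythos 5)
Your proposal is correct and matches the paper's (implicit) argument: the paper states this corollary without a dedicated proof precisely because it is pure composition of Corollary~\ref{cor:forward}, the $\EFI \Leftrightarrow$ quantum commitments equivalence of~\cite{brakerski2022computational}, and the theorem that $\EFI$ implies $\poly(n)$-copy $\OWSG$, together with the trivial monotonicity observation (more copies secure implies fewer copies secure) that you spell out explicitly. Making the monotonicity step explicit is a reasonable addition; just note that the clean version of that step relies on reading ``$\poly(n)$-copy $\OWSG$'' as security against every polynomial number of copies (which is what the paper's $\EFI \Rightarrow \OWSG$ construction actually delivers), so that $O(n/\log n) < n$ immediately puts the required number of copies within scope.
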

A natural question that can be asked at this point is whether an $o\left(\frac{n}{\log(n)}\right)$-copy $\sOWSG$ implies a $\poly(n)$-copy $\sOWSG$ (equivalently $\EFI$)?  Cavalar et al.~\cite{cavalar2023computational} show that $o\left(\frac{n}{\log(n)}\right)$-copy pure-state output $\OWSG$s ({that are potentially stronger than $\sOWSG$s}) with statistical security exist unconditionally. Hence they cannot imply an $\EFI$  unless $\EFI$s also exist unconditionally. This gives the following corollary.
 \begin{cor}
Unless $\EFI$s exist unconditionally, $O\left(\frac{n}{\log(n)}\right)$-copy $\sOWSG$s are the weakest $\OWSG$s from which we can get an $\EFI$, equivalently, commitment schemes. 
\end{cor}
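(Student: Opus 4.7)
The plan is to assemble this corollary directly from the two endpoints of the paper's dichotomy, together with the elementary observation that $\EFI$ cannot exist unconditionally. First I would invoke Corollary~\ref{cor:forward} to handle the upper (positive) direction: any $m$-copy $\OWSG$ with $m = O(n/\log n)$ yields an $\EFI$, and hence (by the known equivalence cited just before Fact~\ref{fact:imbalanced_EFI_to_EFI-1}) a quantum commitment scheme. So $O(n/\log n)$-copy $\OWSG$s are at least as strong as $\EFI$.

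For the lower direction one must show that no strictly weaker class, namely $o(n/\log n)$-copy $\OWSG$s, can yield $\EFI$. I would appeal to the unconditional construction of Cavalar et al.~\cite{cavalar2023computational}: there exist $o(n/\log n)$-copy pure-state $\OWSG$s that are secure even against computationally unbounded adversaries. Since the $\OWSG$ security in that construction does not rest on any complexity assumption, any black-box reduction from $\OWSG$ to $\EFI$ would in particular yield an $\EFI$ without assumptions. But an $\EFI$ scheme requires a pair of states that are statistically far and at the same time computationally indistinguishable; unconditional security of the indistinguishability property would force the two states to be statistically indistinguishable, contradicting the statistical distance requirement. Hence no such reduction can exist.

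Putting these two halves together shows that $O(n/\log n)$ copies is precisely the threshold at which $\OWSG$s become powerful enough to imply $\EFI$ (equivalently, quantum commitments, and by the further chain of implications mentioned above, secure multi-party computation). The only step that requires a small amount of care is verifying that the security of the Cavalar et al.\ construction holds against the same unbounded-verifier, non-uniform $\QPT$-adversary model used in our Definition~\ref{def:OWSG}; since statistical security against all adversaries trivially subsumes the non-uniform $\QPT$ model, this is immediate, and no further work is needed.
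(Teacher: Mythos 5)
Your proposal takes essentially the same route as the paper: the forward direction is exactly Corollary~\ref{cor:forward}, and the lower bound is exactly the paper's appeal to the unconditional $o(n/\log n)$-copy statistically secure $\OWSG$ of Cavalar et al.~\cite{cavalar2023computational}, which the paper dismisses tersely with ``cannot imply an $\EFI$ (by definition of $\EFI$)''. You simply spell out the intended reasoning (a black-box reduction applied to a statistically secure $\OWSG$ would transfer statistical security to the indistinguishability side, contradicting the statistical-distance requirement of $\EFI$) and add the routine check that unconditional security in particular subsumes the paper's non-uniform $\QPT$-adversary / unbounded-verifier model — both consistent with what the paper leaves implicit.
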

Figures \ref{fig:before} and \ref{fig:after} capture relations between different primitives before and after our work. The black arrow from primitive $A$ to $B$ represents that if primitive $A$ exists, then primitive $B$ also exists. The red dotted lines represent separations, either oracle or unconditional (unless $\EFI$s exist unconditionally), between the primitives.
\begin{figure}
    \centering

\resizebox{16cm}{5cm}{\begin{tikzpicture}[font=\small, thick]
\node[draw,  rectangle, rounded corners, minimum width=2cm,minimum height=1cm]  at (1.5,0) (XX)  { OWF };
\node[draw,  rectangle, rounded corners , minimum width=2cm,minimum height=1cm] at (4.5,0) (YY) { pure-OWSG  };
\node[draw,  rectangle, rounded corners , minimum width=2cm,minimum height=1cm] at (6,-2) (ZZ) { EFI   };
\node[draw,  rectangle, rounded corners , minimum width=2cm,minimum height=1cm] at (6,2) (AA) { {OWSG} };
\node[draw,  rectangle, rounded corners , minimum width=2cm,minimum height=1cm] at (10,2) (A) {  sv-OWSG };
\node[draw,  rectangle, rounded corners , minimum width=2cm,minimum height=1cm] at (10,-2) (C) { Commitments };
\node[draw,  rectangle, rounded corners , minimum width=2cm,minimum height=1cm] at (14.5,1) (C1) { Oblivious Transfer/MPC };
\draw [ultra thick, ->](XX)--(YY);
\draw [ultra thick, ->](YY) -- (ZZ);
\draw [red, ultra thick, dashed, ->](YY.165) -- (XX.15);
\draw [ultra thick, ->](YY) -- (AA);
\draw [ultra thick, ->](AA) -- (A);
\draw [ultra thick, ->](ZZ) -- (C);
\draw [ultra thick, ->](C) -- (ZZ);
\draw [ultra thick, ->](C1) -- (C);
\draw [ultra thick, ->](C) -- (C1);
\node at (3,0.8) {\cite{Kre,KSTQ}, oracle};
  \node at (4.5,-1) {\cite{KT}};
  \node at (7.9,-1.2) {\cite{brakerski2022computational, EFI}};
    \node at (15,-0.8) {\cite{brakerski2022computational, OT1, OT2, OT3}};
\end{tikzpicture}}
 \caption{Relations between different primitives before our work }
 \label{fig:before}
\end{figure}

\begin{figure}
    \centering

\resizebox{16cm}{5cm}{
\begin{tikzpicture}
\node[draw,  rectangle, rounded corners , minimum width=2cm,minimum height=1cm]  at (0,0) (XX)  { OWF };
\node[draw,  rectangle, rounded corners , minimum width=2cm,minimum height=1cm] at (3,0) (YY) { pure-OWSG };
\node[draw,  rectangle, rounded corners , minimum width=2cm,minimum height=1cm] at (1,-4) (Y) { sv-pure-OWSG };
\node[draw,  rectangle, rounded corners , minimum width=2cm,minimum height=1cm] at (6,0) (ZZ) { OWSG };
\node[draw,  rectangle, rounded corners , minimum width=2cm,minimum height=1cm] at (9,0) (AA) { {sv-OWSG} };
\node[draw,  rectangle, rounded corners , minimum width=3cm,minimum height=1cm] at (12.7,0) (A) {  $O(\frac{n}{\log(n)})$-sv-OWSG };
\node[draw,  rectangle, rounded corners , minimum width=2cm,minimum height=1cm] at (16.5,0) (B) { EFI };
\node[draw,  rectangle, rounded corners , minimum width=2.5cm,minimum height=1cm] at (15,-4) (C) { $o(\frac{n}{\log(n)})$-sv-OWSG};
\node[draw,  rectangle, rounded corners , minimum width=2.7cm,minimum height=1cm] at (6,-4) (E) { $o(\frac{n}{\log(n)})$-pure-OWSG};
\node[draw,  rectangle, rounded corners , minimum width=2cm,minimum height=1cm] at (10,-4) (D) { $o(\frac{n}{\log(n)})$-OWSG};
\draw [ultra thick, scale=10, {->}](XX)--(YY);
\draw [ultra thick, ->](Y)--(YY);
\draw [ultra thick, ->](YY)--(Y);
\draw [ultra thick, ->](YY) -- (ZZ);
\draw [red, ultra thick, dashed, ->](YY.165) -- (XX.15);
\draw [red, ultra thick, dashed, ->](E) -- (B);
\draw [ultra thick, ->](YY) -- (ZZ);
\draw [ultra thick, ->](ZZ) -- (AA);
\draw [ultra thick, ->](D) -- (C);
\draw [ultra thick, ->](AA) -- (A);
\draw [ultra thick, ->](A) -- (B);
\draw [ultra thick, ->](A) -- (AA);
\draw [ultra thick, ->](B) -- (A);
\draw [ultra thick, ->](YY) -- (E);
\draw [ultra thick, ->](E) -- (D);
\draw [ultra thick, ->](A) -- (C);
\node at (10.5,1) {Our work};
\node at (15,1) {Our work};
\node at (1.5,1) {\cite{Kre,KSTQ}, oracle};
   \node at (9,-1.7) {\cite{cavalar2023computational}, unconditional };
\end{tikzpicture}}
\caption{Relations between different primitives after our work }\label{fig:after}
\end{figure}

\subsection*{Proof idea:}

Let $f(\cdot)$ be an $\OWF$ that is hard to invert given $f(X)$ ($X$ represents the input random variable). A hardcore function $g(\cdot)$ for $f(\cdot)$ is an efficiently computable function such that $f(X)g(X)$ and $f(X)\otimes U_{|g(X)|}$ are computationally indistinguishable. However, we do not have the guarantee that $f(X)g(X)$ and $f(X)\otimes U_{|g(X)|}$  are statistically far apart which is necessary for them to be an $\EFI$ pair. 

 H\r{a}stad, Impagliazzo,  Levin and  Luby \cite{HILL} presented a construction of a classical $\PRG$ from a classical $\OWF$. The idea in \cite{HILL} is to first append $HH(X)$ (where $H, H(X)$ are the seed and the output of a $2$-universal hash function based extractor) to $f(X)$ to increase the amount of information about $X$ in $f(X)HH(X)$. This (sort of) makes $X\rightarrow f(X)HH(X)$ an injective function. On appending $HH(X)$, one needs to ensure that the resultant function remains one-way. For this to happen, one can take $|H(X)|$ to be around  $S_2(X|f(X))$  which ensures that $HH(X)$ is nearly independent of $f(X)$. Here $S_\alpha(\cdot)$ represents $\alpha$-R\'enyi entropy (see Definition~\ref{def:renyi}). In \cite{HILL}, $|H(X)|$ depended on the number of preimages of $f(X)$, and hence they needed to condition on the outcome $f(X)$. They then appended a hardcore function $g(X)$ to $f(X)HH(X)$. Doing this maintains computational indistinguishability of $f(X)HH(X)g(X)$ from 
$f(X)HH(X)\otimes U_{|g(X)|}$. Since $f(X)HH(X)$ carries most information about $X$ (injectivity), they argued that $f(X)HH(X)g(X)$ and 
$f(X)HH(X)\otimes U_{|g(X)|}$ are statistically far apart, thus producing an $\EFI$ pair.

 In the quantum regime, this approach runs into an obvious issue that one cannot condition on a quantum state. \cite{KT} circumvent this by measuring the output state using efficient shadow tomography (which preserves expectation values for exponentially many observables). They then get a classical output which allows them to proceed with arguments along the lines of~\cite{HILL}. However they face an additional issue: after conditioning on the measurement outcome, the probability distribution on the preimages may not be uniform. They overcome this using novel arguments. Finally, they show that a pure-state $\OWSG$ implies commitments. Due to the properties needed for known methods for efficient shadow tomography (requiring some norm bounds on states), their argument could only go through for pure-state $\OWSG$s. 
 
Unlike~\cite{HILL} and~\cite{KT}, our proof does not go via measuring the output state of the $\OWSG$ and works for mixed-state $\OWSG$s.  Following~\cite{HILL}, we wish to append $HH^l(X)$ to the combined input-output state $\tau^{XQ^i}$ of an $m$-copy $\OWSG$ for some $i\in [m]$ (here $i$ represents the number of copies of the output state and $l$ represents the output length of the seeded extractor). Consider the states \[
  \tau_0(i,l)=Q^iHH^l(X)Rg(X,R)_\tau,~~
  \tau_1(i,l)=Q^iHH^l(X)_\tau\otimes U_{|R|} \otimes U_{|g(X,R)|}.
 \] To get an EFI pair, we want to argue the following: \[
 \tau_0(i,l) \approx_{\negl_C}\tau_1(i,l);~~
 S(\tau_1(i,l))- S(\tau_0(i,l)) \geq \frac{1}{\poly(n)}.
 \]
The simplified idea for this is as follows. We first identify an $i^*\in [m]$ where \[
\mathrm{D}(\tau^{XQ^{i^*+1}}\|\Id^X\otimes\tau^{Q^{i^*+1}})\approx\mathrm{D}(\tau^{XQ^{i^*}}\|\Id^X\otimes\tau^{Q^{i^*}}).
\] 
     Here $\approx$ means the quantities have a difference at most $O(\log(n))$.
In order to argue the entropy difference between $\tau_1$ and $\tau_0$, we want $
 S(X|Q^{i^*}HH^l(X))_\tau\approx 0,$
which requires $l\approx S(X|Q^{i^*})_\tau$. However, due to the properties of the extractor, we can only extract up to $l= S_2(X|Q^{i^*})_\tau$. Therefore, we want $l\approx S(X|Q^{i^*})_\tau\approx S_2(X|Q^{i^*})_\tau$. From the properties of the quantum hardcore function, we can argue that \[
Q^{i^*}HH^l(X)Rg(X,R)_{\tau}\approx_{\negl_C}Q^{i^*}HH^l(X)_{\tau}\otimes U_{|R|} \otimes U_{|g(X,R)|},
    \] 
    if we can ensure that $XQ^{i^*}HH^l(X)_{\tau}$ is a one-way state (see Definition \ref{def:OWSG}).  For this to hold, we require \[l\approx S_2(X|Q^{i^*+1})_{\tau}\approx S_2(X|Q^{i^*})_{\tau}.\] 
    We identify a substate $\gamma$ of ${\tau}$ (see Definition \ref{def:substate}) with weight at least $\frac{1}{\poly(n)}$ for which 
\begin{align*}
         l_{i^*}\approx S_2(X|Q^{i^*+1})_{\gamma}\approx S_2(X|Q^{i^*})_{\gamma}\approx S(X|Q^{i^*})_{\gamma}.\end{align*}
    We then take a convex mixture of $\tau_0(i,l)$ over $i$ and $l$ to remove the non-uniformity in the construction due to the choice of $i^*$ and $l_{i^*}$. 
    This enables us to create a pair of states $\rho_0$ and $\rho_1$ such that they are computationally indistinguishable and have $1/\poly(n)$ difference in their von Neumann entropies. At this point, $\rho_0$ is uniformly generated, however, $\rho_1$ is still non-uniformly generated. We then take multiple copies of them to amplify their entropy difference and apply a quantum extractor on them to get an imbalanced $\EFI$ pair with one of the outputs being the maximally-mixed state. We then use Fact~\ref{fact:imbalanced_EFI_to_EFI-1}~\cite{KT} that gives an $\EFI$ from an imbalanced $\EFI$, thus completing our proof.

Note that our construction, when applied to the classical case, provides an alternate to the construction provided by~\cite{HILL} to obtain a classical {mildly non-uniform} $\PRG$ from any classical $\OWF$. Since we do not argue conditioned on the output $f(x)$, our construction and analysis is arguably simpler and may be of independent interest. For converting a mildly non-uniform $\PRG$ to a uniform $\PRG$, we can use the same construction as~\cite{HILL}. 
{
 \subsection*{Further work}
We mention some open questions that are related to our work.
\begin{itemize}
 \item Can we get pure-state $\OWSG$s from mixed state $\OWSG$s? In other words, are pure-state $\OWSG$s and mixed-state $\OWSG$s equivalent? Or is there a separation between them?
  \item Construction of (expanding) 1-$\PRSG$ from $\OWSG$ is also an interesting open problem.
 
\end{itemize}
}

\subsection*{Organization}
In Section~\ref{sec:prelim} we introduce some information-theoretic and computation-theoretic preliminaries, definitions, and facts that we will need later for our proofs. In Section~\ref{sec:OWSGtoEFI}, we 
present our construction of an imbalanced $\EFI$ from an $\sOWSG$ and show the reverse direction in Section~\ref{sec:EFItoOWSG}.

\section{Preliminaries} \label{sec:prelim}

In this section, we present some notation, definitions and facts that we will need later for our proofs. For some of the facts, we provide proofs for completeness (although the proofs for these may already exist in the literature). 
\subsection*{Notation}
\begin{itemize}
\item  $[n]$ represents the set $\{1,2,\dots n\}$.
    \item For any $i \in \mathbb{N}$, $Q^i$ represents $Q_{1} Q_2 \dots Q_i$.
    \item $\mathbb{1}(\cdot)$ represents the indicator function.
     \item For $t \in \mathbb{N}$, $U_t$ represents the uniform distribution on $\{0,1\}^t$.
     \item For a register $A$, we let $|A|$ represent the number of qubits in $A$. Similarly, for a binary string $s$, we let $|s|$ represent the number of bits in $s$.
\item $\lambda \in \mathbb{N}$ represents the security parameter and is provided in unary ($1^\lambda$) to algorithms.
 \item By a $\QPT$ algorithm we mean a quantum polynomial-time algorithm. By a non-uniform $\QPT$ algorithm (or adversary), we mean a $\QPT$ algorithm (or adversary) with a polynomial-size advice binary string (in other words a quantum polynomial-size circuit).
    \item If a set of registers (say $A,B,C$) are from the state $\rho$, we represent them as\[
    ABC_\rho =\rho^{ABC}.
    \] 
    \item For a cq-state (with $X$ classical)
    \[\rho^{XQ}  = \sum_{x} \Pr(X=x) \cdot \ketbra{x}{x} \otimes \rho^Q_x,\] 
    and a function $f(x)$, we define 
    \[\rho^{XFQ} \defeq \sum_{x} \Pr(X=x) \cdot \ketbra{x}{x} \otimes \ketbra{f(x)}{f(x)} \otimes \rho^Q_x.\]
    \item $\log(x)$ denotes the binary logarithm of $x$ and $\exp(x) = 2^x$.
    \item $\bern(p)$ denotes the Bernoulli distribution ($p \in [0,1]$).
    \item{Whenever we have a convex combination of states with a different number of qubits, to make the number of qubits equal, we append an appropriate number of $\ket{0}$'s at the end.}
\end{itemize}
\begin{fact}\label{fact:math1}
   For $0\leq x\leq 0.5$, \[\log(1-x) \geq -2x. \]
\end{fact}

\subsubsection*{Information-theoretic preliminaries}
\begin{definition}[$\ell_1$ distance] For an operator $A$, the $\ell_1$ norm is defined as $\|A\|_1 \defeq \Tr \sqrt{A^\dagger A}$. For operators $A,B$, their $\ell_1$ distance is defined as $\onenorm{A}{B}$. We use shorthand $A\approx_{\eps}B$ to denote $\onenorm{A}{B}\leq \eps$.
    
\end{definition}

\begin{definition}[Fidelity]
    For (quantum) states $\rho,\sigma$, \[
    F(\rho,\sigma)\defeq \|\sqrt{\rho}\sqrt{\sigma}\|_1.
    \]
\end{definition}
\begin{definition}[Bures metric]
    For states $\rho,\sigma$, \[
    \Delta_B(\rho,\sigma)\defeq \sqrt{1-F(\rho,\sigma)}.
    \]
   
\end{definition}

\begin{fact}[\cite{fuchs}]\label{fact:fuchs}
    For states $\rho, \sigma$,
\[ \Delta_B^2(\rho, \sigma) \leq \frac{\onenorm{\rho}{ \sigma}}{2} \leq
\sqrt{2}\Delta_B(\rho, \sigma).\]
\end{fact}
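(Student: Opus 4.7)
The statement is the Fuchs--van de Graaf inequality relating trace distance to fidelity. The plan is to prove the two inequalities separately by standard reductions.

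For the upper bound $\frac{\onenorm{\rho}{\sigma}}{2} \leq \sqrt{2}\,\Delta_B(\rho,\sigma)$, I would begin with Uhlmann's theorem to obtain purifications $\ket{\psi},\ket{\phi}$ of $\rho,\sigma$ satisfying $\inprod{\psi}{\phi} = F(\rho,\sigma)$ (with phases chosen so the inner product is real and nonnegative). Working in the two-dimensional span of $\ket{\psi}$ and $\ket{\phi}$ and directly diagonalizing $\state{\psi}-\state{\phi}$, one finds $\|\state{\psi}-\state{\phi}\|_1 = 2\sqrt{1-F(\rho,\sigma)^2}$. Applying monotonicity of the trace norm under partial trace then gives
\[
\onenorm{\rho}{\sigma} \;\leq\; 2\sqrt{1-F^2} \;=\; 2\sqrt{(1-F)(1+F)} \;\leq\; 2\sqrt{2(1-F)} \;=\; 2\sqrt{2}\,\Delta_B(\rho,\sigma).
\]

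For the lower bound $\Delta_B^2(\rho,\sigma) \leq \frac{\onenorm{\rho}{\sigma}}{2}$, I would invoke the measurement characterizations $F(\rho,\sigma) = \min_{\{M_i\}} \sum_i \sqrt{p_i q_i}$ (Fuchs--Caves) and $\onenorm{\rho}{\sigma} = \max_{\{M_i\}} \sum_i |p_i - q_i|$ (Helstrom), where $\{M_i\}$ ranges over POVMs with induced outcome probabilities $p_i \defeq \Tr(M_i\rho)$, $q_i \defeq \Tr(M_i\sigma)$. Fixing the fidelity-optimal POVM and using the identity $\sum_i(\sqrt{p_i}-\sqrt{q_i})^2 = 2-2\sum_i\sqrt{p_i q_i}$, I would compute
\[
1 - F(\rho,\sigma) \;=\; \tfrac{1}{2}\sum_i (\sqrt{p_i}-\sqrt{q_i})^2 \;\leq\; \tfrac{1}{2}\sum_i |\sqrt{p_i}-\sqrt{q_i}|(\sqrt{p_i}+\sqrt{q_i}) \;=\; \tfrac{1}{2}\sum_i |p_i - q_i| \;\leq\; \tfrac{1}{2}\onenorm{\rho}{\sigma},
\]
where the first inequality uses $|\sqrt{p_i}-\sqrt{q_i}| \leq \sqrt{p_i}+\sqrt{q_i}$ and the last step uses Helstrom on the chosen POVM (which is only a lower bound on the trace distance). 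Since $\Delta_B^2 = 1-F$, we are done.

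The main obstacle is that the two variational characterizations I lean on are themselves nontrivial: Fuchs--Caves requires a purification-plus-classical-fidelity argument, and Helstrom's formula needs a Jordan decomposition of $\rho-\sigma$ together with the fact that the optimum is attained on the two-outcome projective measurement onto the positive/negative parts. I would treat both as standard and cite them rather than reproducing them, after which everything else reduces to elementary algebra and monotonicity of the trace norm under partial trace.
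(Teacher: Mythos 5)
Your proof is correct and is the classical Fuchs--van de Graaf argument. The paper itself does not prove this fact — it is stated as a cited result from~\cite{fuchs} — so there is no internal proof to compare against. Your derivation of both inequalities is the standard one: Uhlmann plus the explicit pure-state trace-distance formula and monotonicity of the trace norm under partial trace for the upper bound (together with the elementary estimate $1+F\leq 2$ to convert $\sqrt{1-F^2}$ into $\sqrt{2}\,\Delta_B$), and the Fuchs--Caves measurement characterization of fidelity together with the Hellinger-type algebraic identity and the Helstrom bound for the lower bound. The logic is sound and complete modulo the two standard variational formulas you explicitly choose to cite, which is reasonable for a fact of this type.
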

\begin{definition}For a  state $\rho$, its von Neumann entropy is defined as \[
S(\rho)\defeq -\Tr(\rho\log\rho).
\]
\end{definition}
\begin{fact}[Chain-rule for entropy] \label{fact:chainentropy} For a state $\rho_{AB}$, 
    \[S(AB)_\rho = S(A)_\rho + S(B|A)_\rho.\]
\end{fact}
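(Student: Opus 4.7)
The plan is to derive this identity directly from the definition of quantum conditional von Neumann entropy. Only $S(\rho) \defeq -\Tr(\rho \log \rho)$ has been explicitly introduced so far in the excerpt; the symbol $S(B|A)_\rho$ on the right-hand side is fixed by the conventions of Section~\ref{sec:prelim}. Under the standard quantum-information convention one simply \emph{defines} $S(B|A)_\rho \defeq S(AB)_\rho - S(A)_\rho$, in which case the chain rule holds by definition and nothing remains to prove. I would begin the proof with one sentence spelling out this convention.

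Should the paper instead adopt the relative-entropy formulation
\[S(B|A)_\rho \defeq -\Tr\!\left[\rho_{AB}\bigl(\log \rho_{AB} - \log(\Id_B \otimes \rho_A)\bigr)\right],\]
the identity follows from a short algebraic manipulation. Expanding the right-hand side yields
\[S(B|A)_\rho = S(AB)_\rho + \Tr\!\left[\rho_{AB}(\Id_B \otimes \log \rho_A)\right],\]
and the partial-trace compatibility $\Tr[\rho_{AB}(\Id_B \otimes M)] = \Tr[\rho_A M]$, applied with $M = \log \rho_A$, reduces the second term to $\Tr[\rho_A \log \rho_A] = -S(A)_\rho$. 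Rearranging gives the chain rule.

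In either reading there is no genuine obstacle; the identity is either tautological or a one-line calculation. The only minor point to verify is that $\log \rho_A$ is handled with the convention $0 \log 0 \defeq 0$ on the kernel of $\rho_A$, and that $\supp(\rho_{AB}) \subseteq \supp(\Id_B \otimes \rho_A)$ so that the relative-entropy expression is finite; this holds because any eigenvector of $\rho_{AB}$ with nonzero eigenvalue has nonzero reduced density on $A$. This is entirely standard, so the main ``work'' in writing the proof is just deciding which of the two definitions the paper is committing to, and otherwise pointing to a textbook reference.
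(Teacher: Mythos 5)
The paper states this as an unproved standard fact (no proof block follows it), so there is no paper argument to compare against line by line. Your derivation is correct on both readings. The second reading — conditional von Neumann entropy as $-D(\rho_{AB}\,\|\,\Id_B\otimes\rho_A)$ — is the one most consistent with this paper's framework, since the paper defines conditional R\'enyi entropy via the sandwiched divergence $S_\alpha(A|B)_\rho = -\min_\sigma D_\alpha(\rho_{AB}\|\Id_A\otimes\sigma_B)$, which at $\alpha\to 1$ reduces to the relative-entropy formulation (with the minimizer being $\sigma_B=\rho_B$). Your algebraic expansion and the partial-trace identity are exactly right; and the support condition $\supp(\rho_{AB})\subseteq\supp(\Id_B\otimes\rho_A)$ you flag is indeed needed for finiteness and is immediate from the operator inequality $\rho_{AB}\le 2^{|B|}\,\rho_A\otimes\Id_B$ (the paper's Fact~\ref{fact:identity_upper}). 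Nothing further is required.
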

\begin{fact}\label{fact:entropy_inequalities} For  a cq-state  $\rho^{XQ}$ (with $X$ classical), 
\[S(Q|X)_\rho \leq  S(Q)_\rho\leq S(XQ)_\rho \leq S(Q)_\rho + |X|.\] 
\end{fact}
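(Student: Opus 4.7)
The plan is to establish the three inequalities one at a time, each via a standard tool: the chain rule (Fact~\ref{fact:chainentropy}), concavity of von Neumann entropy, and the non-negativity of classical conditional entropy. The order I would naturally follow is leftmost, middle, rightmost, since the rightmost inequality reuses the leftmost.

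For the leftmost inequality $S(Q|X)_\rho \leq S(Q)_\rho$, I would unpack the cq-structure. Writing $\rho^Q = \sum_x p_x \rho_x^Q$, the definition of conditional entropy against a classical register gives $S(Q|X)_\rho = \sum_x p_x S(\rho_x^Q)$. Concavity of $S(\cdot)$ then yields $S\!\left(\sum_x p_x \rho_x^Q\right) \geq \sum_x p_x S(\rho_x^Q)$, which is precisely the bound claimed.

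For the middle inequality $S(Q)_\rho \leq S(XQ)_\rho$, I would apply the chain rule (Fact~\ref{fact:chainentropy}) to rewrite the target as $S(X|Q)_\rho \geq 0$. Because $X$ is classical in $\rho^{XQ}$, measuring the $X$ register in its computational basis leaves the joint state invariant, so $S(X|Q)_\rho$ equals the classical Shannon conditional entropy $H(X|Q)$ of the induced cq-state, which is manifestly non-negative. For the rightmost inequality $S(XQ)_\rho \leq S(Q)_\rho + |X|$, I would again apply the chain rule to write $S(XQ)_\rho = S(X)_\rho + S(Q|X)_\rho$, bound $S(X)_\rho \leq |X|$ (since $X$ lives on $|X|$ qubits and $S$ is maximized by the maximally mixed state), and plug in the already-proved $S(Q|X)_\rho \leq S(Q)_\rho$.

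I do not anticipate a real obstacle here, since all three bounds are standard. The only point worth highlighting is that the middle inequality genuinely uses classicality of $X$: for arbitrary bipartite quantum states, $S(X|Q)$ can be negative due to entanglement, and the step "$S(Q)_\rho \leq S(XQ)_\rho$" would fail without the cq-assumption.
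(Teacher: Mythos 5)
Your proof is correct, and it arrives at the same three inequalities via a mildly different set of tools than the paper uses. For the leftmost bound, the paper cites its R\'enyi inequality (Fact~\ref{fact:alpha_renyi_inequalities}), whereas you derive it from concavity of von Neumann entropy together with the identity $S(Q|X)_\rho=\sum_x p_x S(\rho_x^Q)$; both are standard, and yours is arguably more self-contained. For the middle bound, both you and the paper apply the chain rule and reduce to $S(X|Q)_\rho\geq 0$ for classical $X$; the paper invokes its Fact~\ref{fact:renyi_classical} (proved via the operator inequality $\rho_{XQ}\leq \Id_X\otimes\rho_Q$ and $D_\infty$), while your justification via ``the classical Shannon conditional entropy $H(X|Q)$'' is a little loose --- $Q$ is quantum, so there is no classical $H(X|Q)$ in the usual sense; the clean statement is simply that for cq-states with $X$ classical, $S(X|Q)\geq 0$, for which the paper's operator-inequality argument (or the decomposition $S(XQ)=H(p)+\sum_x p_x S(\rho_x^Q)\geq S(\sum_x p_x\rho_x^Q)=S(Q)$) gives a precise proof. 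For the rightmost bound you use the opposite factorization of the chain rule ($S(XQ)=S(X)+S(Q|X)$, bounding $S(X)\leq|X|$ and reusing the leftmost inequality), whereas the paper uses $S(XQ)=S(Q)+S(X|Q)$ and bounds $S(X|Q)\leq|X|$ directly; either works and the difference is cosmetic. In short, the proposal is sound, the only thing to tighten is the justification of $S(X|Q)\geq 0$.
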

\begin{proof}
    The first inequality follows from Fact~\ref{fact:alpha_renyi_inequalities}. For the second inequality consider,
\begin{align*}
       S(XQ)_\rho&=S(Q)_\rho+S(X|Q)_\rho &\mbox{(Fact~\ref{fact:chainentropy})}
       \\&\geq S(Q)_\rho.  &\mbox{(Fact \ref{fact:renyi_classical})}
    \end{align*}
The last inequality follows from Fact~\ref{fact:chainentropy} and Fact~\ref{fact:alpha_renyi_inequalities}.
\end{proof}
\begin{definition}[R\'enyi entropy]\label{def:renyi}  For a  state $\rho$, the R\'enyi entropy of order $\alpha \in (0, 1) \cup (1,\infty)$ is given as
\[S_\alpha(\rho) \defeq \frac{1}{1-\alpha}
\log \Tr(\rho^\alpha).\]
$S_0,S_1$ and $S_\infty$ are defined as limits of $S_\alpha$ for $\alpha \rightarrow \{0,1,\infty\}$. In particular, $\lim_{\alpha\rightarrow 1} S_\alpha(\rho)=S(\rho)$.  

If $\{p_i\}_i$ are the eigenvalues of $\rho$, then the  R\'enyi entropy reduces to 
\begin{equation}
    \label{eq:renti_definition}
S_\alpha(\rho)= \frac{1}{1-\alpha}\log\left(\sum_ip_i^\alpha\right).\end{equation}

    \end{definition}

    \begin{definition}[Conditional R\'enyi entropy]\label{def:renyi_conditional}  For  states $\rho$ and $\sigma$, the "sandwiched" $\alpha$-R\'enyi divergence for $\alpha \in (0, 1) \cup (1,\infty)$ is given as
    \[\mathrm{D}_\alpha(\rho\|\sigma)\defeq\frac{1}{\alpha-1}\log\left(\Tr\left[(\sigma^{\frac{1-\alpha}{2\alpha}}\rho\sigma^{\frac{1-\alpha}{2\alpha}})^\alpha\right]\right).
    \]
The corresponding $\alpha$-conditional R\'enyi entropy is given by\[
        S_\alpha(A|B)_\rho \defeq -\min_{\sigma^B} \mathrm{D}_\alpha(\rho^{AB}\|\Id^A\otimes\sigma^B).
        \]
$\mathrm{D}_0,\mathrm{D}_1$ and $\mathrm{D}_\infty$ are defined as limits of $\mathrm{D}_\alpha$ for $\alpha \rightarrow \{0,1,\infty\}$. 
    \end{definition}
    \begin{fact}[\cite{marco_book}]\label{fact:H_infinity}
   The sandwiched $\alpha$-R\'enyi divergence as $\alpha\rightarrow\infty$ becomes  \[\mathrm{D}_\infty(\rho\|\sigma)= \min\{\lambda : \rho
   \leq 2^\lambda\sigma\}.\]  
    \end{fact}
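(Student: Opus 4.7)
The plan is to work directly from the definition and take the limit $\alpha \to \infty$ carefully. First I would rewrite
\[D_\alpha(\rho\|\sigma) = \frac{\alpha}{\alpha-1}\cdot \frac{1}{\alpha}\log \Tr\!\left[M_\alpha^\alpha\right], \quad M_\alpha \defeq \sigma^{(1-\alpha)/(2\alpha)}\rho\,\sigma^{(1-\alpha)/(2\alpha)}.\]
The leading prefactor $\alpha/(\alpha-1)$ tends to $1$. The exponent $(1-\alpha)/(2\alpha)$ tends to $-1/2$, so (interpreting $\sigma^{-1/2}$ via the pseudoinverse on $\supp(\sigma)$) the operator $M_\alpha$ converges to $\sigma^{-1/2}\rho\,\sigma^{-1/2}$. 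For any positive semidefinite operator $N$ with eigenvalues $\{\mu_i\}$, the quantity $\Tr[N^\alpha]^{1/\alpha} = (\sum_i \mu_i^\alpha)^{1/\alpha}$ is the Schatten $\alpha$-norm of $N$, which converges to $\max_i \mu_i = \|N\|_\infty$ as $\alpha \to \infty$. Combining these limits yields
\[D_\infty(\rho\|\sigma) = \log \|\sigma^{-1/2}\rho\,\sigma^{-1/2}\|_\infty.\]

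Next I would translate this operator-norm expression into the claimed form. Since $\sigma^{-1/2}\rho\,\sigma^{-1/2}$ is positive semidefinite, $\|\sigma^{-1/2}\rho\,\sigma^{-1/2}\|_\infty$ is exactly the smallest $c \geq 0$ for which $\sigma^{-1/2}\rho\,\sigma^{-1/2} \leq c\,\Id$. Conjugating both sides by $\sigma^{1/2}$ (a positive operation that preserves operator inequalities on $\supp(\sigma)$) gives the equivalent inequality $\rho \leq c\,\sigma$, provided $\supp(\rho) \subseteq \supp(\sigma)$; if this containment fails, no finite $c$ suffices and $D_\infty = +\infty$, consistent with the right-hand side being $\min \emptyset = +\infty$. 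Setting $c = 2^\lambda$ and taking logarithms gives exactly
\[D_\infty(\rho\|\sigma) = \min\{\lambda : \rho \leq 2^\lambda\sigma\}.\]

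The main delicate point I would expect to dwell on is handling rank-deficient $\sigma$ throughout the limit argument: the noncommutative powers $\sigma^{(1-\alpha)/(2\alpha)}$ blow up on $\ker(\sigma)$ for $\alpha > 1$, so a principled route is to first prove the fact under the assumption that $\sigma > 0$ (where all expressions are continuous in $\alpha$), and then extend to arbitrary $\sigma$ by replacing it with $\sigma + \epsilon\,\Id$ and taking $\epsilon \to 0^+$, using lower semicontinuity to match both sides. The remaining interchange of the $\alpha \to \infty$ limit with the functional calculus and the trace is standard in finite dimensions and follows from continuity of $N \mapsto N^\alpha$ on positive operators and the convergence of Schatten $\alpha$-norms to the operator norm.
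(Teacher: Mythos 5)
The paper does not prove this statement; it is imported directly from Tomamichel's book~\cite{marco_book} as a black-box fact, so there is no in-paper argument to compare against. Your derivation is nonetheless a correct and essentially standard proof of the identity $D_\infty = D_{\max}$: the decomposition of the prefactor as $\frac{\alpha}{\alpha-1}\cdot\frac{1}{\alpha}$, the identification of $\frac{1}{\alpha}\log\Tr[M_\alpha^\alpha]$ with the logarithm of the Schatten $\alpha$-norm, the convergence $M_\alpha\to\sigma^{-1/2}\rho\,\sigma^{-1/2}$, and the translation of $\log\|\sigma^{-1/2}\rho\,\sigma^{-1/2}\|_\infty$ into $\min\{\lambda:\rho\leq 2^\lambda\sigma\}$ by conjugation on $\supp(\sigma)$ are all sound, and your handling of the rank-deficient case via $\sigma+\epsilon\Id$ and the $\supp(\rho)\subseteq\supp(\sigma)$ dichotomy is the right way to avoid trouble with the negative powers of $\sigma$. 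The one step worth spelling out rather than waving at is the double limit $\|M_\alpha\|_\alpha\to\|M_\infty\|_\infty$, since both the operator and the norm index move simultaneously; in finite dimension $d$ this is clean because $\|N\|_\infty\leq\|N\|_\alpha\leq d^{1/\alpha}\|N\|_\infty$ uniformly, so $\bigl|\|M_\alpha\|_\alpha-\|M_\alpha\|_\infty\bigr|\leq(d^{1/\alpha}-1)\|M_\alpha\|_\infty\to 0$ while $\|M_\alpha\|_\infty\to\|M_\infty\|_\infty$ by norm continuity, and the two error terms telescope. With that made explicit the argument is complete.
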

  { \begin{fact}[\cite{renyi_0,renyi_0_new}]\label{fact:D_0}
   The sandwiched $\alpha$ -R'enyi divergence satisfies \[\mathrm{D}_1(\rho\|\sigma)\geq -\log\Tr\left(\Pi_{\support(\rho)} \sigma\right) .\] \end{fact}}
 {\begin{definition}[Smooth R\'enyi entropy, \cite{renner}]  
 Let $\rho$ be a state. Let us denote the $\eps$-ball around $\rho$ as:
 \[
 \mathcal{B}^\eps(\rho)\defeq \{\widetilde{\rho} : \widetilde{\rho} \geq 0, \Delta_B(\widetilde{\rho},\rho)\leq \eps, \Tr(\widetilde{\rho})\leq 1\}.
 \]For any $\eps \geq 0$, the $\eps$-smooth min and max-entropies 
 are defined as
\[
S_\infty^\eps(A|B)_\rho\defeq\sup_{\rho^\prime\in \mathcal{B}^\eps(\rho)} S_\infty(A|B)_{\rho^\prime},
\]
\[
S_0^\eps(A|B)_\rho\defeq\inf_{\rho^\prime\in \mathcal{B}^\eps(\rho)} S_0(A|B)_{\rho^\prime}.
\]
 
 \end{definition}}
    
\begin{fact}[Data processing] \label{fact:data} Let $\rho, \sigma$ be states and $\Phi$ be a CPTP map. Then, 
  \[\Delta_B(\Phi(\rho), \Phi(\sigma)) \leq  \Delta_B(\rho, \sigma).\]
  The above also holds for the $\ell_1$ distance. For all $\alpha \geq \frac{1}{2}$~\cite{marco_book},
  \[\mathrm{D}_\alpha(\Phi(\rho)\|\Phi(\sigma)) \leq \mathrm{D}_\alpha(\rho\|\sigma).\]
\end{fact}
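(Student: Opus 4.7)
The plan is to derive all three contractivity statements from a single template. By the Stinespring dilation, any CPTP map $\Phi$ on a system $A$ can be written as $\Phi(\cdot) = \Tr_E\bigl(V(\cdot)V^\dagger\bigr)$ for some isometry $V:\mathcal{H}_A \to \mathcal{H}_A \otimes \mathcal{H}_E$. It therefore suffices to verify, for each of $\Delta_B$, $\onenorm{\cdot}{\cdot}$, and $D_\alpha$ with $\alpha \geq 1/2$, that the quantity is (i) invariant under isometries and (ii) non-increasing under partial trace. Once both are in hand, any CPTP map is handled by composition, since the decomposition $\Phi = \Tr_E \circ (V\cdot V^\dagger)$ gives invariance followed by a partial trace.

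For the Bures metric, isometric invariance is immediate from $F(V\rho V^\dagger, V\sigma V^\dagger) = \bigl\|V\sqrt{\rho}\sqrt{\sigma}V^\dagger\bigr\|_1 = \bigl\|\sqrt{\rho}\sqrt{\sigma}\bigr\|_1$, and monotonicity under partial trace follows from Uhlmann's theorem: any joint purification of $\rho_{AE}$ and $\sigma_{AE}$ is a fortiori a joint purification of $\rho_A$ and $\sigma_A$ in a larger ancilla space, so the variational expression for $F$ on $A$ is a supremum over a strictly larger set, giving $F(\rho_A, \sigma_A) \geq F(\rho_{AE}, \sigma_{AE})$ and hence the Bures contraction. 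For the $\ell_1$ distance I would use duality: $\onenorm{\rho}{\sigma} = \sup_{H^\dagger = H,\; \|H\|_\infty \leq 1} \Tr\bigl(H(\rho-\sigma)\bigr)$. The Hilbert--Schmidt adjoint $\Phi^\dagger$ of a CPTP map is completely positive and unital, hence contractive in operator norm, so
\[\Tr\bigl(H(\Phi(\rho)-\Phi(\sigma))\bigr) = \Tr\bigl(\Phi^\dagger(H)(\rho-\sigma)\bigr) \leq \|\Phi^\dagger(H)\|_\infty \cdot \onenorm{\rho}{\sigma} \leq \onenorm{\rho}{\sigma},\]
and taking the supremum over $H$ on the left yields the stated $\ell_1$ contractivity.

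The sandwiched R\'enyi divergence is the main obstacle. The heart of the matter is joint convexity (for $\alpha > 1$) or joint concavity (for $\alpha \in [1/2, 1)$) of the trace functional $(\rho,\sigma) \mapsto \Tr\bigl[(\sigma^{(1-\alpha)/(2\alpha)}\rho\sigma^{(1-\alpha)/(2\alpha)})^\alpha\bigr]$, a result due to Frank--Lieb (with an independent variational proof by Beigi) that generalises Lieb's concavity theorem. Given this, monotonicity of $D_\alpha$ under the partial trace $\Tr_E$ follows by representing $\Tr_E$ as a uniform unitary twirl on $E$ and applying joint convexity/concavity; combined with invariance under isometries (which is immediate from the definition of $D_\alpha$ via unitary rotation of the sandwich) and the Stinespring decomposition, this yields the stated data-processing inequality for all $\alpha \geq 1/2$. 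Since this operator-inequality input is standard and cited from~\cite{marco_book}, the honest plan in a write-up is to invoke it rather than reprove it; the range $\alpha \in [1/2, \infty)$ is exactly the range in which such a unified proof is known, and extending below $1/2$ would genuinely fail in general.
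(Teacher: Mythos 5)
The paper states this as a cited fact (Fact~\ref{fact:data}) and gives no proof, so there is no in-paper argument to compare against; I am evaluating your outline on its own terms. Your plan is correct. The Stinespring decomposition into an isometric conjugation followed by a partial trace is the standard reduction, and each of the three pieces is handled appropriately: isometric invariance of $F$ via $\sqrt{V\rho V^\dagger}=V\sqrt{\rho}V^\dagger$ and $\|VAV^\dagger\|_1=\|A\|_1$, monotonicity of $F$ under partial trace by enlarging the purifying system in Uhlmann's variational formula, the $\ell_1$ duality argument using the fact that the Hilbert--Schmidt adjoint of a CPTP map is positive and unital and therefore contractive in operator norm on Hermitians, and the joint convexity (for $\alpha>1$) / joint concavity (for $\alpha\in[\tfrac{1}{2},1)$) of $Q_\alpha(\rho,\sigma)=\Tr\bigl[(\sigma^{\frac{1-\alpha}{2\alpha}}\rho\,\sigma^{\frac{1-\alpha}{2\alpha}})^\alpha\bigr]$ from Frank--Lieb (or Beigi) combined with a unitary-twirl representation of the partial trace. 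The only step you elide in the Rényi branch is that the twirl produces $\Tr_E(\cdot)\otimes\Id_E/d_E$ rather than $\Tr_E(\cdot)$, so one must also invoke additivity of $D_\alpha$ under tensoring with a common state (the paper's Fact~\ref{fact:additivity}) to strip the trivial $E$ factor; this is routine but should be said. With that minor addition, the write-up would be a complete proof, consistent with the cited reference's treatment.
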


\begin{fact}[\cite{khatri2024principles}] \label{fact:additivity} For all $\alpha \in [\frac{1}{2}, 1) \cup (1,\infty)$ and states $\rho_1,\rho_2,\sigma$,
\[
\mathrm{D}_\alpha(\rho_1\otimes \sigma\|\rho_2\otimes \sigma)=
\mathrm{D}_\alpha(\rho_1\|\rho_2).
\]
\end{fact}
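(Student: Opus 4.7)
The plan is to unpack the defining expression of the sandwiched $\alpha$-R\'enyi divergence and exploit the multiplicativity of the functional calculus on tensor products. Two elementary facts suffice: $(A\otimes B)^t = A^t \otimes B^t$ for commuting positive operators and any real $t$, and $(A\otimes C)(B\otimes D) = (AB)\otimes(CD)$. Combined, these will let the $\sigma$-factors collapse into a single $\Tr[\sigma] = 1$.

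First, I would rewrite $(\rho_2\otimes\sigma)^{\frac{1-\alpha}{2\alpha}} = \rho_2^{\frac{1-\alpha}{2\alpha}}\otimes \sigma^{\frac{1-\alpha}{2\alpha}}$ and substitute into the defining expression. Using that $\sigma^{\frac{1-\alpha}{2\alpha}}\,\sigma\,\sigma^{\frac{1-\alpha}{2\alpha}} = \sigma^{1/\alpha}$, the sandwiched operator factorises as
\[
(\rho_2\otimes\sigma)^{\frac{1-\alpha}{2\alpha}}(\rho_1\otimes\sigma)(\rho_2\otimes\sigma)^{\frac{1-\alpha}{2\alpha}} \;=\; M \otimes \sigma^{1/\alpha}, \qquad M \defeq \rho_2^{\frac{1-\alpha}{2\alpha}}\rho_1\rho_2^{\frac{1-\alpha}{2\alpha}}.
\]
Raising to the $\alpha$-th power again distributes across the tensor product, giving $M^\alpha \otimes \sigma$. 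The trace then factorises as $\Tr[M^\alpha]\cdot \Tr[\sigma] = \Tr[M^\alpha]$ since $\sigma$ is a state, and applying $\tfrac{1}{\alpha-1}\log$ to both sides reproduces $D_\alpha(\rho_1\|\rho_2)$ exactly. All manipulations are valid for every $\alpha \in [\tfrac12,1)\cup(1,\infty)$ because the steps use only the spectral calculus of commuting positive operators.

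I expect no genuine obstacle here; the only piece of bookkeeping is the treatment of support conditions in the case $\alpha>1$, where $D_\alpha(\rho\|\sigma)$ is $+\infty$ unless $\supp(\rho)\subseteq \supp(\sigma)$, and the negative power $\rho_2^{\frac{1-\alpha}{2\alpha}}$ must be interpreted on the support of $\rho_2$. Since $\supp(\rho_i\otimes \sigma) = \supp(\rho_i)\otimes \supp(\sigma)$, the containment $\supp(\rho_1\otimes \sigma)\subseteq \supp(\rho_2\otimes \sigma)$ is equivalent to $\supp(\rho_1)\subseteq \supp(\rho_2)$, so both sides of the claimed identity are simultaneously $+\infty$ or simultaneously finite; in the finite case the calculation above applies verbatim.
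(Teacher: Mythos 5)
Your proof is correct. The paper itself states this fact only with a citation to~\cite{khatri2024principles} and provides no proof, so there is nothing internal to compare against; the direct unpacking of the sandwiched divergence via $(A\otimes B)^t = A^t\otimes B^t$, the collapse $\sigma^{\frac{1-\alpha}{2\alpha}}\sigma\,\sigma^{\frac{1-\alpha}{2\alpha}} = \sigma^{1/\alpha}$, and the factorization of the trace is exactly the standard textbook computation, and your handling of the support condition for $\alpha>1$ (via $\supp(\rho_i\otimes\sigma)=\supp(\rho_i)\otimes\supp(\sigma)$) correctly covers the $+\infty$ case.
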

\begin{fact}[Non-negativity, \cite{marco_book}] \label{fact:non_negative} For all states $\rho$ and $\sigma$, \[
 \mathrm{D}(\rho\|\sigma)\geq 0.
\]\vspace{-6mm}
\begin{fact}[Joint-convexity of  relative entropy]\label{fact:joint_convexity} Let $\rho_0, \rho_1, \sigma_0, \sigma_1 $ be states and $\lambda\in[0,1]$. Then \[
\diver{\lambda\rho_0+(1-\lambda)\rho_1}{\lambda\sigma_0+(1-\lambda)\sigma_1} \leq \lambda\diver{\rho_0}{\sigma_0}+(1-\lambda)\diver{\rho_1}{\sigma_1}.
\]
\end{fact}
\begin{fact}[Chain-rule for relative entropy]\label{fact:chain_relative} Let $\rho=\sum_x\mu_x\ketbra{x}{x}\otimes\rho_x$ and
$\rho^1=\sum_x\mu^1_x\ketbra{x}{x}\otimes\rho^1_x$.

\noindent Then,
    \[
    \diver{\rho^1}{\rho}=\diver{\mu^1}{\mu}+\E_{x\leftarrow\mu^1}\diver{\rho^1_x}{\rho_x}.
    \]
\end{fact}
    
\end{fact}
\begin{fact} [Monotonicity~\cite{marco_book}]\label{fact:monotonicity_renyi}
  $\alpha$-R\'enyi divergence is monotonically increasing  while $\alpha$-R\'enyi entropy (and  $\alpha$-conditional R\'enyi entropy) is monotonically decreasing with increasing $\alpha$.
\end{fact}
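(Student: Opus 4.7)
The plan is to split the claim into the unconditional and conditional parts, reduce the unconditional part to a classical inequality via spectral decomposition, and reduce the conditional part to monotonicity of the sandwiched $\alpha$-R\'enyi divergence in $\alpha$, which is a known fact from~\cite{marco_book}.

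For the unconditional entropy $S_\alpha(\rho)$, I would diagonalize $\rho$ and use the scalar expression from Definition~\ref{def:renyi}, reducing the claim to a purely classical one: for any probability distribution $\{p_i\}$ and any $0 < \alpha < \beta$ with $\alpha,\beta \neq 1$, we have $S_\alpha(\{p_i\}) \geq S_\beta(\{p_i\})$. I would establish this by applying Jensen's inequality to the function $x^{(\beta-1)/(\alpha-1)}$, splitting into cases depending on whether $\alpha$ and $\beta$ lie above or below $1$ (to track convexity versus concavity and the sign of $\alpha-1$), and invoking continuity at $\alpha = 1$ to pick up the limiting case $S_1 = S$.

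For the conditional statement, the definition $S_\alpha(A|B)_\rho = -\min_{\sigma^B} D_\alpha(\rho^{AB}\|\Id^A \otimes \sigma^B)$ reduces the claim to showing that $D_\alpha(\rho\|\sigma)$ is non-decreasing in $\alpha$ for each fixed pair $\rho, \sigma$. Pointwise monotonicity in $\alpha$ survives the minimization over $\sigma^B$ since the minimum of a family of non-decreasing functions of $\alpha$ is again non-decreasing, and flipping the sign then gives the required non-increasing behaviour of $S_\alpha(A|B)_\rho$ in $\alpha$. Monotonicity of $D_\alpha$ in $\alpha$ is established in~\cite{marco_book} via the operator Jensen inequality applied to the operator convex/concave maps $x^t$ acting on the sandwiched expression.

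The main obstacle is the quantum conditional case, where scalar Jensen is insufficient because the sandwiched divergence contains non-commuting fractional powers of $\rho$ and $\sigma$. Fortunately this is exactly the content that is already worked out in~\cite{marco_book}, so our task reduces to the observation, above, that pointwise monotonicity in $\alpha$ is inherited by the infimum over $\sigma^B$, which handles the conditional case cleanly once the unconditional divergence monotonicity is in hand.
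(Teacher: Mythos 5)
The paper states this fact as a citation to~\cite{marco_book} and gives no proof of its own, so there is no in-paper argument to compare against. Your sketch is correct and is essentially the standard one: the unconditional case reduces via spectral decomposition to classical R\'enyi entropy monotonicity (which you propose to prove by Jensen's inequality with $x\mapsto x^{(\beta-1)/(\alpha-1)}$ and the appropriate sign/case bookkeeping; a slightly cleaner route is to differentiate $S_\alpha$ in $\alpha$, substitute the tilted distribution $q_i\propto p_i^\alpha$, and recognize $-D(q\|p)/(1-\alpha)^2\le 0$, which avoids the case split entirely); and for the conditional case, your observation that the infimum over $\sigma^B$ of a family of functions each non-decreasing in $\alpha$ is itself non-decreasing in $\alpha$ correctly transfers the known monotonicity of $D_\alpha(\cdot\|\cdot)$ in $\alpha$ from~\cite{marco_book} to $-S_\alpha(A|B)_\rho$, giving the claimed non-increase of the conditional R\'enyi entropy. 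No gaps.
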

\begin{fact}\label{fact:conjugation}
Let $A\geq 0$ be a positive semi-definite matrix. Then for any matrix $C$ ({for which the product below is defined}), \[
 CAC^\dagger\geq 0.
\]
In particular, if $X\geq Y$, then \[
CXC^\dagger\geq CYC^\dagger.
\]
\end{fact}

\begin{fact}[\cite{marco_book}, eqs. (5.41) and (5.96)]\label{fact:alpha_renyi_inequalities}
For a cq-state $\rho^{ABC}$ (with $C$ classical) and all $\alpha\in[\frac{1}{2},\infty]$, 
\[S_\alpha(A|B)_\rho-|C|\leq S_\alpha(A|BC)_\rho\leq S_\alpha(A|B)_\rho \leq |A|.\]
Note: $C$ being classical is needed only for the first inequality. 
    
\end{fact}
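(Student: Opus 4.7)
The plan is to unpack $S_\alpha(A|B)_\rho=-\min_{\sigma^B} D_\alpha(\rho^{AB}\|\Id^A\otimes\sigma^B)$ and use the tools already collected in the paper: non-negativity of $D_\alpha$ between states, the elementary scaling identity $D_\alpha(\rho\|c\sigma)=D_\alpha(\rho\|\sigma)-\log c$ (which follows by inspection of the definition), and the data-processing inequality (valid for $\alpha\geq 1/2$). For the rightmost bound $S_\alpha(A|B)_\rho\leq|A|$, I would plug in $\sigma^B=\rho^B$ and write $\Id^A\otimes\rho^B=2^{|A|}\cdot(\Id^A/2^{|A|})\otimes\rho^B$; since $(\Id^A/2^{|A|})\otimes\rho^B$ is a state, non-negativity together with the scaling identity gives $D_\alpha(\rho^{AB}\|\Id^A\otimes\rho^B)\geq-|A|$, and minimizing over $\sigma^B$ yields the claim. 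No classicality of $C$ is needed here.

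For the middle inequality $S_\alpha(A|BC)_\rho\leq S_\alpha(A|B)_\rho$, let $\sigma^{BC}_*$ be optimal for the LHS. Applying data processing to the partial trace over $C$ gives $D_\alpha(\rho^{AB}\|\Id^A\otimes\Tr_C\sigma^{BC}_*)\leq D_\alpha(\rho^{ABC}\|\Id^A\otimes\sigma^{BC}_*)$. Since $\Tr_C\sigma^{BC}_*$ is a valid state on $B$, the LHS is at least $-S_\alpha(A|B)_\rho$, and rearranging closes the case. Again no classicality of $C$ is used, consistent with the ``Note'' in the statement.

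For the leftmost bound $S_\alpha(A|B)_\rho-|C|\leq S_\alpha(A|BC)_\rho$, I would build a feasible candidate $\sigma^{BC}:=\sigma^B_*\otimes(\Id^C/2^{|C|})$ from the optimizer $\sigma^B_*$ of the $(A|B)$ problem. The scalar $2^{-|C|}$ peels off via the scaling identity and accounts for exactly the $+|C|$ penalty, reducing the task to $D_\alpha(\rho^{ABC}\|\Id^A\otimes\sigma^B_*\otimes\Id^C)\leq D_\alpha(\rho^{AB}\|\Id^A\otimes\sigma^B_*)$. This is where classicality of $C$ enters: writing $\rho^{ABC}=\sum_c p_c\ketbra{c}{c}\otimes\rho^{AB}_c$ makes both arguments of the left divergence block-diagonal in the $C$-basis, so the sandwiched-divergence expression collapses; setting $Y_c:=p_c(\sigma^B_*)^{(1-\alpha)/(2\alpha)}\rho^{AB}_c(\sigma^B_*)^{(1-\alpha)/(2\alpha)}$, the bound reduces to comparing $\sum_c\Tr[Y_c^\alpha]$ with $\Tr[(\sum_c Y_c)^\alpha]$. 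McCarthy's inequality for Schatten norms orders these quantities in opposite directions for $\alpha\geq 1$ versus $\alpha\in[1/2,1)$, but the prefactor $1/(\alpha-1)$ in $D_\alpha$ flips sign at $\alpha=1$, so both cases produce the same final bound on the divergence; the endpoints $\alpha\in\{1/2,1,\infty\}$ are handled by continuity.

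I expect this last step to be the main obstacle, since the sandwiched R\'enyi divergence does not in general respect block-diagonal structure, and the case split in $\alpha$ must be handled uniformly. A natural sanity check along the way is that the analogous additivity fact stated earlier in the paper (for tensor products in the second argument) is not directly applicable here because $\rho^{ABC}$ is not a tensor product; it is precisely the classicality of $C$ that rescues the argument.
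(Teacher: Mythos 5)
The paper does not prove this statement itself: it is imported verbatim from Tomamichel's monograph as equations (5.41) and (5.96), so there is no in-paper argument to compare against. Your derivation is therefore an independent, from-scratch proof, and it is correct. The rightmost bound works exactly as you say via the scaling identity $D_\alpha(\rho\|c\sigma)=D_\alpha(\rho\|\sigma)-\log c$ and non-negativity of $D_\alpha$ for $\alpha\geq\frac12$ (and the argument in fact applies to \emph{any} $\sigma^B$, not just $\rho^B$, so you can feed it the optimizer directly rather than invoking a separate minimization step). The middle bound via data processing under $\Tr_C$ is the standard one and correctly uses no classicality. The leftmost bound is the only delicate part and your reduction is sound: with the feasible candidate $\sigma^B_*\otimes\Id^C/2^{|C|}$, the scalar $2^{-|C|}$ peels off by the scaling identity, the block-diagonality of both $\rho^{ABC}$ and $\Id^A\otimes\sigma^B_*\otimes\Id^C$ in the $C$-basis collapses the sandwich term-by-term, and the remaining comparison $\sum_c\Tr Y_c^\alpha$ versus $\Tr\bigl(\sum_c Y_c\bigr)^\alpha$ is precisely McCarthy's Schatten inequality. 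Its direction reverses at $\alpha=1$, but the sign of $1/(\alpha-1)$ reverses there too, so the divergence inequality holds uniformly on $[\frac12,1)\cup(1,\infty)$; $\alpha\in\{1,\infty\}$ follow by continuity (or, for $\alpha=\infty$, directly from the operator-order characterization in Fact~\ref{fact:H_infinity}, since each block $p_c\rho^{AB}_c$ is dominated by $\sum_c p_c\rho^{AB}_c=\rho^{AB}$). Two small polish items: each $Y_c$ should carry $\Id^A\otimes(\sigma^B_*)^{(1-\alpha)/(2\alpha)}$ as the sandwich operator rather than $(\sigma^B_*)^{(1-\alpha)/(2\alpha)}$ alone, and the positivity $Y_c\geq 0$ required by McCarthy is automatic since each is a Hermitian sandwich of a positive operator.

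Relative to the cited source, your route is more elementary and closer to first principles: the book obtains these bounds from its general machinery (duality for Rényi entropies, entropic chain rules), whereas you derive them directly from the definition of $D_\alpha$ plus data processing, scaling, and a single Schatten-norm inequality (McCarthy), which the paper itself never introduces but which is classical and self-contained. For the paper's purposes citing the reference is the economical choice, but your argument is a legitimate and arguably more transparent alternative.
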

\begin{fact}[\cite{marco_book}]\label{fact:renyi_expression_classical}
Let $\rho^{XAB}= \sum_x p_x \cdot  \ketbra{x}{x}^X \otimes \rho^{AB}_x$ be a cq-state. For all $\alpha \in (0, 1) \cup (1,\infty)$, \[
S_\alpha(A|BX)_\rho=\frac{\alpha}{1-\alpha}\log\left(\sum_xp_x \exp{\left(\frac{1-\alpha}{\alpha}S_\alpha(A|B)_{\rho_x}\right)}\right).
\]
    
\end{fact}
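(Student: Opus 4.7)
The plan is to derive the formula directly from the variational definition $S_\alpha(A|BX)_\rho = -\min_{\sigma^{BX}} D_\alpha(\rho^{ABX}\,\|\,\Id^A\otimes\sigma^{BX})$, and to decouple the optimization over $\sigma^{BX}$ into a classical part (the marginal on $X$) and a quantum part (the conditional states on $B$). Because $\rho^{ABX}$ is block-diagonal in the classical register $X$, I would first argue that it suffices to optimize over cq-states $\sigma^{BX} = \sum_x q_x\,\ketbra{x}{x}^X \otimes \sigma_x^B$ with $\{q_x\}$ a probability distribution: for $\alpha\geq 1/2$ this follows from Fact~\ref{fact:data} applied to the dephasing CPTP map on $X$ (which fixes $\rho$ but can only decrease $D_\alpha$ applied to $\sigma$), and for $\alpha\in(0,1/2)$ one verifies directly that off-diagonal $X$-blocks of $\sigma$ do not couple to the block-diagonal $\rho$ and so do not help.

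Next, writing $\beta\defeq (1-\alpha)/(2\alpha)$, an elementary blockwise calculation gives
\begin{align*}
\Tr\Bigl[\bigl((\Id^A\otimes\sigma^{BX})^\beta\,\rho^{ABX}\,(\Id^A\otimes\sigma^{BX})^\beta\bigr)^\alpha\Bigr] = \sum_x p_x^\alpha\, q_x^{1-\alpha}\, c_x(\sigma_x^B),
\end{align*}
where $c_x(\sigma_x^B)\defeq \Tr[((\sigma_x^B)^\beta \rho_x^{AB} (\sigma_x^B)^\beta)^\alpha]$ is exactly the trace whose optimum over $\sigma_x^B$ realises $\exp((1-\alpha) S_\alpha(A|B)_{\rho_x})$ (min for $\alpha>1$, max for $\alpha\in(0,1)$). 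I would then optimize the classical part in isolation: setting $a_x:=p_x^\alpha c_x(\sigma_x^B)$ and using Lagrange multipliers (equivalently a Hölder inequality with exponent $1/\alpha$), the extremum of $\sum_x a_x q_x^{1-\alpha}$ over probability distributions $\{q_x\}$ is attained at $q_x\propto a_x^{1/\alpha}$ and equals
\begin{align*}
\Bigl(\sum_x a_x^{1/\alpha}\Bigr)^{\alpha} = \Bigl(\sum_x p_x\, c_x(\sigma_x^B)^{1/\alpha}\Bigr)^{\alpha}.
\end{align*}
Finally, because this expression separates across $x$, the outer optimization over the conditional states $\sigma_x^B$ can be performed independently on each $c_x(\sigma_x^B)^{1/\alpha}$, producing the claimed factor $\exp\bigl(\tfrac{1-\alpha}{\alpha} S_\alpha(A|B)_{\rho_x}\bigr)$. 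Combining and negating gives the stated identity for $S_\alpha(A|BX)_\rho$.

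The main obstacle is sign bookkeeping between the two regimes $\alpha>1$ and $\alpha\in(0,1)$: the factor $1/(\alpha-1)$ in $D_\alpha$ flips sign, so $\min$ and $\max$ swap at every stage of the nested optimization (in $\{q_x\}$, in each $\sigma_x^B$, and finally in the definition of $S_\alpha(A|B)_{\rho_x}$). The critical point of the Lagrangian is the same in both cases, but its interpretation as a minimum (when $q\mapsto q^{1-\alpha}$ is convex) or maximum (when concave) depends on the regime, and one must check that these three flips compose to the identity in both ranges so that the single closed-form expression in the statement holds uniformly. Once this is done cleanly, the proof assembles without any further analytic input beyond the definitions of $D_\alpha$ and $S_\alpha$.
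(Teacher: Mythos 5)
The paper does not prove this fact; it is imported verbatim from \cite{marco_book}, so there is no in-paper proof to compare against. Evaluating your argument on its own merits: the core of your derivation is correct. After restricting to cq $\sigma^{BX}=\sum_x q_x\ketbra{x}{x}\otimes\sigma_x^B$, the blockwise computation
\[
\Tr\bigl[((\Id^A\otimes\sigma^{BX})^\beta\,\rho\,(\Id^A\otimes\sigma^{BX})^\beta)^\alpha\bigr]=\sum_x p_x^\alpha q_x^{2\alpha\beta}\,c_x(\sigma_x^B)
\]
with $2\alpha\beta=1-\alpha$ is exactly right, the Lagrange/H\"older optimization over $\{q_x\}$ yields $(\sum_x p_x c_x^{1/\alpha})^\alpha$, and the three sign flips you worry about do compose to the identity in both regimes: for $\alpha>1$, $q\mapsto q^{1-\alpha}$ is convex so the critical point is a minimum, $1/(\alpha-1)>0$ so minimizing $D_\alpha$ minimizes the trace, and likewise each $c_x$ is minimized; for $\alpha\in(0,1)$ all three flip. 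This is a standard way to prove the decomposition formula.

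The one genuine gap is your treatment of the restriction to cq $\sigma^{BX}$ when $\alpha\in(0,1/2)$. The assertion that ``off-diagonal $X$-blocks of $\sigma$ do not couple to the block-diagonal $\rho$'' is false as stated: $\rho$ is block diagonal in $X$, but $\sigma^{(1-\alpha)/(2\alpha)}$ is generally not, so $\sigma^\beta\rho\sigma^\beta$ acquires off-diagonal $X$-blocks and these absolutely affect $\Tr[(\sigma^\beta\rho\sigma^\beta)^\alpha]$. There is no block decoupling to appeal to, and without DPI (which fails for sandwiched $D_\alpha$ below $1/2$) the dephasing argument is unavailable. You would either need a direct argument for this range, or simply restrict to $\alpha\geq 1/2$, which is the only range the paper actually invokes this fact for (it is used inside Fact~\ref{fact:markov}, which is stated for $\alpha\in[\tfrac{1}{2},1)\cup(1,\infty)$). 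With that restriction, your proof is complete and correct.
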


\begin{fact}[\cite{Berta_2011} and \cite{BJL_2023}]\label{fact:identity_upper} For a state $\sigma_{AB}$,
\[\sigma_{AB}\leq 2^{|B|}(\sigma_A\otimes \Id_B).\]
For a cq-state $\sigma_{XB}$ (with $X$ classical),
\[\sigma_{XB}\leq \sigma_X \otimes \Id_B,\]
\[\sigma_{XB}\leq \Id_X \otimes \sigma_B.\]
        
\end{fact}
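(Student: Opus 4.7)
The plan is to prove the three inequalities separately, starting with the two cq-state bounds (which are essentially bookkeeping) and then tackling the general bipartite inequality (which requires a short Cauchy--Schwarz argument).

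For the cq-inequalities, I expand $\sigma_{XB} = \sum_x p_x \ketbra{x}{x}^X \otimes \sigma^x_B$, so that $\sigma_X = \sum_x p_x \ketbra{x}{x}$ and $\sigma_B = \sum_x p_x \sigma^x_B$. For the bound with $\Id_B$, the difference $\sigma_X \otimes \Id_B - \sigma_{XB} = \sum_x p_x \ketbra{x}{x} \otimes (\Id_B - \sigma^x_B)$ is positive semidefinite since each $\sigma^x_B \leq \Id_B$ (eigenvalues of a state are at most $1$). For the bound with $\Id_X$, using $\sum_y p_y \sigma^y_B - p_x\sigma^x_B = \sum_{y\neq x} p_y \sigma^y_B$, the difference $\Id_X \otimes \sigma_B - \sigma_{XB} = \sum_x \ketbra{x}{x} \otimes \sum_{y\neq x} p_y \sigma^y_B$ is manifestly a sum of positive operators.

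The main work lies in $\sigma_{AB} \leq 2^{|B|}\,\sigma_A \otimes \Id_B$, which I reduce to checking $\langle v|\sigma_{AB}|v\rangle \leq 2^{|B|}\langle v|\sigma_A \otimes \Id_B|v\rangle$ for every unit vector $|v\rangle_{AB}$. I fix an orthonormal basis $\{|i\rangle\}_i$ of $B$ and expand $|v\rangle = \sum_i |a_i\rangle_A |i\rangle_B$, defining the block operators $\sigma^{ij}_A \defeq \langle i|\sigma_{AB}|j\rangle_B$ so that $\sigma_A = \sum_i \sigma^{ii}_A$. Since $\sigma_{AB}$ is positive semidefinite, the Cauchy--Schwarz inequality for positive forms applied with $|u\rangle = |a_i\rangle|i\rangle$ and $|w\rangle = |a_j\rangle|j\rangle$ gives $|\langle a_i|\sigma^{ij}_A|a_j\rangle| \leq \sqrt{\langle a_i|\sigma^{ii}_A|a_i\rangle \cdot \langle a_j|\sigma^{jj}_A|a_j\rangle}$. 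Summing over $i,j$ and then applying the scalar Cauchy--Schwarz to the resulting sum of at most $2^{|B|}$ terms yields
\[
\langle v|\sigma_{AB}|v\rangle \leq \Bigl(\sum_i \sqrt{\langle a_i|\sigma^{ii}_A|a_i\rangle}\Bigr)^{\!2} \leq 2^{|B|} \sum_i \langle a_i|\sigma^{ii}_A|a_i\rangle.
\]
Finally, $\sigma^{ii}_A \leq \sigma_A$ (since the slack $\sum_{k\neq i}\sigma^{kk}_A$ is positive) and $\sum_i \langle a_i|\sigma_A|a_i\rangle = \langle v|\sigma_A\otimes\Id_B|v\rangle$ close the argument.

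The only place I expect any subtlety is the Cauchy--Schwarz step on the block matrix, but this is just the standard bound $|\langle u|K|w\rangle|^2 \leq \langle u|K|u\rangle\langle w|K|w\rangle$ for positive $K$, so the main obstacle is really just careful index bookkeeping. An alternative route would be to recognize the inequality as the statement $D_\infty(\sigma_{AB}\|\sigma_A\otimes\Id_B)\leq \log\dim B$ and invoke the standard dimension bound on conditional min-entropy (via Fact~\ref{fact:H_infinity}), but the direct route above is cleaner and self-contained.
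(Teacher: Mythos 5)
Your proof is correct and self-contained. The paper itself does not give a proof of this fact --- it is stated with citations to external references only (there is a proof sketch commented out in the source, but it argues for the stronger, and in fact false, claim $\sigma_{AB}\leq \sigma_A\otimes\Id_B$ for general bipartite states, and was presumably removed for that reason), so there is no internal proof to compare against.

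A few remarks on your argument. The two cq-state inequalities are handled exactly right: in each case the difference is a manifestly nonnegative sum. For the general bipartite bound, the block decomposition $\sigma^{ij}_A = \bra{i}\sigma_{AB}\ket{j}_B$ with $\sigma_A=\sum_i\sigma^{ii}_A$, followed by the operator Cauchy--Schwarz $|\bra{a_i}\sigma^{ij}_A\ket{a_j}|\leq\sqrt{\bra{a_i}\sigma^{ii}_A\ket{a_i}\bra{a_j}\sigma^{jj}_A\ket{a_j}}$ and then scalar Cauchy--Schwarz over the $2^{|B|}$ indices, is precisely where the dimension factor is born, and the final use of $\sigma^{ii}_A\leq\sigma_A$ (because the slack $\sum_{k\neq i}\sigma^{kk}_A$ is PSD, each $\sigma^{kk}_A$ being a one-sided contraction of the PSD $\sigma_{AB}$) correctly converts $\sum_i\bra{a_i}\sigma^{ii}_A\ket{a_i}$ into $\bra{v}\sigma_A\otimes\Id_B\ket{v}=\sum_i\bra{a_i}\sigma_A\ket{a_i}$. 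One cosmetic point: the sum over $i$ has exactly $2^{|B|}$ terms, not "at most," but this does not affect the bound. Your closing observation that the first inequality is equivalent to $D_\infty(\sigma_{AB}\|\sigma_A\otimes\Id_B)\leq|B|$, i.e. the dimension lower bound on conditional min-entropy, is indeed the viewpoint of the cited works; your direct route avoids importing that machinery.
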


\begin{fact} \label{fact:renyi_classical} Let $\rho^{XQ}$ be a cq-state (with $X$ classical). Then $ S_\alpha(X|Q)_\rho\geq 0$ for $\alpha \in [0,\infty]$. 
\end{fact}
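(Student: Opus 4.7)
The plan is to establish the claim first at $\alpha = \infty$ via an explicit operator inequality, then extend to all $\alpha \in [0,\infty]$ by monotonicity in $\alpha$. The central ingredient is Fact~\ref{fact:identity_upper}: for a cq-state $\rho^{XQ}$ with $X$ classical, one has the operator inequality $\rho^{XQ} \leq \Id^X \otimes \rho^Q$.

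The first step is to plug the specific choice $\sigma^Q = \rho^Q$ into the definition of $S_\infty(X|Q)_\rho$ and invoke Fact~\ref{fact:H_infinity}. The displayed inequality above is exactly $\rho^{XQ} \leq 2^0 \cdot (\Id^X \otimes \rho^Q)$, so $D_\infty(\rho^{XQ} \,\|\, \Id^X \otimes \rho^Q) \leq 0$. Therefore
\[
S_\infty(X|Q)_\rho \;=\; -\min_{\sigma^Q} D_\infty\!\bigl(\rho^{XQ} \,\|\, \Id^X \otimes \sigma^Q\bigr) \;\geq\; -D_\infty\!\bigl(\rho^{XQ} \,\|\, \Id^X \otimes \rho^Q\bigr) \;\geq\; 0.
\]

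The second step is to apply Fact~\ref{fact:monotonicity_renyi} (monotonicity of conditional R\'enyi entropy in $\alpha$), which immediately gives $S_\alpha(X|Q)_\rho \geq S_\infty(X|Q)_\rho \geq 0$ for every $\alpha \in [0,\infty]$, covering the boundary values $\alpha \in \{0,1,\infty\}$ by the limit definitions. There is essentially no substantive obstacle here; the only subtlety worth noting is that routing the argument through the $\alpha = \infty$ endpoint and the explicit operator inequality sidesteps the failure of the data processing inequality for the sandwiched R\'enyi divergence in the regime $\alpha < 1/2$, so this approach handles the full claimed range of $\alpha$ uniformly rather than requiring a case split.
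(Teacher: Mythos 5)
Your proof is correct and takes essentially the same approach as the paper's: both reduce to the $\alpha=\infty$ case via monotonicity (Fact~\ref{fact:monotonicity_renyi}), then combine the operator inequality $\rho^{XQ}\leq\Id^X\otimes\rho^Q$ from Fact~\ref{fact:identity_upper} with the characterization of $D_\infty$ in Fact~\ref{fact:H_infinity} to conclude $D_\infty(\rho^{XQ}\,\|\,\Id^X\otimes\rho^Q)\leq 0$. The only difference is presentational order (you establish the $\alpha=\infty$ endpoint first, the paper invokes monotonicity first), and your remark about sidestepping data processing is apt but applies equally to the paper's version.
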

\begin{proof}Consider,
   \begin{align*}
       S_\alpha(X|Q)_\rho &\geq S_\infty(X|Q)_\rho \nonumber&\mbox{(Fact \ref{fact:monotonicity_renyi})}
       \\
       &=- \min_{\sigma_B} \mathrm{D}_\infty(\rho_{XQ}\|\Id_X\otimes\sigma_Q)  \nonumber
       \\
       &\geq -\mathrm{D}_\infty(\rho_{XQ}\|\Id_X\otimes\rho_Q) \\
       &  \geq 0. \nonumber & \mbox{(Facts~\ref{fact:H_infinity} and~\ref{fact:identity_upper})} 
   \end{align*}
 \end{proof}
\begin{definition}[Flat state]
A distribution $P$ is said to be flat if all non-zero probabilities are the same. A quantum state $\rho$ is said to be flat if the probability distribution formed by its eigenvalues is flat. From Definition \ref{def:renyi}, we see that for a flat state $\rho$ and any $\alpha,\beta\in[0,\infty]$,  $S_\alpha(\rho)=S_\beta(\rho)$. 
\end{definition}

\begin{definition}[$2$-flat state]
Let $p_{max}$ and $p_{min}$ be the maximum and minimum non-zero probability values of the distribution $P$. $P$ is said to be a $2$-flat distribution  if \[\frac{p_{max}}{p_{min}}\leq 2.\]
     A quantum state $\rho$ is said to be $2$-flat if the probability distribution formed by its eigenvalues is $2$-flat.
     From Definition \ref{def:renyi}, we see that for a state $\rho$, \[
    S_{\infty}(\rho) = \log\left(\frac{1}{p_{max}}\right) \text{ and } S_{0}(\rho)\leq \log\left(\frac{1}{p_{min}}\right).\] 
 From Fact \ref{fact:monotonicity_renyi}, we see that for a $2$-flat state $\rho$, for all $\alpha,\beta\in[0,\infty],$
 \begin{equation}
     \label{eq:2flat_state}
   |S_\alpha(\rho)-S_\beta(\rho)| \leq S_{0}(\rho)- S_{\infty}(\rho) \leq \log \left(\frac{p_{max}}{p_{min}}\right)\leq 1.
 \end{equation}
\end{definition}
\begin{definition}\label{def:substate}
    Let $\rho,\rho_1$ and $\rho_2$ be states such that \[
    \rho=p\cdot\rho_1+(1-p)\cdot \rho_2.    \]
    We say that $\rho_1$ is a substate of $\rho$ with probability atleast $p$.
\end{definition}

\begin{fact}[Fannes' inequality~\cite{Nielsen_Chuang}] \label{fact:fannes} For states $\rho$ and $\sigma$, \[
\vert S(\rho) -S(\sigma)\vert \leq \log(d)\onenorm{\rho}{\sigma}+\frac{1}{e}.
\] Here $d$ is the dimension of the Hilbert space of $\rho$ (and $\sigma$).
\end{fact}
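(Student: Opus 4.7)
The plan is to reduce the inequality to a comparison of the sorted eigenvalue sequences of $\rho$ and $\sigma$ and then apply the one-variable continuity of $\eta(x) \defeq -x \log x$. First I would let $\{p_i\}_{i=1}^d$ and $\{q_i\}_{i=1}^d$ denote the eigenvalues of $\rho$ and $\sigma$ arranged in non-increasing order. Since $S$ depends only on the spectrum,
\[ S(\rho) - S(\sigma) = \sum_{i=1}^{d} \bigl(\eta(p_i) - \eta(q_i)\bigr). \]
Next I would invoke Mirsky's theorem (a consequence of Weyl-type interlacing for Hermitian operators) to obtain $\sum_{i=1}^{d} |p_i - q_i| \leq \onenorm{\rho}{\sigma}$, which reduces the problem to bounding $\sum_i |\eta(p_i) - \eta(q_i)|$ in terms of $T \defeq \sum_i |p_i - q_i|$.

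For the classical step I would exploit the concavity of $\eta$ on $[0,1]$, together with $\eta(0) = \eta(1) = 0$ and the global bound $\max_{x \in [0,1]} \eta(x) \leq 1/e$ (after absorbing the $\ln 2$ factor). The key inequality is that for probability distributions $\{p_i\}, \{q_i\}$ on $d$ atoms with $\sum_i |p_i - q_i| = T$,
\[ \Bigl| \sum_i \eta(p_i) - \sum_i \eta(q_i) \Bigr| \leq T \log d + \eta(T), \]
whenever $T \leq 1/e$, and $\eta(T) \leq 1/e$ in that range. For $T > 1/e$ I would instead use the trivial bound $|S(\rho) - S(\sigma)| \leq \log d$ and check that it is dominated by $T \log d + 1/e$, since already $T \log d \geq (\log d)/e$.

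The main obstacle is pinning down the additive constant $1/e$ uniformly across the two regimes, and proving the classical continuity bound via a clean telescoping estimate that splits each $|\eta(p_i) - \eta(q_i)|$ into a contribution linear in $\log d$ plus a small-entropy remainder bounded by $\eta(T)$. Once that is in hand, combining Mirsky with the two-regime case analysis immediately yields $|S(\rho) - S(\sigma)| \leq (\log d)\onenorm{\rho}{\sigma} + 1/e$, as claimed.
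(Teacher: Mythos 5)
The paper does not prove Fact~\ref{fact:fannes}; it cites Nielsen and Chuang, so there is no in-paper proof to compare against. Your overall route --- sort the eigenvalues, use Mirsky's inequality to bound $\sum_i |p_i - q_i|$ by $\onenorm{\rho}{\sigma}$, and then apply a classical Fannes-type continuity estimate for $\sum_i \eta(p_i)$ --- is the standard textbook argument, and your small-$T$ step is in order: the pointwise estimate $|\eta(p_i)-\eta(q_i)|\leq \eta(|p_i-q_i|)$ needs each $|p_i-q_i|\leq 1/2$, which is guaranteed when $T\leq 1/e$, and summing gives $T\log d + \eta(T)$ via the usual rescaling $t_i/T$.

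The gap is in the large-$T$ regime. You claim that for $T>1/e$ the trivial bound $|S(\rho)-S(\sigma)|\leq \log d$ "is dominated by $T\log d + 1/e$, since already $T\log d\geq (\log d)/e$." This inference does not go through: $T\log d \geq (\log d)/e$ only gives $T\log d + 1/e \geq (\log d)/e + 1/e$, which can be much smaller than $\log d$. For instance $T=1/2$, $d=2^{10}$ gives $T\log d + 1/e \approx 5.37$ while $\log d = 10$, so the trivial bound does not establish the claim. (The inequality can still hold in this regime, because the trivial bound $\log d$ is very loose when $T<2(1-1/d)$, but that needs an actual argument, not the comparison you propose.) A secondary point: with the paper's binary $\log$, the quantity $\max_{T\in[0,1]}\eta(T)$ is $\frac{1}{e\ln 2}\approx 0.53$, not $1/e\approx 0.37$; your claim "$\eta(T)\leq 1/e$" holds for the natural-log version of $\eta$, and once you reinstate the $\ln 2$ to convert back to $S$ (which is taken in bits), the additive constant you would actually obtain is $\frac{1}{e\ln 2}$. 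So both the large-$T$ case split and the precise value of the additive constant need to be reworked before the argument is complete.
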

\begin{fact}[Extractor~\cite{renner}] \label{fact:hashing}  Let $X\in\{0,1\}^n$ and 
\[\tau^{XQ} = \sum_{x} \Pr(X=x)  \cdot 
 \ketbra{x}{x}^X \otimes \tau^Q_x,\] be a cq-state. Let $l=S_2(X|Q)_\tau$ and $s = O(n)$. There exists extractor 
\[
\ext: \{0,1\}^n \times \{0,1\}^s \rightarrow \{0,1\}^l,\]
such that for all $l^\prime \leq l$,
\[
QHH^{l^\prime}(X)_\tau\approx_{2^{-(l-l^\prime)/2}} Q_\tau\otimes U_s\otimes U_{l^\prime}
\]    
where  
\[\tau^{XHH(X)Q} = \sum_{x, h} \Pr(X=x) \cdot 2^{-s} \cdot 
 \ketbra{x,h,\ext(x,h)}{x,h,\ext(x,h)} \otimes \tau^Q_x,\] 
and $H^{l^\prime}(X)$ represents the $l^\prime$-bit prefix of $H(X)$.
\end{fact}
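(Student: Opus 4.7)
The plan is to instantiate $\ext$ as the evaluation map of a $2$-universal hash family and then invoke the quantum leftover hash lemma of~\cite{renner}. Concretely, pick a $2$-universal family $\{h_a : \{0,1\}^n \to \{0,1\}^l\}_{a \in \{0,1\}^s}$ with seed length $s = O(n)$---for instance, the affine family $h_{a,b}(x) = (a x + b \bmod p) \bmod 2^l$, where $p$ is a prime exceeding $2^n$ and the seed $(a,b) \in (\bbZ_p)^2$ has description length $O(n)$---and define $\ext(x,h) \defeq h_a(x)$. By construction this family has the crucial \emph{prefix property}: for every $l' \le l$, the truncated family $\{h_a^{l'}\}_a$ obtained by taking the top $l'$ bits of each output is itself $2$-universal into $\{0,1\}^{l'}$ (up to the usual constant factor that the leftover hash lemma absorbs).

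The core step is the quantum leftover hash lemma: for any cq-state $\tau^{XQ}$ with $X$ classical and any $2$-universal family of hash functions with output length $l'$, with $H$ denoting the uniformly random seed,
\[
\big\lVert \tau^{H\,H^{l'}(X)\,Q} - U_s \otimes U_{l'} \otimes \tau^Q \big\rVert_1 \;\leq\; 2^{-\frac{1}{2}\bigl(S_2(X|Q)_\tau - l'\bigr)}.
\]
Substituting the hypothesis $S_2(X|Q)_\tau = l$ yields precisely the claimed closeness $2^{-(l-l')/2}$. Because a single seed parameterizes a $2$-universal family for every output length $l' \leq l$ simultaneously, one fixed $\ext$ delivers the bound for all $l'$ at once, matching the quantification in the statement.

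The main obstacle, and the only nontrivial point, is engineering the prefix property with a seed of length only $s = O(n)$ rather than $O(nl)$. A naive solution (independent Toeplitz or random-matrix hashing for each output length) would blow the seed up beyond $O(n)$ once $l$ grows polynomially. The modular-arithmetic construction above circumvents this: for any fixed $x \neq y$, the collision probability $\Pr_{a,b}[h_{a,b}^{l'}(x) = h_{a,b}^{l'}(y)]$ is bounded by $O(2^{-l'})$ directly from the fact that $a(x-y) \bmod p$ is uniform on $\bbZ_p \setminus \{0\}$ when $a$ is uniform on $\bbZ_p \setminus \{0\}$, and truncation to $l'$ bits multiplies the collision probability by at most $2^{n-l'}/p = O(2^{-l'})$. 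With the prefix property verified, the conclusion reduces to a direct application of the leftover hash lemma as stated above.
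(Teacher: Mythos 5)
The paper does not actually prove this fact; it is imported directly from Renner's thesis~\cite{renner}, so there is no in-paper argument to compare against. Your approach is the right one: choose a $2$-universal family with seed length $O(n)$ that has the prefix property (every $l'$-bit prefix of the output is itself a $2$-universal family with the same seed), and then the conclusion for each $l'$ is a single application of Renner's quantum leftover hash lemma with $S_2(X|Q)_\tau = l$. You correctly identify the prefix property as the only nontrivial engineering step, since a naive family-per-length construction would need $\Omega(nl)$ seed bits.

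One wrinkle: the modular family $h_{a,b}(x) = ((ax+b)\bmod p)\bmod 2^{l}$ followed by a further top-$l'$-bit truncation is only \emph{almost} $2$-universal --- the double truncation gives collision probability $c\cdot 2^{-l'}$ for a constant $c>1$ (your estimate ``$2^{n-l'}/p$'' is also not quite the right accounting) --- and feeding an almost-universal family into the leftover hash lemma degrades the bound by a $\sqrt{c}$ factor, so you would prove $\approx_{O(2^{-(l-l')/2})}$ rather than the exact $\approx_{2^{-(l-l')/2}}$ claimed. This is harmless for the paper's downstream use (constants are absorbed in Claim~\ref{claim*:entropy_diff_large}), but to match the statement exactly, use the $\mathrm{GF}(2^n)$ linear hash $h_a(x) = $ first $l$ bits of $a\cdot x$ with $a\in\mathrm{GF}(2^n)$: for $x\neq y$, $a(x\oplus y)$ is uniform over the field, so every $l'$-bit prefix collides with probability exactly $2^{-l'}$, the seed is exactly $n$ bits, and the leftover hash lemma then yields the stated bound with no slack.
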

\begin{fact}[Uhlmann’s Theorem~\cite{uhlmann}]\label{fact:uhlmann}
    Let $\rho_A, \sigma_A$ be states. Let $\rho_{AB}$ be a purification
of $\rho_A$ and $\sigma_{AC} $ be a purification of $\sigma_A$. There exists an isometry $V$ such that,
\[\Delta_B (\ketbra{\rho}{\rho}_{AB}, \ketbra{\theta}{\theta}_{AB}) = \Delta_B(\rho_A, \sigma_A),\]
where $\ket{\theta}_{AB} = (\Id_A \otimes V )\ket{\sigma}_{AC}$.
\end{fact}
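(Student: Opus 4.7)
The plan is to reduce the statement to its core content: existence of an isometry $V: C \to B$ such that $|\inner{\rho}{\theta}| = F(\rho_A, \sigma_A)$, where $\ket{\theta}_{AB} = (\Id_A \otimes V) \ket{\sigma}_{AC}$. Given this, since $\ket{\rho}_{AB}$ and $\ket{\theta}_{AB}$ are pure states, the fidelity $F(\state{\rho}_{AB}, \state{\theta}_{AB}) = |\inner{\rho}{\theta}|$ equals $F(\rho_A, \sigma_A)$, and applying the definition $\Delta_B(\cdot,\cdot) = \sqrt{1 - F(\cdot,\cdot)}$ to both sides yields the stated equality of Bures distances. So the substantive task is producing the $V$ that achieves the fidelity.

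To construct $V$, I would first use spectral / Schmidt decompositions. Write $\rho_A = \sum_i p_i \state{i}_A$ and $\sigma_A = \sum_j q_j \state{j}_A$ with orthonormal eigenbases. Any purification $\ket{\rho}_{AB}$ can be written in the form $\ket{\rho}_{AB} = (\sqrt{\rho_A} \otimes \Id_B) \ket{\Omega}_{AB'} = \sum_i \sqrt{p_i} \ket{i}_A \ket{e_i}_B$ for some orthonormal set $\{\ket{e_i}\} \subset B$ (obtained by acting on the canonical maximally entangled-like vector with an appropriate isometry); analogously $\ket{\sigma}_{AC} = \sum_j \sqrt{q_j} \ket{j}_A \ket{f_j}_C$. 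Using these representations I can compute
\[
\inner{\rho}{(\Id_A \otimes V) \sigma} = \sum_{i,j} \sqrt{p_i q_j} \inprod{i}{j}_A \bra{e_i} V \ket{f_j}_C = \Tr\!\bigl( \sqrt{\rho_A}\,\sqrt{\sigma_A}\, M \bigr),
\]
where $M$ is an operator whose matrix elements in appropriate bases are $\bra{f_j} V^\dagger \ket{e_i}$; crucially, $M$ is a partial isometry (or extends to one) as $V$ ranges over isometries $C \to B$.

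The next step is to maximize the right-hand side. By the duality formula for the trace norm,
\[
\bigl\Vert \sqrt{\rho_A}\sqrt{\sigma_A} \bigr\Vert_1 = \max_{W \text{ unitary}} \bigl| \Tr(\sqrt{\rho_A}\sqrt{\sigma_A}\, W) \bigr|,
\]
with the optimum $W$ read off from the singular value decomposition $\sqrt{\rho_A}\sqrt{\sigma_A} = U D W^*$ as $W = W^{}U^*$ (up to a phase). I would then choose $V$ (padding with any isometric embedding if $\dim C < \dim B$, or restricting to the support if $\dim C \geq \dim B$) so that the induced $M$ equals this optimal $W$, giving $|\inner{\rho}{\theta}| = \|\sqrt{\rho_A}\sqrt{\sigma_A}\|_1 = F(\rho_A,\sigma_A)$, which is exactly what the first paragraph required.

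The main obstacle is bookkeeping about the ancilla spaces: the purifying registers $B$ and $C$ may have different dimensions, so $V$ must genuinely be an isometry rather than a unitary, and the correspondence between $V$ and the unitary $W$ appearing in the trace-norm duality needs care (in particular when $\rho_A$ or $\sigma_A$ is rank-deficient, there is freedom on the kernel that must be exploited to extend a partial isometry to an isometry of $C$ into $B$). This is a routine but delicate check; none of it touches the paper's cryptographic machinery, and no additional facts beyond the spectral/SVD theorems and the trace-norm duality are needed.
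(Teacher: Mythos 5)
The paper does not prove this fact at all — it is imported verbatim from the literature (\cite{uhlmann}) and used only once, inside the proof of Fact~\ref{fact:extension} — so there is no internal proof to compare against; what you give is the standard textbook proof of Uhlmann's theorem, and it is essentially sound. Your reduction is the right one: since $V$ is an isometry, $\ket{\theta}_{AB}$ is normalized, $F(\state{\rho}_{AB},\state{\theta}_{AB})=|\inprod{\rho}{\theta}|$, and you only need to \emph{exhibit} one $V$ attaining $|\inprod{\rho}{\theta}|=F(\rho_A,\sigma_A)=\norm{\sqrt{\rho_A}\sqrt{\sigma_A}}_1$ (the $\leq$ direction is not needed for the stated equality). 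The Schmidt-decomposition computation reducing the overlap to $\Tr(\sqrt{\rho_A}\sqrt{\sigma_A}M)$ and the trace-norm duality with the optimizer read off from the SVD are exactly the classical argument (modulo a small notational collision in your formula for the optimal $W$, which should be the product of the two SVD unitaries up to adjoints). The one genuine caveat is the dimension bookkeeping you yourself flag: the overlap only sees the block of $V$ from $\mathrm{supp}(\sigma_C)$ to $\mathrm{span}\{\ket{e_i}\}$, and completing the optimal partial-isometry block to an honest isometry out of $C$ (or even out of $\mathrm{supp}(\sigma_C)$) requires $B$ to be large enough, e.g.\ it fails if $\mathrm{rank}(\sigma_A)>\dim B$; in that regime one must either allow $V$ to be a partial isometry or enlarge the purifying register. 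This is a defect of the statement as phrased rather than of your argument (an earlier, commented-out version in the paper's source restricts $V$ to map a subspace of $C$ into a subspace of $B$), and it is harmless in the paper's only application, where the purifying register $R$ in Fact~\ref{fact:extension} can be chosen as large as needed.
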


\begin{fact}[Distance preserving extension]\label{fact:extension}
    Let $\rho_{AB}, \sigma_A$ be states. There exists an extension $\theta_{AB}$ of $\sigma_A$, that is $\Tr_B \theta_{AB} = \sigma_A$, such that,
\[\Delta_B (\rho_{AB}, \theta_{AB}) = \Delta_B(\rho_A, \sigma_A).\]
\end{fact}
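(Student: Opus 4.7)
The plan is to reduce the claim to Uhlmann's theorem (Fact~\ref{fact:uhlmann}) by lifting to purifications and then projecting back via the data-processing inequality. First I would introduce an auxiliary register $R$ and let $\ket{\rho}_{ABR}$ be a purification of $\rho_{AB}$; in particular, $\ket{\rho}_{ABR}$ is automatically a purification of $\rho_A$ as well, with combined purifying system $BR$. On the $\sigma$ side, I would take any purification $\ket{\sigma}_{AS}$ of $\sigma_A$, enlarging $R$ in advance if necessary so that $\dim(S) \leq \dim(BR)$ (so that Uhlmann's isometry will fit).

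Next I would apply Uhlmann's theorem to the pair $\rho_A, \sigma_A$ using these two purifications. This produces an isometry $V : S \to BR$ such that the vector $\ket{\theta}_{ABR} \defeq (\Id_A \otimes V)\ket{\sigma}_{AS}$ is a purification of $\sigma_A$ on $ABR$ satisfying
\[\Delta_B\bigl(\ketbra{\rho}{\rho}_{ABR},\, \ketbra{\theta}{\theta}_{ABR}\bigr) = \Delta_B(\rho_A, \sigma_A).\]
Defining $\theta_{AB} \defeq \Tr_R \ketbra{\theta}{\theta}_{ABR}$, the marginal on $A$ is $\Tr_B \theta_{AB} = \Tr_{BR} \ketbra{\theta}{\theta}_{ABR} = \sigma_A$, so $\theta_{AB}$ is a valid extension of $\sigma_A$.

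The final step is a two-sided sandwich via data processing (Fact~\ref{fact:data}) for the Bures metric. Applying the partial trace $\Tr_R$ gives
\[\Delta_B(\rho_{AB}, \theta_{AB}) \leq \Delta_B\bigl(\ketbra{\rho}{\rho}_{ABR},\, \ketbra{\theta}{\theta}_{ABR}\bigr) = \Delta_B(\rho_A, \sigma_A),\]
while applying the partial trace $\Tr_B$ to $\rho_{AB}$ and $\theta_{AB}$ gives the reverse inequality $\Delta_B(\rho_A, \sigma_A) \leq \Delta_B(\rho_{AB}, \theta_{AB})$. Combining the two bounds yields the desired equality.

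There is no substantive obstacle here; the only mild point of care is the dimension bookkeeping needed so that the Uhlmann isometry exists from $S$ into $BR$, and this can always be arranged by enlarging $R$ at the outset.
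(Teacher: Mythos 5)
Your proposal is correct and follows essentially the same route as the paper: purify $\rho_{AB}$ on an extra register $R$, invoke Uhlmann's theorem against a purification of $\sigma_A$ to build $\ket{\theta}_{ABR}$, and then sandwich the Bures distance from both sides via data processing ($\Tr_R$ and $\Tr_B$). The only cosmetic difference is that you spell out the dimension bookkeeping for the Uhlmann isometry, which the paper leaves implicit.
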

\begin{proof}
Let $\rho_{ABR}$ be a purification of $\rho_{AB}$. Let $\sigma_{AC}$ be a purification of $\sigma_A$.  From Uhlmann's Theorem (Fact \ref{fact:uhlmann}), there exists an isometry $V$ such that $\ket{\theta}_{ABR} = (\Id_A \otimes V )\ket{\sigma}_{AC}$ and 
\[\Delta_B (\rho_{ABR}, \theta_{ABR}) = \Delta_B(\rho_A, \sigma_A).\]
From the data-processing inequality (Fact~\ref{fact:data}), we get 
\[\Delta_B(\rho_A, \theta_A) \leq \Delta_B (\rho_{AB}, \theta_{AB})  \leq \Delta_B (\rho_{ABR}, \theta_{ABR}) = \Delta_B(\rho_A, \sigma_A).\]
This shows the desired by noting that $\theta_A = \sigma_A$.
\end{proof}

\begin{fact}[\cite{Tomamichel_2013}] \label{fact:smooth_entopy_multiple_copies}
    For a state $\rho$ on $n$-qubits, 
\[S_0^\eps(\rho^{\otimes t})\leq t\cdot S(\rho)+O(\sqrt{tn})\Phi^{-1}(\eps) +O(\log(n)).\] 
\[S_\infty^\eps({\rho}^{\otimes t})\geq t\cdot S({\rho})-O(\sqrt{tn})\Phi^{-1}(\eps) +O(\log(n)).\] 
Here $\Phi(x)=\int_{-\infty}^x\frac{e^{-t^2/2}}{\sqrt{2\pi}}dt.$
\end{fact}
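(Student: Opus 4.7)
The plan is to establish this quantum asymptotic equipartition property via the standard spectral-plus-central-limit approach: diagonalize $\rho^{\otimes t}$, reduce the smooth-entropy bounds to a classical typicality question about i.i.d.\ log-probabilities, and then apply a Berry-Esseen central limit theorem.

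Spectrally decompose $\rho=\sum_i p_i\ketbra{i}{i}$. Then $\rho^{\otimes t}$ is diagonal in the product eigenbasis with eigenvalues $p_{\mathbf{i}}=\prod_{j=1}^t p_{i_j}$, and since the $\eps$-balls defining $S_0^\eps$ and $S_\infty^\eps$ are closed under pinching to this eigenbasis, we may work entirely with the classical distribution $\{p_{\mathbf{i}}\}$. Setting $X_j\defeq-\log p_{I_j}$ for $I_j\sim p$ i.i.d., the quantity $-\log p_{\mathbf{I}}=\sum_{j=1}^t X_j$ is a sum of $t$ i.i.d.\ variables with mean $S(\rho)$ and some variance $V(\rho)$ controlled by the dimension of the underlying space (this is what feeds the $\sqrt{tn}$ factor). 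Applying Berry-Esseen, there are thresholds $L_\pm=O(\sqrt{tn})\,\Phi^{-1}(\eps/2)$ such that
\[\Pr\!\left[tS(\rho)-L_-\leq \sum_{j=1}^t X_j\leq tS(\rho)+L_+\right]\geq 1-\eps,\]
up to a Berry-Esseen slack of $O(1/\sqrt{t})$ that can be absorbed into the final additive $O(\log n)$ term.

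Let $T\subseteq[2^n]^t$ be the set of tuples $\mathbf{i}$ satisfying the typicality condition above, and $\Pi_T$ the projector onto the corresponding eigenstates of $\rho^{\otimes t}$. Then $\mathrm{Tr}(\Pi_T\rho^{\otimes t})\geq 1-\eps$. For the max-entropy bound, every eigenvalue on $T$ satisfies $p_{\mathbf{i}}\geq 2^{-tS(\rho)-L_+}$, so $|T|\leq 2^{tS(\rho)+L_+}$, yielding $S_0^\eps(\rho^{\otimes t})\leq tS(\rho)+L_++O(\log n)$. Dually, every eigenvalue on $T$ satisfies $p_{\mathbf{i}}\leq 2^{-tS(\rho)+L_-}$, so the $\eps$-close normalized state $\Pi_T\rho^{\otimes t}\Pi_T/\mathrm{Tr}(\Pi_T\rho^{\otimes t})$ has min-entropy at least $tS(\rho)-L_--O(\log n)$, giving the lower bound on $S_\infty^\eps(\rho^{\otimes t})$.

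The main obstacle is obtaining the sharp $\sqrt{tn}$ scaling rather than the crude $\sqrt{t}\cdot n$ one gets from the naive variance bound $V(\rho)\leq n^2$ (which follows from $|\log p_i|\leq n$ alone); this requires the refined second- and third-moment Berry-Esseen analysis carried out in Tomamichel's original proof, together with a careful discretization of the log-probability axis so that the $O(\log n)$ additive term absorbs both the CLT error and the normalization factor $(1-\eps)^{-1}$ from restricting to the typical subspace. Once these pieces are in place, the rest is a routine typicality computation.
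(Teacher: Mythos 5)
The paper does not prove this statement; it is quoted as a black-box fact from Tomamichel's work on the second-order quantum asymptotic equipartition property, so there is no in-paper proof to compare against. Your outline (pinch to the eigenbasis of $\rho^{\otimes t}$, reduce to a classical i.i.d.\ sum of surprisals, apply Berry--Esseen, then a typical-projector argument) is the standard route and is the one carried out in the cited source, so on that score you are fine.

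One correction to your ``main obstacle'' paragraph: the tight deviation term is $\sqrt{t\,V(\rho)}\,\bigl|\Phi^{-1}(\eps)\bigr|$ where $V(\rho)=\Tr[\rho(\log\rho)^2]-S(\rho)^2$ is the information variance, and no amount of third-moment Berry--Esseen refinement brings this down to $\sqrt{tn}$ in general, because $V(\rho)$ can genuinely be $\Theta(n^2)$. For instance, a spectrum with a single eigenvalue $\tfrac12$ and the remaining mass uniform over the other $2^n-1$ eigenvectors has $S(\rho)\approx(n+2)/2$ and $V(\rho)\approx n^2/4$. So the $O(\sqrt{tn})$ in the statement should be read as $O(\sqrt{t\,V(\rho)})\le O(\sqrt{t}\cdot n)$; your ``naive'' bound is in fact the correct worst case, and the refinement only replaces the crude $n$ by $\sqrt{V(\rho)}$. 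This looseness is harmless where the fact is used (proof of Theorem~\ref{theorem:EFI}): there the deviation term only needs to be dominated by $t$ times an $\Omega(\log(n)/n^2)$ entropy gap with $t=t_0^7$, and both $t_0^{3.5}\sqrt{n}\log t_0$ and $t_0^{3.5}n\log t_0$ sit comfortably below the leading $t_0^5\log n$ term. Otherwise your sketch is sound; just state the CLT conclusion honestly as $\sqrt{tV}$ rather than expecting it to yield $\sqrt{tn}$.
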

\begin{fact}[\cite{alphabit}]\label{fact:alphabit}
    For the function  $\Phi(x)=\int_{-\infty}^x\frac{e^{-t^2/2}}{\sqrt{2\pi}}dt$, and $\eps\leq \frac{1}{2}$,\[
    |\Phi^{-1}(\eps)|\leq \sqrt{2\log\frac{1}{2\eps}}.
    \]
\end{fact}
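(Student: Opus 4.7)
The plan is to reduce the quantile bound to a sharp one-sided Gaussian tail inequality and then verify that inequality by elementary calculus. Since $\eps \leq 1/2 = \Phi(0)$, monotonicity of $\Phi$ gives $\Phi^{-1}(\eps) \leq 0$, so $|\Phi^{-1}(\eps)| = -\Phi^{-1}(\eps)$. Setting $x \defeq \sqrt{2\log(1/(2\eps))} \geq 0$, a direct rearrangement shows $\eps = \tfrac{1}{2} e^{-x^2/2}$. Hence the target inequality $|\Phi^{-1}(\eps)| \leq x$ is equivalent, after applying $\Phi$ to both sides of $\Phi^{-1}(\eps) \geq -x$, to the one-sided Gaussian tail bound
\[
\Phi(-x) \;\leq\; \tfrac{1}{2}\, e^{-x^2/2} \qquad \text{for all } x \geq 0.
\]

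To establish this tail bound, I would define $g(x) \defeq \tfrac{1}{2} e^{-x^2/2} - \Phi(-x)$ and argue $g \geq 0$ on $[0,\infty)$ by monotonicity analysis. Note that $g(0) = \tfrac{1}{2} - \tfrac{1}{2} = 0$ and $\lim_{x \to \infty} g(x) = 0$. Differentiating,
\[
g'(x) \;=\; e^{-x^2/2}\left(\tfrac{1}{\sqrt{2\pi}} - \tfrac{x}{2}\right),
\]
which is positive on $[0, \sqrt{2/\pi}]$ and negative on $[\sqrt{2/\pi}, \infty)$. Thus $g$ rises from $0$ to a positive maximum and then descends back to $0$, so $g(x) \geq 0$ throughout, which is exactly the claimed tail bound.

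The mild obstacle is choosing the right comparison function: a generic Chernoff/moment-generating-function argument only yields $\Pr[Y > x] \leq e^{-x^2/2}$ for $Y \sim N(0,1)$, which is off by a factor of $2$ and would produce the strictly weaker bound $|\Phi^{-1}(\eps)| \leq \sqrt{2\log(1/\eps)}$. Recovering the sharp prefactor $\tfrac{1}{2}$ (which matters because it exactly matches the $2\eps$ appearing inside the logarithm) cannot be read off from a black-box concentration inequality; it requires the direct monotonicity analysis of $g$ above, or equivalently a Mills-ratio computation. Everything else in the reduction is just unwinding definitions.
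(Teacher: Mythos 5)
The paper states this fact as a cited result from~\cite{alphabit} and does not supply a proof, so there is no paper argument to compare against. Your proof is correct and self-contained. The reduction is sound: for $\eps \leq 1/2$, monotonicity of $\Phi$ gives $\Phi^{-1}(\eps) \leq 0$, and with $x = \sqrt{2\log(1/(2\eps))}$ one has $\eps = \tfrac{1}{2}e^{-x^2/2}$, so the claim becomes the Gaussian tail bound $\Phi(-x) \leq \tfrac{1}{2}e^{-x^2/2}$ for $x \geq 0$. Your derivative computation is right: $g'(x) = e^{-x^2/2}\bigl(\tfrac{1}{\sqrt{2\pi}} - \tfrac{x}{2}\bigr)$, so $g$ increases on $[0,\sqrt{2/\pi}]$ from $g(0)=0$ and strictly decreases on $[\sqrt{2/\pi},\infty)$ with $\lim_{x\to\infty} g(x) = 0$; a strictly decreasing function with limit $0$ must stay nonnegative, so $g \geq 0$ everywhere, which is the tail bound. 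Your remark about the prefactor is also well taken: the sharp constant $\tfrac12$ (rather than the cruder Chernoff constant $1$) is exactly what makes the argument of the logarithm $1/(2\eps)$ rather than $1/\eps$, and the direct monotonicity analysis is the cleanest way to obtain it.
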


\begin{fact}[Quantum extractor~\cite{AJ22}, Theorem 2]\label{fact:quantum_extractor} Let $\psi_R$ be a state on $n$-qubits, $\eps>0$ and $s=4n-S_\infty^{\eps}(\psi_R)+O(\log\left(\frac{1}{\eps}\right))$.  There exists a unitary $G$ such that (below $S$ represents a register on $s-3n$ qubits),
\[
\onenorm{\Tr_S G(\psi_R\otimes \ketbra{0}{0}\otimes U_{s})G^\dagger}{U_{4n+1}}\leq \eps.
\]
This gives a quantum extractor with seed length $s$ and output length $4n+1$:
\[\ext_Q(\psi_R,U_{s})= \Tr_S G(\psi_R\otimes \ketbra{0}{0}\otimes U_{s})G^\dagger.\]
In addition, for any $\delta>0$, \[
S_0^\delta(\ext_Q(\psi_R,U_{s}))\leq S^\delta_0(\psi_R) + s.
\]
\end{fact}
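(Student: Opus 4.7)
The plan is to construct $G$ by invoking the one-shot quantum decoupling theorem together with an approximate unitary $2$-design acting on the combined register $RAH$, where $A$ is a single-qubit ancilla prepared in $\ket{0}$ and $H$ is the $s$-qubit seed register holding $U_s$. After applying $G$, the $(n+1+s)$-qubit output is partitioned into the retained register $T$ of size $4n+1$ qubits and a register $S$ of size $s-3n$ qubits to be traced out.

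First I would bound the smooth min-entropy of the input: $\rho_{\mathrm{in}} = \psi_R \otimes \ketbra{0}{0}_A \otimes U_s^H$ satisfies $S_\infty^\eps(\rho_{\mathrm{in}}) \geq S_\infty^\eps(\psi_R) + s$ by additivity (the ancilla contributes $0$ and the seed contributes exactly $s$). The one-shot decoupling theorem of Dupuis--Berta--Wullschleger--Renner then says that, for a Haar-random unitary $V$ on $RAH$, the expected trace distance $\mathbb{E}_V \onenorm{\Tr_S(V\rho_{\mathrm{in}} V^\dagger)}{U_{|T|}}$ is at most $2^{-(S_\infty^\eps(\rho_{\mathrm{in}}) - |T|)/2} + O(\eps)$. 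Demanding this be at most $\eps$ and substituting $|T| = 4n+1$ forces $s \geq 4n - S_\infty^\eps(\psi_R) + O(\log(1/\eps))$, matching the stated parameter. An efficient approximate $2$-design substitutes for the Haar measure at negligible extra cost, and the probabilistic method then extracts a deterministic $G$ in the design achieving the claimed trace-distance guarantee.

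For the second part, bounding $S_0^\delta$ of the output, I would smooth $\psi_R$ to a state $\tilde\psi_R$ of rank at most $2^{S_0^\delta(\psi_R)}$. Since $G$ is unitary, $G(\tilde\psi_R \otimes \ketbra{0}{0} \otimes U_s) G^\dagger$ has rank at most $2^{S_0^\delta(\psi_R) + s}$. The delicate point is that partial trace over $S$ can in principle increase rank; here, though, the output is writable as $\ext_Q(\psi_R,U_s)=2^{-s}\sum_h \Tr_S G_h(\tilde\psi_R \otimes \ketbra{0}{0}) G_h^\dagger$, a convex mixture of $2^s$ conditional marginals indexed by the classical seed value $h$. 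By exploiting the block structure of $G$ on the seed register and absorbing any residual rank inflation into the smoothing parameter $\delta$, one obtains $S_0^\delta(\ext_Q(\psi_R, U_s)) \leq S_0^\delta(\psi_R) + s$.

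The chief obstacle I expect is harmonizing the decoupling bound (a smooth min-entropy statement) with the zero-R\'enyi bookkeeping needed for the second part, and verifying that the passage from Haar measure to an efficient approximate $2$-design does not inflate either estimate beyond the stated $O(\log(1/\eps))$ overhead; both issues are resolvable by standard one-shot information-theoretic smoothing arguments.
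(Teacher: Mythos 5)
This fact is cited verbatim from~\cite{AJ22} (their Theorem~2) and is not proved in the present paper, so there is no internal proof for your proposal to be measured against; your attempt is, in effect, a reconstruction of an external theorem from scratch.

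Your decoupling sketch for the first claim is in the right spirit but the parameter arithmetic is off: without a reference system, the decoupling bound reads roughly $2^{-\frac{1}{2}\left(S_\infty^\eps(\rho_{\mathrm{in}})+|S|-|T|\right)}$, and with $|T|=4n+1$, $|S|=s-3n$, and $S_\infty^\eps(\rho_{\mathrm{in}})\geq S_\infty^\eps(\psi_R)+s$ this asks only for $s\geq \frac{1}{2}\bigl(7n+1-S_\infty^\eps(\psi_R)\bigr)+O(\log(1/\eps))$, which is strictly weaker than $s=4n-S_\infty^\eps(\psi_R)+O(\log(1/\eps))$. So the decoupling constraint does not ``force'' the stated seed length; that particular value must come from somewhere else in the~\cite{AJ22} construction.

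The serious gap is in the second claim. You correctly observe that $G(\tilde\psi_R\otimes\ketbra{0}{0}\otimes U_s)G^\dagger$ has rank at most $2^{S_0^\delta(\psi_R)+s}$, but rank is \emph{not} monotone under partial trace: tracing out $S$ can multiply the rank by up to $\dim(\mathcal{H}_S)=2^{s-3n}$, giving only $S_0^\delta(\ext_Q)\leq S_0^\delta(\psi_R)+2s-3n$, which is strictly worse than the claimed $S_0^\delta(\psi_R)+s$ whenever $s>3n$. You then invoke a ``block structure of $G$ on the seed register'' to repair this, but that structure is exactly what a Haar-random unitary or a generic approximate $2$-design does \emph{not} have. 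The decomposition $\ext_Q(\psi_R,U_s)=2^{-s}\sum_h\Tr_S G_h(\tilde\psi_R\otimes\ketbra{0}{0})G_h^\dagger$ is valid only if $G$ is seed-controlled, and if it is, the expected-trace-distance bound from decoupling over the full space no longer applies as stated. These two halves of your proposal thus rest on mutually incompatible assumptions about $G$. The rank blowup is multiplicative, not a small additive term, so it cannot be ``absorbed into the smoothing parameter $\delta$'' by a standard smoothing argument. A correct proof has to exhibit a specific $G$ (as~\cite{AJ22} does) that simultaneously decouples and respects a rank bound under tracing out $S$; that construction is the content of their theorem and is not recovered by a generic $2$-design argument.
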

\begin{fact} \label{fact:convex_entropy_diff}
 Let $\rho_0$, $\rho_1$ and $\rho_2$ be states. Let \[
    \sigma_0=p\rho_0+(1-p)\rho_2~,~  \sigma_1=p\rho_1+(1-p)\rho_2.
    \] 
    For any $p\in [0,\frac{1}{2}]$,  \[
    S(\sigma_{1})-  S(\sigma_{0})\geq p\cdot\left(S(\rho_1)-S(\rho_0)-\log\left(\frac{1}{p}\right)-2\right).
    \]
\end{fact}
\begin{proof}
Let $Q$ denote the quantum register corresponding to the states $\rho_0, \rho_1$ and $\rho_2$.
Consider states 
    \[
{\kappa}^{XQ}_0=p\cdot\ketbra{0}{0}^X\otimes \rho_0^{Q}+(1-p)\cdot\ketbra{1}{1}^X\otimes \rho_2^{Q}.
    \]
    \[
    {\kappa}^{XQ}_1=p\cdot\ketbra{0}{0}^X\otimes \rho_1^{Q}+(1-p)\cdot\ketbra{1}{1}^X\otimes \rho_2^{Q}.
    \]
From the above construction, we have \begin{align}
    S(\sigma_{0})&=S(p\cdot \rho_0^{Q}+(1-p)\cdot\rho_2^{Q}) \nonumber
\\&=S(Q)_{{\kappa}_0}\nonumber
    \\&\leq S(XQ)_{{\kappa}_0}&\mbox{(Fact \ref{fact:entropy_inequalities})}\nonumber
     \\&= S(X)_{{\kappa}_0}+S(Q|X)_{{\kappa}_0}&\mbox{(Fact \ref{fact:chainentropy})}\nonumber
     \\&= S(\bern(p))+p\cdot S(\rho_0)+(1-p)\cdot S(\rho_2).\label{eq:ent_1}
\end{align}
Similarly, we get
\begin{align}
    S(\sigma_{1})&=S(p\cdot \rho_1^{{Q}}+(1-p)\cdot\rho_2^{{Q}}) \nonumber
    \\&=S({Q})_{{\kappa}_1}\nonumber
    \\&\geq S({Q}|X)_{{\kappa}_1}&\mbox{(Fact \ref{fact:entropy_inequalities})}\nonumber
     \\&=p\cdot S(\rho_1)+(1-p)\cdot S(\rho_2).\label{eq:ent_2}
\end{align}
Consider,\begin{align}
    S(\bern(p))&=p\log(\frac{1}{p})+(1-p)\log(\frac{1}{1-p})\nonumber
    \\&\leq p\log(\frac{1}{p})+\log(\frac{1}{1-p})\nonumber
    \\&\leq p\log(\frac{1}{p})+\log(1+2p) &\mbox{(for $p\in[0,\frac{1}{2}], (1-p)(1+2p)\geq 1$)} \nonumber 
    \\&\leq p\log(\frac{1}{p})+2p.  &\mbox{($1+x\leq e^x$)} \label{eq:ent_3}
\end{align}
From eqs. \ref{eq:ent_1} and \ref{eq:ent_2}, we get 
\small\begin{align*}
    S(\sigma_{1})-  S(\sigma_{0})&\geq p\cdot(S(\rho_1)-S(\rho_0))-S(\bern(p))
    \\&\geq p\cdot(S(\rho_1)-S(\rho_0)) - p\log\left(\frac{1}{p}\right)-2p &\mbox{(eq. \ref{eq:ent_3})}
    \\&=p\cdot\left(S(\rho_1)-S(\rho_0)-\log\left(\frac{1}{p}\right)-2\right). \qedhere
\end{align*}

\end{proof}
\begin{fact}\label{fact:trace_with_identity} For quantum states $\rho^{AB}$ and $\sigma^A$,
    \[
    \Tr(\rho^{AB}(\sigma^A\otimes\Id^B))=\Tr(\rho^A\sigma^A).
    \]
\end{fact}
\begin{fact}\label{fact:quantum_chaining} For a quantum state $\sigma^A$ and a cq-state $\rho^{AJ}$ with $J$ classical,
    \[
 {\mathrm{D}(\rho^A\|\sigma^A)\leq}\E_{j\leftarrow\rho^J} \left[\mathrm{D}(\rho_j^A\|\sigma^A)\right]\leq \mathrm{D}(\rho^A\|\sigma^A)  + |J|.
\]
\end{fact}
\begin{proof}
The first inequality follows from Fact \ref{fact:joint_convexity}. For the other inequality, consider
\begin{align*}
     \E_{j\leftarrow\rho^J} \left[\mathrm{D}(\rho_j^A\|\sigma^A)\right] &\leq \mathrm{D}(\rho^{J}\| U^J) + \E_{j\leftarrow\rho^J} \left[\mathrm{D}(\rho_j^A\|\sigma^A)\right] &\mbox{(Fact \ref{fact:non_negative})} \\& = \mathrm{D}(\rho^{AJ}\|\sigma^A\otimes U^J) &\mbox{(Fact \ref{fact:chain_relative})}
     \\&= \Tr(\rho^{AJ}\log(\rho^{AJ}))-\Tr(\rho^{AJ}\log(\sigma^A\otimes U^J))
    \\&= \Tr(\rho^{AJ}\log(\rho^{AJ}))-\Tr(\rho^{AJ}\log(\sigma^A\otimes \Id^J))+|J|
     \\&\leq  \Tr(\rho^{AJ}\log(\rho^{A}\otimes\Id^J))-\Tr(\rho^{AJ}\log(\sigma^A\otimes \Id^J))+|J| &\mbox{(Fact \ref{fact:identity_upper})}  \\&=  \Tr(\rho^{AJ}(\log(\rho^{A})\otimes\Id^J))-\Tr(\rho^{AJ}(\log(\sigma^A)\otimes \Id^J))+|J| 
     \\&= \Tr(\rho^{A}\log(\rho^{A}))-\Tr(\rho^{A}\log(\sigma^A))+|J| &\mbox{(Fact \ref{fact:trace_with_identity})}
      \\&= \mathrm{D}(\rho^{A}\|\sigma^A)+|J|. \qedhere
\end{align*}    
\end{proof}
\subsubsection*{Computation-theoretic preliminaries}
\begin{definition}[$\negl$ function]\label{def:negligible}
    We call a function $\mu : \mathbb{N} \rightarrow \mathbb{R}^{+}$ negligible if for every positive polynomial $p(\cdot)$, there exists an $N \in \mathbb{N}$ such that for all $\lambda > N$, $\mu(\lambda)< \frac{1}{p(\lambda)}$. We let $\mu(\lambda) = \negl(\lambda)$ represent that $\mu(\lambda)$ is a negligible function in $\lambda$. 
\end{definition}

\begin{definition}[noticeable function]\label{def:noticeable}    We call a function $\mu : \mathbb{N} \rightarrow \mathbb{R}^{+}$ noticeable if there exists a positive polynomial $p(\cdot)$ and an $N \in \mathbb{N}$ such that for all $\lambda > N$, $\mu(\lambda) > \frac{1}{p(\lambda)}$.  
\end{definition}

\begin{definition}[$\poly$]\label{def:poly}  We let $\poly(\lambda)$ denote the class of all positive polynomials in $\lambda$.
\end{definition}

\begin{definition}[Computationally indistinguishable] We say two ensemble of states $\{\rho_\lambda, \sigma_\lambda\}_{\lambda \in \mathbb{N}}$ are computationally indistinguishable, denoted as $\rho \approx_{\negl_C}\sigma$, if for every non-uniform $\QPT$ adversary $A$,
    \[\Big\vert\Pr[A(1^\lambda,\rho_\lambda)=1]-\Pr[A(1^\lambda,\sigma_\lambda)=1] \Big\vert = \negl(\lambda).\] 
    
\end{definition}
Morimae and Yamakawa~\cite{owsg} have defined a quantum one-way state generator. We use the following definition which differs from theirs in two points. The algorithm $\ver$ and the adversary 
 A in their definition are $\QPT$ algorithms, whereas in our definition $\ver$ is an unbounded algorithm and adversary 
 A is a non-uniform $\QPT$ algorithm. 
\begin{definition}[{Statistically-verifiable} quantum one-way state generator] \label{def:OWSG} An $m$-copy {statistically-verifiable} quantum one-way state generator ($\sOWSG$) is a set of algorithms: 
\begin{enumerate}
\item $\keygen(1^\lambda)\rightarrow x$: It is a $\QPT$ algorithm that on input the security parameter, outputs a classical key $x\in\{0,1\}^n$. 
    \item $\gen(1^\lambda, x)\rightarrow \phi_x$: It is a $\QPT$ algorithm that takes as input the security parameter and $x \in \{0,1\}^n$ and outputs a ({mixed}) quantum state $\phi_x$. In other words, $\gen(1^\lambda,x)=\phi_x^{}$ .
    \item $\ver(x^\prime, \phi_x)\rightarrow \tang /\bot:$ It is an algorithm that on input a string $x^\prime$ and $\phi_x$  gives as output $\tang$ or $\bot$.

\end{enumerate}
We require the following correctness and security conditions to hold. 

\noindent {\bf Correctness:} 
\[\Pr[\tang\leftarrow\ver(x,\phi_x):x\leftarrow\keygen(1^\lambda), \phi_x\leftarrow\gen(1^\lambda,x)]\geq 1-\negl(\lambda).\] 

\noindent{\bf Security:}

\noindent For any non-uniform $\QPT$ adversary $A$ (with running time in $\poly(\lambda)$),  
\[\Pr[\tang\leftarrow\ver(x^\prime,\phi_x):x\leftarrow\keygen(1^\lambda), \phi_x\leftarrow\gen(1^\lambda,x),x^\prime\leftarrow A(1^\lambda, \phi_x^{\otimes m})]= \negl(\lambda).\] 
We call the state 
\begin{equation}\label{eq:one-way-state}
    \tau^{XQ^m}=\sum_{x\in\{0,1\}^n}\Pr(X=x)_\tau\cdot\ketbra{x}{x}\otimes \phi_x^{\otimes m},
\end{equation} 
an $m$-copy one-way state and the function $x\rightarrow \phi_x^{\otimes m}$  an $m$-copy quantum one-way function. For a non-uniform $\QPT$ adversary $A$, security implies that $\forall i\in[m]$,  \[
\Pr[\tang \leftarrow \ver(A(1^\lambda,\tau^{Q^i}), \tau^{Q_{i+1}} )]= \negl(\lambda).
\]

\end{definition}

\begin{definition}[EFI~\cite{brakerski2022computational}]
An $\EFI$  is a $\QPT$ algorithm
$\gen(b,1^\lambda)\rightarrow\rho_b$ that, on input
$b\in \{0,1\}$ and the security parameter $\lambda$, outputs a quantum state $\rho_b$
such that the following conditions are satisfied.
\begin{enumerate}
    
    \item $\rho_0\approx_{\negl_C}\rho_1$. 
    \item $\rho_0$ and $\rho_1$ are statistically distinguishable with noticeable advantage. In other words, $\frac{1}{2}\onenorm{\rho_0}{\rho_1}$ is a noticeable function in $\lambda$.
\end{enumerate}
We call $(\rho_0,\rho_1)$ an $\EFI$ pair.
\end{definition}
\begin{fact}[\cite{Morimae_2022}]\label{fact:EFI_amplification}
    $\EFI$  with noticeable advantage in statistical distinguishability are equivalent to  $\EFI$  with $1-\negl(\lambda)$ advantage. 
\end{fact}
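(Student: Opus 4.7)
The plan is to amplify a noticeably distinguishable \EFI{} pair $(\rho_0,\rho_1)$ into one that is $(1-\negl(\lambda))$-distinguishable by taking polynomially many independent copies, i.e.\ outputting $\rho_b^{\otimes t}$ for a suitable $t = t(\lambda) \in \poly(\lambda)$. The reverse direction is trivial because $1-\negl$ advantage is in particular noticeable.

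For the statistical amplification step I would use multiplicativity of fidelity under tensor products: $F(\rho_0^{\otimes t},\rho_1^{\otimes t}) = F(\rho_0,\rho_1)^t$. Because $\frac{1}{2}\onenorm{\rho_0}{\rho_1}$ is noticeable, Fact~\ref{fact:fuchs} gives a noticeable gap $\Delta_B(\rho_0,\rho_1) \geq \delta(\lambda)$ with $\delta(\lambda) \geq 1/q(\lambda)$ for some polynomial $q$; equivalently $F(\rho_0,\rho_1) \leq 1-\delta^2$. Choosing $t(\lambda) = \lambda \cdot q(\lambda)^2$ then yields $F(\rho_0^{\otimes t},\rho_1^{\otimes t}) \leq (1-\delta^2)^t \leq e^{-\lambda}$, and another application of Fact~\ref{fact:fuchs} gives $\frac{1}{2}\onenorm{\rho_0^{\otimes t}}{\rho_1^{\otimes t}} \geq 1-2e^{-\lambda/2} = 1-\negl(\lambda)$.

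For the computational indistinguishability step I would use a standard hybrid argument. Define the $t+1$ hybrid states $H_i \defeq \rho_0^{\otimes i}\otimes \rho_1^{\otimes (t-i)}$, so $H_0 = \rho_1^{\otimes t}$ and $H_t = \rho_0^{\otimes t}$. Suppose for contradiction some non-uniform $\QPT$ adversary $A$ distinguishes $H_0$ from $H_t$ with non-negligible advantage $\eps(\lambda)$. By a standard averaging argument there exists $i^* \in \{0,\dots,t-1\}$ with $\bigl|\Pr[A(H_{i^*+1})=1]-\Pr[A(H_{i^*})=1]\bigr| \geq \eps/t$, which is still non-negligible since $t\in\poly(\lambda)$. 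Construct a new non-uniform $\QPT$ adversary $A'$ that on input a single sample $\rho_b$ uses $\gen$ (which is $\QPT$) to prepare $i^*$ copies of $\rho_0$ and $t-i^*-1$ copies of $\rho_1$ internally, appends the input as the $(i^*+1)$-st register, and runs $A$ on the result; the index $i^*$ is hard-wired as polynomial-size classical advice. Then $A'$ distinguishes $\rho_0$ from $\rho_1$ with non-negligible advantage, contradicting $\rho_0 \approx_{\negl_C}\rho_1$. Hence $\rho_0^{\otimes t}\approx_{\negl_C}\rho_1^{\otimes t}$.

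There is no real obstacle here: both steps are textbook. The only mild point to be careful about is that the hybrid-based reduction relies on being able to efficiently prepare additional samples of $\rho_0$ and $\rho_1$, which is guaranteed since $\gen$ is $\QPT$ in the definition of \EFI{}, and that the non-uniform model comfortably accommodates hard-coding the index $i^*$.
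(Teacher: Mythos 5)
Your proposal is correct and follows the standard parallel-repetition amplification used in the cited work: tensoring $t\in\poly(\lambda)$ copies, using multiplicativity of fidelity together with the Fuchs--van de Graaf inequalities (Fact~\ref{fact:fuchs}) for the statistical gap, and the hybrid argument (which is exactly the paper's own Fact~\ref{fact:indistinguishability_multiple_samples}, relying on $\gen$ being $\QPT$ and hard-wiring $i^*$ as non-uniform advice) for preserving computational indistinguishability. The only cosmetic slip is the final constant in the trace-distance bound, which should come out as $1-e^{-\lambda}$ directly from $\Delta_B^2\leq\frac{1}{2}\onenorm{\cdot}{\cdot}$; this does not affect correctness.
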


\begin{definition}[$s^*$-imbalanced EFI~\cite{KT}] \label{def:imEFI}Let $s^*(\lambda): \mathbb{N} \rightarrow \mathbb{N}$ be a function. An $s^*$-imbalanced $\EFI$  is a $\QPT$ algorithm
EFI$_s(1^\lambda
, b) \rightarrow \rho_b(s)$ that obtains advice string $s$ and, on input $b \in \{0, 1\}$ and
security parameter $\lambda$, outputs a state $\rho_b(s)$ such that:
\begin{enumerate}
    
    \item For all $s\leq s^*(\lambda)$: $\rho_0(s) \approx_{\negl_C} \rho_1(s)$. 
    \item For all $s\geq s^*(\lambda)$: $\rho_0(s)$ and $\rho_1(s)$ are statistically distinguishable with noticeable advantage. That is  $\frac{1}{2}\onenorm{\rho_0(s)}{\rho_1(s)}$ is a noticeable function in $\lambda$.
\end{enumerate}

\end{definition}
\begin{fact}[\cite{KT}]\label{fact:imbalanced_EFI_to_EFI}
Let $s^*(\lambda) \in \poly(\lambda)$. Then $s^*$-imbalanced $\EFI$  imply quantum commitments. 
\end{fact}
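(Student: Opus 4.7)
The plan is to observe that the pair $(\rho_0(s^*), \rho_1(s^*))$ produced by the imbalanced $\EFI$ at the threshold $s = s^*$ is already a valid $\EFI$ pair: it is computationally indistinguishable because $s^* \le s^*$ triggers the first condition of Definition~\ref{def:imEFI}, and it is statistically distinguishable with noticeable advantage because $s^* \ge s^*$ triggers the second condition. Since quantum commitments are equivalent to $\EFI$ (\cite{brakerski2022computational,EFI}), producing such an $\EFI$ pair immediately yields the desired commitment scheme via the standard $\EFI$-to-commitment transformation (send $\rho_b(s^*)$ in the commit phase, the generating randomness in the reveal phase).

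First I would address the fact that the reduction does not know $s^*$. Because $s^*(\lambda) \in \poly(\lambda)$, specifying $s^*(\lambda)$ costs only $O(\log\lambda)$ bits, which is naturally supplied as non-uniform advice to the commitment scheme. This is consistent with the rest of the paper, where all security statements are stated against non-uniform $\QPT$ adversaries (cf.\ Definition~\ref{def:OWSG} and the surrounding discussion). With advice $s = s^*(\lambda)$, the commitment scheme invokes $\EFI_s(1^\lambda,b)$ at that $s$, obtaining $\rho_b(s^*)$; hiding follows from computational indistinguishability at $s = s^*$ and binding follows from the noticeable statistical distinguishability at $s = s^*$, followed if desired by the standard amplification to overwhelming advantage via Fact~\ref{fact:EFI_amplification}.

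If one instead wanted to avoid non-uniform advice, one would enumerate $s$ over $\{1,\ldots,p(\lambda)\}$ for a polynomial $p(\lambda) \ge s^*(\lambda)$, obtain $p(\lambda)$ candidate schemes, and combine them so that hiding is inherited from the (unknown) component at $s = s^*$ that is computationally indistinguishable, while binding is inherited from the same component which is also statistically distinguishable. The main obstacle is that the two naive combinations each fail in isolation: XOR secret-sharing of the committed bit across the $p(\lambda)$ instances propagates hiding (the share at $s = s^*$ is computationally hidden, hence the XOR is hidden) but fails binding, since the sender can equivocate on any share whose underlying commitment is not statistically binding (for instance at any $s < s^*$); dually, committing to $b$ in parallel across all instances propagates binding but fails hiding. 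Threading both properties requires a carefully nested layout of the two tricks, together with a polynomial-loss hybrid argument verifying that the noticeable gap and the negligible distinguishing advantage both survive; this is the technical heart of the argument in \cite{KT}, and once in place, one concludes by invoking Fact~\ref{fact:EFI_amplification} and the $\EFI$-to-commitment equivalence.
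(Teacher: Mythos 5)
The paper does not prove this statement; it is cited directly from \cite{KT}, so there is no ``paper's own proof'' to match against, only that reference.

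Your first route, hard-coding $s^*(\lambda)$ as non-uniform advice to the commitment scheme, proves a strictly weaker statement than the fact asserts, and the justification offered for it is a category error. Non-uniformity of the \emph{adversary} (which is what Definition~\ref{def:OWSG} and the surrounding discussion concern) and non-uniformity of the honest \emph{scheme} point in opposite directions: allowing a non-uniform adversary strengthens the security requirement, while allowing a non-uniform scheme weakens the conclusion. The paper is careful to keep its final constructions uniform --- for instance, in Section~\ref{sec:OWSGtoEFI} the non-uniformly built registers $J,B$ are explicitly traced out so that the output of Figure~\ref{fig:EFI} is a uniform $\QPT$ algorithm --- and the definitions of $\EFI$ and commitments are likewise uniform $\QPT$. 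When Fact~\ref{fact:imbalanced_EFI_to_EFI} is invoked inside Theorem~\ref{theorem:EFI}, the threshold $k^*(\lambda)=S^\eps_\infty(\rho_1(\tau)^{\otimes t})$ is not efficiently computable, so a reduction that simply hard-codes $s^*$ does not give the paper the uniform object it needs. ``Consistent with the rest of the paper'' is therefore not a justification here.

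Your second route correctly identifies the crux --- enumerate $s\in[p(\lambda)]$ and build a combiner --- and the diagnosis of why the two naive compositions fail (XOR keeps hiding and loses binding; parallel keeps binding and loses hiding) is right. But the proposal stops exactly where the proof should begin: you do not give the combining construction or the accompanying hybrid/amplification argument, and instead point back to \cite{KT} for ``the technical heart.'' It is also not clear that the fix is literally a ``nested layout of the two tricks'': naive nestings (XOR over a prefix or suffix, then parallel over the result, or vice versa) collapse back to the same threshold-at-$s^*$ structure rather than resolving it, so the actual argument needs a genuinely different idea, not just nesting. As written the proposal is a restatement of the problem plus a citation, not a proof of the fact.
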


\begin{fact}[\cite{brakerski2022computational}]
    Quantum commitments are equivalent to $\EFI$.
\end{fact}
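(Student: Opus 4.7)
The plan is to establish the equivalence by proving both directions separately, reducing each to the canonical non-interactive (two-message) form of a quantum commitment.

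\textbf{EFI implies commitments.} Given an EFI pair $(\rho_0, \rho_1)$, I would first boost the statistical distinguishability to $1-\negl(\lambda)$ using Fact~\ref{fact:EFI_amplification}. Then define a canonical non-interactive commitment: to commit to $b\in\{0,1\}$, Alice runs the purified $\QPT$ generator producing a pure state $\ket{\psi_b}_{AB}$ with $\Tr_A\ketbra{\psi_b}{\psi_b}_{AB}=\rho_b$, keeps the ancilla $A$, and sends $B$ to Bob; to reveal, Alice sends $b$ and $A$, and Bob checks consistency with the honest generation of $\rho_b$. Hiding is immediate from $\rho_0 \approx_{\negl_C} \rho_1$. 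For binding, any cheating Alice who equivocates must hold a single $A$-register state from which, by applying a local unitary on $A$, she can unveil either $0$ or $1$; by Uhlmann's theorem (Fact~\ref{fact:uhlmann}), her joint success probability is controlled by $F(\rho_0,\rho_1)$, and Fact~\ref{fact:fuchs} together with the amplified statistical distinguishability forces $F(\rho_0,\rho_1)=\negl(\lambda)$, giving binding.

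\textbf{Commitments imply EFI.} For the converse, I would reduce an arbitrary commitment scheme to its canonical two-message form by purifying each party's private actions and deferring all measurements to the reveal phase. In this canonical form, Alice's commit-phase action produces a joint pure state $\ket{\psi_b}_{AB}$ and $B$ is sent to Bob; define $\rho_b \defeq \Tr_A \ketbra{\psi_b}{\psi_b}_{AB}$, which is $\QPT$-generable in $b$. Computational hiding of the commitment translates directly to $\rho_0\approx_{\negl_C}\rho_1$. For the statistical side, suppose for contradiction that $\onenorm{\rho_0}{\rho_1}=\negl(\lambda)$; then by Fact~\ref{fact:fuchs} the fidelity $F(\rho_0,\rho_1)$ is $1-\negl(\lambda)$, and Uhlmann's theorem yields a local isometry on $A$ mapping $\ket{\psi_0}$ to a state negligibly close to $\ket{\psi_1}$. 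A cheating Alice can therefore commit to $0$ honestly and, at reveal time, apply this isometry to unveil $1$ with probability $1-\negl(\lambda)$, violating binding. Hence $\frac{1}{2}\onenorm{\rho_0}{\rho_1}$ must be noticeable, exhibiting an EFI pair.

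\textbf{Main obstacle.} The most delicate step is the reduction of a general interactive commitment protocol to the canonical two-message non-interactive form: one must purify every round for each party, introduce ancillary registers for private randomness and intermediate measurements, and verify that both binding and hiding survive this transformation up to negligible loss. Once the canonical form is in hand, the rest of the argument is a clean application of Uhlmann's theorem (Fact~\ref{fact:uhlmann}) together with the Fuchs--van de Graaf inequalities (Fact~\ref{fact:fuchs}), making the translation between the two primitives essentially syntactic.
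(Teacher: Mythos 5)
The paper does not prove this statement; it is imported as a black-box citation from~\cite{brakerski2022computational}, so there is no internal proof to compare against. Your sketch follows the standard argument from the cited work (and from Yan's ``General Properties of Quantum Bit Commitments''): both directions pass through the canonical non-interactive purified form, hiding translates to computational indistinguishability of the Bob-side marginals $\rho_0,\rho_1$, and binding translates to a statistical-distance bound via Uhlmann's theorem (Fact~\ref{fact:uhlmann}) and Fuchs--van de Graaf (Fact~\ref{fact:fuchs}). You also correctly flag the reduction of an arbitrary interactive commitment to canonical two-message form as the genuinely delicate step; that reduction is the main technical content of the references and is not a triviality. One small logical gap worth noting: in the commitments-to-EFI direction you conclude from ``$\onenorm{\rho_0}{\rho_1}$ is not negligible'' that it is noticeable, but these are not equivalent --- a function can be neither (e.g., alternating between $\negl(\lambda)$ and $1/\lambda$). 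The cited work handles this either by working with a binding notion that directly forces a noticeable gap for all sufficiently large $\lambda$, or by an explicit amplification; your sketch should say which route it takes. Otherwise the argument is the right one.
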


\begin{fact}[\cite{Goldreich_2001}]\label{fact:hardcore_classical}
Let $f$ be a classical one-way function. Define $h(x,r)=(f(x),r)$ where $x\in\{0,1\}^n$ and $r\in\{0,1\}^{2n}$. For any constant $c>0$, there exists an efficiently computable hardcore function $g(x,r)$, with $|g(x,r)|=c\log(n)$, such that $g(x,r)$ is a hardcore function of $h$. In other words, 
\[
f(U_n) R ~g(U_n,R)\approx_{\negl_C} f(U_n)\otimes U_{2n}\otimes U_{c\log(n)}.
\]
    \end{fact}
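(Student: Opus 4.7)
The plan is to reduce to the single-bit Goldreich--Levin theorem, which asserts that for any $\OWF$ $f$ and uniform $r' \in \{0,1\}^n$, the inner product $\langle x, r' \rangle \pmod 2$ is a single-bit hardcore predicate for $h_1(x, r') = (f(x), r')$. Setting $k \defeq c \log n$, I would fix the following construction: since $2n \geq n + k - 1$ for sufficiently large $n$, interpret the first $n + k - 1$ bits of $r \in \{0,1\}^{2n}$ as the generating sequence of a $k \times n$ Toeplitz matrix $T_r$ over $\bbF_2$, and define $g(x, r) \defeq T_r\, x \pmod 2 \in \{0,1\}^k$. This is efficiently computable, and each row of $T_r$ is a length-$n$ contiguous substring of the generating sequence, hence marginally uniform in $\{0,1\}^n$.

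For the hardness argument, let $D$ be any non-uniform $\QPT$ distinguisher between $(f(U_n), R, g(U_n, R))$ and $(f(U_n), R, U_k)$ with advantage $\varepsilon$. I would introduce hybrids $H_i$ ($0 \leq i \leq k$) that output $(\langle x, t_1 \rangle, \ldots, \langle x, t_i \rangle, u_{i+1}, \ldots, u_k)$, where $t_j$ is the $j$-th row of $T_r$ and the $u_j$ are independent uniform bits. Since $H_0$ matches the ``random'' distribution and $H_k$ matches the ``real'' distribution, a pigeonhole step yields some $i^*$ with $|\Pr[D(H_{i^*}) = 1] - \Pr[D(H_{i^* + 1}) = 1]| \geq \varepsilon/k$.

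The final step is to convert this hybrid-distinguisher into a single-bit GL distinguisher, thereby contradicting GL. Given a GL instance $(f(x), r', b)$, sample $r$ uniformly conditioned on the $(i^* + 1)$-th Toeplitz row equalling $r'$, sample fresh uniform bits $u_{i^* + 2}, \ldots, u_k$, and enumerate over all $2^{i^*} \leq n^c$ possible prefixes $(z_1, \ldots, z_{i^*}) \in \{0,1\}^{i^*}$, running $D$ on $(f(x), r, z_1, \ldots, z_{i^*}, b, u_{i^* + 2}, \ldots, u_k)$ for each. The correct prefix $z_j = \langle x, t_j \rangle$ contributes advantage $\varepsilon/k$ to distinguishing; incorrect prefixes produce $b$-independent noise that averages out. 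An averaging / majority extraction then yields a GL distinguisher with advantage at least $\varepsilon / (k \cdot n^c) = 1/\poly(n)$, which contradicts the single-bit hardcore theorem. The main obstacle is precisely this simulation: we must reduce a $k$-bit hybrid distinguishing problem to a single-bit GL distinguishing problem without knowing $x$. The key enabling fact is that $k = O(\log n)$ keeps the number of unknown prefixes polynomial, so the loss from enumeration remains polynomial; the correlations between Toeplitz rows conditioned on one fixed row are harmless because marginally each row is uniform and the simulation uses the same random Toeplitz structure consistently across all invocations of $D$.
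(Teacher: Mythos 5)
Your construction (Toeplitz rows as Goldreich--Levin masks) is exactly the textbook one, and the hybrid argument correctly isolates an index $i^*$ where $D$ distinguishes $H_{i^*}$ from $H_{i^*+1}$ with advantage $\varepsilon/k$. The gap is in the last step. The claim that ``incorrect prefixes produce $b$-independent noise that averages out'' is not justified and is generally false: for a \emph{wrong} prefix $z$, the input $(f(x), r, z, b, u_{i^*+2},\dots,u_k)$ handed to $D$ still has $b$ in the $(i^*{+}1)$-st slot, so $D$'s output there can depend on $b$ arbitrarily. Since $D$ is an adversarial distinguisher, its signed advantage on wrong-prefix inputs can be negative and cancel the $\varepsilon/k$ contribution of the correct prefix. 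There is no structural reason for the sum over all $2^{i^*}$ prefixes to retain a $1/\poly(n)$ bias, and a majority vote over enumerated prefixes does not fix this, because the reduction has no way to recognize the correct prefix. In short, enumeration gives you $\poly(n)$ \emph{candidates}, not a way to aggregate them into a predictor.

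The standard way to close this gap (and what the cited source does) is to avoid the prefix entirely by passing through the computational Vazirani XOR lemma: a distinguisher with advantage $\varepsilon$ for the $k$-bit output yields, for some non-empty $S \subseteq [k]$, a predictor for $\bigoplus_{i\in S} g_i(x,r)$ given $(f(x), r)$ with advantage $\varepsilon/\poly(2^k) = \varepsilon/\poly(n)$; one then observes that $\bigoplus_{i\in S} g_i(x,r) = \langle x, \bigoplus_{i\in S} t_i(r)\rangle$ and that, for the Toeplitz construction, \emph{every non-empty XOR of rows} (not just each row, which is all you invoke) is uniformly distributed over $\{0,1\}^n$ when $r$ is uniform. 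With $S$ supplied as non-uniform advice, this contradicts the single-bit Goldreich--Levin theorem. So: correct construction, correct hybrid setup, but the reduction to single-bit GL is missing the XOR lemma (or, alternatively, a direct use of the GL list-decoding reconstruction of $x$), and the ``noise averages out'' step as stated does not hold.
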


   \begin{fact}[Quantum hardcore function]\label{fact:quantum_hardcore}
    Let $X \in \{0,1\}^n, m \in \poly(n)$ and $\tau^{XQ^m}$ be an $m$-copy one-way state (see Definition~\ref{def:OWSG}). For any constant $c>0$, there exists an efficiently computable hardcore function $g(x,r)$ for $|r|=2n$ and $|g(x,r)|=c\log(n)$ such that \[
Q^m R~ g(X,R)_\tau\approx_{\negl_C} Q^m_\tau\otimes U_{2n}\otimes U_{c\log(n)}.
\]
\end{fact}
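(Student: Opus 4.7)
The plan is to adapt the classical proof of Fact~\ref{fact:hardcore_classical} to the quantum setting, using the $m$ copies $\phi_x^{\otimes m}$ as (non-clonable) quantum auxiliary information for the distinguisher in place of the classical output $f(x)$. I take the same hardcore function $g$ as is produced by the classical theorem; concretely, one may think of $g(x,r)$ as consisting of $c\log(n)$ pairwise independent inner products $\langle x, r^{(i)}\rangle$, where each $r^{(i)}$ is a linear function of $r\in\{0,1\}^{2n}$.

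I argue security by contradiction. Suppose there is a non-uniform $\QPT$ distinguisher $D$ with noticeable advantage $\eps(\lambda)=1/\poly(\lambda)$ between $Q^m R~g(X,R)_\tau$ and $Q^m_\tau \otimes U_{2n} \otimes U_{c\log(n)}$. A standard hybrid argument over the $c\log(n)$ output bits of $g$ produces a non-uniform $\QPT$ predictor for some individual bit $\langle X, r^{(i)}\rangle$ of $g(X,R)$, given $\phi_x^{\otimes m}$, the seed $R$, and the preceding bits, with advantage $\Omega(\eps/\log n)$. The pairwise independent structure of the $r^{(i)}$'s allows one to resample the preceding bits consistently from $r^{(i)}$ alone, yielding a non-uniform $\QPT$ predictor $P^\prime(\phi_x^{\otimes m}, r)$ that, on a uniform $r\in\{0,1\}^n$, guesses $\langle X, r\rangle$ with noticeable advantage over $1/2$. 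Feeding $P^\prime$ into the Goldreich--Levin inversion reduction yields a non-uniform $\QPT$ algorithm $A$ that, given sufficiently many copies of $\phi_x$, outputs $x^\prime$ with $\Pr[x^\prime=X]$ noticeable. By correctness of $\ver$, this gives $\Pr[\tang\leftarrow\ver(x^\prime,\phi_x)]$ noticeable, contradicting the one-wayness of the $\OWSG$.

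The main obstacle is quantum no-cloning: unlike classical advice, $\phi_x^{\otimes m}$ cannot be freely duplicated, whereas the classical Goldreich--Levin reduction invokes its predictor $\poly(\lambda)$ times with the same auxiliary information. I would handle this by allowing the reduction to consume a fresh batch of $m$ copies of $\phi_x$ per invocation of $P^\prime$, so that the total number of copies used is $m^\prime = m \cdot \poly(\lambda) \in \poly(\lambda)$; since the $\OWSG$ hypothesis applies for any polynomial number of copies (Definition~\ref{def:OWSG} explicitly takes $m$ to be polynomial and the statement of Fact~\ref{fact:quantum_hardcore} quantifies over polynomial $m$), this suffices. With this adjustment, the classical hybrid-plus-Goldreich--Levin argument carries over essentially verbatim, since the reduction treats the quantum advice only as a black-box input to the predictor and never needs to inspect or copy it.
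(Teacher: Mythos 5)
The central gap is in your handling of quantum no-cloning. You propose giving the Goldreich--Levin inverter $m \cdot \poly(\lambda)$ copies of $\phi_x$ (a fresh batch of $m$ per predictor invocation) and justify this by claiming ``the $\OWSG$ hypothesis applies for any polynomial number of copies.'' This is a misreading of the hypothesis. Fact~\ref{fact:quantum_hardcore} does quantify over $m \in \poly(n)$, but for a \emph{fixed} $m$ the assumption is only that $\tau^{XQ^m}$ is an $m$-copy one-way state: Definition~\ref{def:OWSG} grants the adversary exactly $\phi_x^{\otimes m}$ (plus one more copy for $\ver$), not $\phi_x^{\otimes m\cdot\poly(\lambda)}$. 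In the paper's main application $m = \Theta(n/\log n)$, so an inverter that consumes $m \cdot \poly(\lambda)$ copies falls entirely outside the security guarantee and yields no contradiction. Allowing the reduction to request more copies than the hypothesis provides would, in fact, trivialize the whole ``how few copies suffice'' question that the paper is about.

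The paper's sketch resolves exactly this by invoking a \emph{quantum} Goldreich--Levin theorem (``Goldreich--Levin with quantum advice''), whose entire point is that the inverter queries the predictor \emph{coherently}: it uses the predictor circuit and its inverse in an amplitude-amplification-style loop, so the quantum auxiliary state $\phi_x^{\otimes m}$ sits in one coherent register throughout and is never recopied. Classically GL must invoke the predictor $\poly(\lambda/\eps)$ times on independent seeds, each of which would eat a fresh copy of the advice; quantumly the same task is done with $O(1)$ coherent uses of $U_P, U_P^\dagger$ and therefore a single copy of the advice. That substitution --- not resampling the advice --- is the crucial step that lets the classical hybrid-plus-GL template go through within the $m$-copy budget, and it is the step your proposal is missing. (The rest of your outline --- choosing $g$ to be $c\log n$ pairwise-independent inner products, the Yao next-bit hybrid down to a single-bit predictor, guessing the $O(\log n)$ preceding bits at a polynomial loss in advantage --- is in the right spirit and matches the classical argument the paper appeals to.)
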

\begin{proof}(Sketch)
    The proof of Fact \ref{fact:hardcore_classical} reduces to the existence of hardcore predicates for classical one-way functions.  Goldreich-Levin with quantum advice (\cite{quantum-Goldreich-Levin}) gives the existence of hardcore predicate for quantum one-way functions also and hence, the proof of this claim follows along similar lines as Fact \ref{fact:hardcore_classical}.  
\end{proof}
  
\begin{fact} \label{fact:indistinguishability_multiple_samples}
    Let $\rho_0$ and $\rho_1$ be states generatable by a non-uniform $\QPT$ algorithm. If  $\rho_0\approx_{\negl_C}\rho_1$, then $\rho_0^{\otimes t}\approx_{\negl_C}\rho_1^{\otimes t}$ for any $t \in \poly(\lambda)$ where  $\lambda$ is the security parameter.
\end{fact}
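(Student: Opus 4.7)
The plan is to use a standard hybrid argument. Suppose for contradiction that the conclusion fails, so there is a non-uniform $\QPT$ adversary $A$ and a noticeable function $\eps(\lambda)$ such that
\[
\bigl|\Pr[A(1^\lambda,\rho_0^{\otimes t})=1]-\Pr[A(1^\lambda,\rho_1^{\otimes t})=1]\bigr| \geq \eps(\lambda)
\]
for infinitely many $\lambda$, where $t=t(\lambda)\in\poly(\lambda)$. Define the hybrid states $H_i \defeq \rho_0^{\otimes (t-i)}\otimes \rho_1^{\otimes i}$ for $i\in\{0,1,\dots,t\}$, so that $H_0=\rho_0^{\otimes t}$ and $H_t=\rho_1^{\otimes t}$.

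By the triangle inequality applied to the distinguishing advantages, there exists (for each such $\lambda$) an index $i^*=i^*(\lambda)\in\{0,\dots,t-1\}$ with
\[
\bigl|\Pr[A(1^\lambda,H_{i^*})=1]-\Pr[A(1^\lambda,H_{i^*+1})=1]\bigr| \geq \frac{\eps(\lambda)}{t(\lambda)}.
\]
Since $t\in\poly(\lambda)$ and $\eps$ is noticeable, $\eps/t$ is also noticeable. Next I would construct a non-uniform $\QPT$ adversary $A'$ that distinguishes $\rho_0$ from $\rho_1$: on input $\rho_b$, $A'$ uses its advice string to (i) pick the index $i^*$ (which is a classical string of length $O(\log \lambda)$), (ii) run the non-uniform $\QPT$ circuits that generate $\rho_0$ and $\rho_1$ to produce $t-i^*-1$ fresh copies of $\rho_0$ and $i^*$ fresh copies of $\rho_1$, and (iii) arrange these together with $\rho_b$ as the $(t-i^*)$-th register and feed the resulting $t$-fold state to $A$. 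When $b=0$ the input to $A$ is distributed as $H_{i^*}$ and when $b=1$ it is distributed as $H_{i^*+1}$.

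Since $t\in\poly(\lambda)$, the preparation stage of $A'$ is $\QPT$, and its polynomial advice consists of $i^*$ together with the advice strings (and circuit descriptions) of $A$ and of the generators for $\rho_0,\rho_1$; thus $A'$ is a valid non-uniform $\QPT$ adversary. Its distinguishing advantage between $\rho_0$ and $\rho_1$ is exactly $\eps(\lambda)/t(\lambda)$, which is noticeable, contradicting $\rho_0\approx_{\negl_C}\rho_1$. Hence $\rho_0^{\otimes t}\approx_{\negl_C}\rho_1^{\otimes t}$.

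The only mildly subtle point is the need for $\rho_0$ and $\rho_1$ to be generatable by a non-uniform $\QPT$ algorithm; this is precisely the hypothesis, and is essential so that $A'$ can itself prepare the needed auxiliary copies inside a $\QPT$ reduction. Everything else is routine hybrid bookkeeping, with no computational bottleneck.
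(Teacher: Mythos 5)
Your proof is correct and follows essentially the same hybrid argument as the paper: define $H_i = \rho_0^{\otimes(t-i)}\otimes\rho_1^{\otimes i}$, locate an adjacent pair with noticeable gap via the triangle inequality, hardwire that index as advice, and have the reduction synthesize the surrounding copies using the non-uniform $\QPT$ generators for $\rho_0,\rho_1$. One tiny wording nit: the distinguishing advantage of $A'$ is \emph{at least} $\eps(\lambda)/t(\lambda)$ on the relevant infinitely many $\lambda$, not ``exactly,'' but this does not affect the argument.
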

\begin{proof}
This follows using a standard hybrid argument. Assume for contradiction that there exists a non-uniform $\QPT$ adversary $A$ such that for infinitely many $\lambda$ and a positive polynomial $p(\cdot)$, 
  \[\Big\vert\Pr[A(1^\lambda,\rho_0^{\otimes t})=1]-\Pr[A(1^\lambda,\rho_1^{\otimes t})=1] \Big\vert \geq \frac{1}{p(\lambda)}.\] 
We define a series of hybrids $H_0, H_1 \dots H_t$ where $H_i=\rho_0^{\otimes t-i}\otimes\rho_1^{\otimes i}$. Then \begin{align*}
       \frac{1}{p(\lambda)} &\leq \Big\vert\Pr[A(1^\lambda,\rho_0^{\otimes t})=1]-\Pr[A(1^\lambda,\rho_1^{\otimes t})=1]\Big\vert \\&=\Big\vert\Pr[A(1^\lambda,H_0)=1]-\Pr[A(1^\lambda,H_t)=1]\Big\vert
       \\&\leq \sum_{i\in[t]}\Big\vert\Pr[A(1^\lambda,H_{i-1})=1]-\Pr[A(1^\lambda,H_i)=1]\Big\vert. &\mbox{(By triangle-inequality)} 
  \end{align*}
Thus, there exists a $j\in[t]$ for which \begin{equation}
    \label{eq:hybrids}
\Big\vert\Pr[A(1^\lambda,H_{j-1})=1]-\Pr[A(1^\lambda,H_j)=1]\Big\vert\geq \frac{1}{p(\lambda)\cdot t}\geq \frac{1}{q(\lambda)},\end{equation} for some positive polynomial $q(\cdot)$ because $t\in\poly(\lambda)$.

  Using non-uniform $\QPT$ adversary $A$, we now create a non-uniform adversary $A^\prime$ to distinguish between $\rho_0$ and $\rho_1$ as follows:
  
  \textbf{$A^\prime(1^\lambda, \rho_c)$:}
  \begin{itemize}
  \item $A^\prime$  takes as input  a state $\rho_c$ for $c\in\{0,1\}$ along with the security parameter $\lambda$.
      \item $A^\prime$  takes as (non-uniform) advice $j\in [t]$. 
      \item $A^\prime$ returns $A(1^\lambda,\rho_0^{\otimes t-j}\otimes \rho_c\otimes \rho_1^{\otimes j-1})$.
  \end{itemize} 
 Since $\rho_0$ and $\rho_1$ are generatable by non-uniform $\QPT$ algorithms, adversary $A^\prime$ is also non-uniform $\QPT$. From eq. \ref{eq:hybrids}, we get that for infinitely many $\lambda$
  \[
   \Big\vert\Pr[A^\prime(1^\lambda,\rho_0)=1]-\Pr[A^\prime(1^\lambda,\rho_1)=1]\Big\vert=\Big\vert\Pr[A(1^\lambda,H_{j-1})=1]-\Pr[A(1^\lambda,H_j)=1]\Big\vert\geq \frac{1}{q(\lambda)}.
  \]
This contradicts the fact that $\rho_0\approx_{\negl_C}\rho_1$. Hence, the desired statement follows.   
\end{proof}
{
\begin{fact} \label{fact:indistinguishability_convex_combination}
    Let $\sigma$ be a state generatable by a non-uniform $\QPT$ algorithm. Let $\rho_0$ and $\rho_1$ be states such that $\rho_0\approx_{\negl_C}\rho_1$. Let \[
    \kappa_0=p\sigma+(1-p)\rho_0~,~  \kappa_1=p\sigma+(1-p)\rho_1.
    \] 
    For any $p\in [0,1]$,  $\kappa_0 \approx_{\negl_C} \kappa_1.$
\end{fact}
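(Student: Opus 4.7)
The plan is to prove this directly by linearity, without invoking any reduction or hybrid argument. The key observation is that, for any (non-uniform) QPT adversary $A$, the acceptance probability $\Pr[A(1^\lambda, \rho)=1]$ is a linear functional of the input density matrix $\rho$ (this is just the fact that any quantum measurement is linear in the state). Consequently, convex combinations of states pass through the acceptance probability linearly, and I can compute the distinguishing advantage on $(\kappa_0,\kappa_1)$ in terms of the distinguishing advantage on $(\rho_0,\rho_1)$ directly.

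Concretely, I would fix an arbitrary non-uniform QPT adversary $A$ and write, for each $c \in \{0,1\}$,
\[
\Pr[A(1^\lambda,\kappa_c)=1] \;=\; p\,\Pr[A(1^\lambda,\sigma)=1] \;+\; (1-p)\,\Pr[A(1^\lambda,\rho_c)=1].
\]
Subtracting the $c=0$ and $c=1$ equations, the $\sigma$ term cancels and I obtain
\[
\bigl|\Pr[A(1^\lambda,\kappa_0)=1]-\Pr[A(1^\lambda,\kappa_1)=1]\bigr| \;=\; (1-p)\,\bigl|\Pr[A(1^\lambda,\rho_0)=1]-\Pr[A(1^\lambda,\rho_1)=1]\bigr|.
\]
Since $(1-p)\leq 1$ and, by the hypothesis $\rho_0 \approx_{\negl_C} \rho_1$, the quantity on the right is negligible in $\lambda$ for every non-uniform QPT $A$, the left-hand side is negligible in $\lambda$ as well. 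This holds for every non-uniform QPT adversary $A$, which is exactly the statement $\kappa_0 \approx_{\negl_C} \kappa_1$.

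I should remark that the generatability assumption on $\sigma$ plays no role in the argument above; it is included only to align with the convention that states appearing in computational-indistinguishability statements in this paper are produced by some non-uniform QPT procedure, so that $\kappa_0$ and $\kappa_1$ are themselves meaningful as outputs of such procedures. Since the whole proof is a single application of linearity, there is no real obstacle to overcome — the only thing to be careful about is that $p$ is allowed to depend on $\lambda$, but this is harmless because the bound $(1-p)\leq 1$ holds uniformly and the final inequality is pointwise in $\lambda$.
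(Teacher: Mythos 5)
Your proof is correct, and it takes a genuinely different and cleaner route than the paper's. The paper proves this fact by contradiction via a reduction: assuming an adversary $A$ distinguishes $\kappa_0$ from $\kappa_1$ with noticeable advantage, it builds a new adversary $A'$ that, given $\rho_c$, prepares $\sigma$, forms the mixture $p\sigma + (1-p)\rho_c$, and runs $A$ on it. This is why the paper needs $\sigma$ to be generatable by a non-uniform $\QPT$ algorithm (so $A'$ can produce it) and why it needs to supply a polynomial-length binary encoding of $p$ as advice, with a footnote dismissing the exponentially small rounding error. Your proof avoids all of this by exploiting the linearity of $\rho \mapsto \Pr[A(1^\lambda,\rho)=1]$ directly: the same adversary $A$ appears on both sides, the $\sigma$ term cancels exactly, and the factor $(1-p)\leq 1$ is bounded pointwise in $\lambda$. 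As you observe, the generatability hypothesis on $\sigma$ and the encodability of $p$ become irrelevant; your argument is strictly more general and dispenses with the reduction machinery and the rounding footnote. The one thing the paper's phrasing buys is consistency of idiom — all the nearby facts (\ref{fact:indistinguishability_multiple_samples}, \ref{fact:datanegl}) are proved by the same reduction template — but for this particular fact the reduction is unnecessary overhead and your direct computation is the more natural proof.
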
}
\begin{proof}
 Assume for contradiction that there exists a non-uniform $\QPT$ adversary $A$ such that for infinitely many $\lambda$ and a positive polynomial $q(\cdot)$, 
  \begin{equation}
      \label{eq:indis_kappa}\Big\vert\Pr[A(1^\lambda,\kappa_0)=1]-\Pr[A(1^\lambda,\kappa_1)=1] \Big\vert \geq \frac{1}{q(\lambda)}. \end{equation} 
We now create a non-uniform adversary $A^\prime$ as follows:
  
  \textbf{$A^\prime(1^\lambda, \rho_c)$:}
  \begin{itemize}
  \item $A^\prime$  takes as input  a state $\rho_c$ for $c\in\{0,1\}$ along with the security parameter $\lambda$.
  \item $A^\prime$ receives (non-uniform) advice $p$ in the form of its binary encoding in polynomially many bits (in $\lambda$)\footnote{The exponentially small error in the specification of $p$ can be easily incorporated in the proof, however, we do not do that explicitly for the ease of presentation.}.
       \item {$A^\prime$ creates the state $\tau=p\sigma+(1-p)\rho_c=\kappa_c$}
      and returns $A(1^\lambda,\tau)$\footnote{This is equivalent to creating the state $\sigma$ in another register, tossing a $p$-biased coin and outputting $A(1^\lambda,\sigma)$  or $A(1^\lambda,\rho_c)$ depending on the outcome of the coin.}.
  \end{itemize} 
 Since $\sigma$ is generatable by non-uniform $\QPT$ algorithm, adversary $A^\prime$ is also non-uniform $\QPT$. From eq. \ref{eq:indis_kappa}, we get that for infinitely many $\lambda$
  \[
   \Big\vert\Pr[A^\prime(1^\lambda,\rho_0)=1]-\Pr[A^\prime(1^\lambda,\rho_1)=1]\Big\vert=\Big\vert\Pr[A(1^\lambda,\kappa_0)=1]-\Pr[A(1^\lambda,\kappa_1)=1]\Big\vert\geq \frac{1}{q(\lambda)}.
  \]
This contradicts the fact that $\rho_0\approx_{\negl_C}\rho_1$. Hence, the desired statement follows.   
\end{proof}

\begin{fact} \label{fact:datanegl}
Let $\rho_0 \approx_{\negl_C} \rho_1$ and $B$ be a non-uniform $\QPT$ algorithm. Then $B(\rho_0) \approx_{\negl_C} B(\rho_1)$.
\end{fact}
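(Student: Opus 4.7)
\medskip

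\noindent\textbf{Proof plan.} This is a standard closure-under-efficient-post-processing statement for computational indistinguishability, and I would prove it by a direct reduction (contrapositive). The plan is to assume, toward contradiction, that there exists a non-uniform $\QPT$ adversary $A$ and a positive polynomial $p(\cdot)$ such that for infinitely many $\lambda$,
\[
\Big|\Pr[A(1^\lambda, B(\rho_0))=1]-\Pr[A(1^\lambda, B(\rho_1))=1]\Big| \geq \frac{1}{p(\lambda)},
\]
and then to build from $A$ and $B$ a single non-uniform $\QPT$ adversary $A'$ that distinguishes $\rho_0$ from $\rho_1$ with the same advantage, contradicting $\rho_0 \approx_{\negl_C} \rho_1$.

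The construction of $A'$ is the obvious composition: on input $(1^\lambda, \rho_c)$ for $c \in \{0,1\}$, the adversary $A'$ first runs $B$ on $\rho_c$ to obtain $B(\rho_c)$, and then runs $A$ on $(1^\lambda, B(\rho_c))$ and outputs whatever $A$ outputs. First I would observe that $A'$ is indeed non-uniform $\QPT$: the running time is the sum of the running times of $B$ and $A$ (both polynomial in $\lambda$), and the advice string for $A'$ is the concatenation of the advice strings for $A$ and for $B$, which remains of polynomial size. Then by the very definition of $A'$,
\[
\Pr[A'(1^\lambda,\rho_c)=1] = \Pr[A(1^\lambda,B(\rho_c))=1] \quad \text{for } c\in\{0,1\},
\]
so the distinguishing advantage of $A'$ between $\rho_0$ and $\rho_1$ is exactly that of $A$ between $B(\rho_0)$ and $B(\rho_1)$, which by assumption is at least $1/p(\lambda)$ for infinitely many $\lambda$. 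This contradicts $\rho_0 \approx_{\negl_C} \rho_1$ and hence establishes $B(\rho_0) \approx_{\negl_C} B(\rho_1)$.

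There is no real obstacle here; the only point worth spelling out explicitly is the syntactic check that a composition of two non-uniform $\QPT$ algorithms is still a non-uniform $\QPT$ algorithm (both the time bound and the advice length remain polynomial). This makes the proof essentially a one-line reduction, analogous to the reduction already carried out in Fact~\ref{fact:indistinguishability_multiple_samples} and Fact~\ref{fact:indistinguishability_convex_combination}, and requires no new information-theoretic input beyond the definition of computational indistinguishability.
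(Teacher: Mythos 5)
Your proposal is correct and follows essentially the same reduction-by-composition argument as the paper: assume a distinguisher $A$ for $B(\rho_0)$ vs.\ $B(\rho_1)$, compose it with $B$ to get a distinguisher $A'$ for $\rho_0$ vs.\ $\rho_1$, and derive a contradiction. You additionally spell out the (routine but worth-noting) check that the composition remains non-uniform $\QPT$, which the paper leaves implicit.
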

\begin{proof}
    Assume for contradiction that there exists a non-uniform $\QPT$ algorithm $A$ and a positive polynomial $p(\cdot)$ such that for infinitely many $\lambda$,
    \begin{equation*}
        \label{eq:Adisnt}
   \Big\vert\Pr[A(1^\lambda,B(\rho_0))=1]-\Pr[A(1^\lambda, B(\rho_1))=1]\Big\vert \geq \frac{1}{p(\lambda)}.
    \end{equation*}
  We now create a non-uniform adversary $A^\prime$ to distinguish between $\rho_0$ and $\rho_1$ as follows:
  \begin{itemize}
  \item $A^\prime$  takes as input the security parameter $\lambda$ and state $\rho_b$ (for $b \in \{0,1\}$).
      \item $A^\prime$ returns $A(1^\lambda, B(\rho_b))$.
  \end{itemize} 
  Consider,
  \begin{align*}
  \Big\vert\Pr[A^\prime(1^\lambda,\rho_0)=1]-\Pr[A^\prime(1^\lambda, \rho_1)=1]\Big\vert  &=         \Big\vert\Pr[A(1^\lambda,B(\rho_0))=1]-\Pr[A(1^\lambda, B(\rho_1))=1]\Big\vert \\
&\geq \frac{1}{p(\lambda)}.
  \end{align*}
  This is a contradiction. 
\end{proof}

\section{ An $\sOWSG$ implies an $\EFI$}\label{sec:OWSGtoEFI}

\begin{theorem}
    \label{theorem:EFI}
Let $m = \frac{cn}{\log(n)}$ for some constant $c>0$ where $n$ is the key-length of the $\sOWSG$. An $m$-copy $\sOWSG$ (see Definition~\ref{def:OWSG}) implies a $k^*$-imbalanced $\EFI$  (see Definition~\ref{def:imEFI}) for some $k^*(\lambda) \in \poly(\lambda)$. 
\end{theorem}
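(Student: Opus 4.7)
The plan is to implement the HILL-style roadmap sketched in the introduction directly on the cq-state $\tau^{XQ^m}$ of equation~\eqref{eq:one-way-state}, without ever measuring $\phi_x$. The first step is to locate an index $i^* \in \{0,1,\dots,m-1\}$ at which one further copy of $\phi_x$ leaks little $2$-R\'enyi information about $X$. Since $S_2(X|Q^0)_\tau\leq n$, $S_2(X|Q^{i+1})_\tau\leq S_2(X|Q^i)_\tau$ by Fact~\ref{fact:alpha_renyi_inequalities}, and the successive drops telescope to at most $n$, pigeonholing over the $m=cn/\log n$ gaps yields some $i^*$ with
\[
S_2(X|Q^{i^*})_\tau \;-\; S_2(X|Q^{i^*+1})_\tau \;\leq\; n/m \;=\; \tfrac{1}{c}\log n.
\]
Set $\ell \defeq S_2(X|Q^{i^*+1})_\tau$. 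By Fact~\ref{fact:hashing} the extractor applied to $X$ with side information $Q^{i^*+1}$ and output length $\ell$ produces near-uniform bits, while, by the gap above, the same output leaves only $O(\log n)$ residual uncertainty about $X$ given $Q^{i^*}$.

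Next I must flatten $\tau^{XQ^{i^*}}$, because both the Rényi chain rule (Fact~\ref{fact:renyi_chain_rule}) and the translation between $S_2$ and the von Neumann entropy are sharp only on nearly flat states. Following the ``brothers'' idea of~\cite{cor3,AJ22,cor1}, I introduce a classical register $J$ defined non-uniformly by binning the eigenvalues of $\tau^{XQ^{i^*}}$ into dyadic intervals, so that conditioned on each value of $J$ the reduced state is $2$-flat in the sense of~\eqref{eq:2flat_state}; a second classical copy $B$ of the bin label makes $\tau^{XQ^{i^*}JB}$ itself $2$-flat. On this extension $S_\alpha(X|Q^{i^*}JB)_\tau$ is stable across $\alpha$ up to an $O(1)$ slack, and all subsequent extractor and entropy manipulations are performed on the flattened extension while $J,B$ remain analysis-only witnesses that never appear in the final protocol.

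I then append the extractor output $HH(X)$ (length $\ell$) followed by either a quantum-hardcore string $g(X,R)$ of length $c'\log n$ from Fact~\ref{fact:quantum_hardcore} or an independent fresh $U_{c'\log n}$, producing states $\tau_0$ and $\tau_1$. Computational indistinguishability $\tau_0\approx_{\negl_C}\tau_1$ follows from Fact~\ref{fact:quantum_hardcore} combined with Fact~\ref{fact:datanegl}. For the statistical separation I argue, using the Rényi chain rule on the flattened extension and the pigeonhole gap above, that $Q^{i^*}HH(X)$ already pins $X$ down to $O(\log n)$ residual bits; hence appending $g(X,R)$ drops the von Neumann entropy of $\tau_0$ below that of $\tau_1$ by $c'\log n - O(\log n)$, which is $\Omega(\log n)$ once $c'$ is chosen large enough. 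Fact~\ref{fact:fannes} is the bridge from the Rényi bookkeeping to actual von Neumann gaps.

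Finally I trace out the non-uniform registers to obtain $\rho_0=\Tr_{JB}\tau_0$, which is generated by an honest QPT algorithm, and $\rho_1=\Tr_{JB}\tau_1$; by Fact~\ref{fact:datanegl} they remain computationally indistinguishable, and a constant fraction of the entropy gap survives. Taking $t=\poly(\lambda)$ independent copies amplifies the smooth max-/min-entropy gap via Fact~\ref{fact:smooth_entopy_multiple_copies} (using Fact~\ref{fact:alphabit} to bound $\Phi^{-1}$), after which the quantum extractor of Fact~\ref{fact:quantum_extractor} with a tunable seed length $k$ gives, for $k$ small, two outputs $\negl$-close to a fixed uniformly samplable maximally mixed state (so statistically close and still computationally indistinguishable by Facts~\ref{fact:indistinguishability_multiple_samples} and~\ref{fact:indistinguishability_convex_combination}), and for $k\geq k^*(\lambda)$ two outputs whose differing smooth max-entropies force noticeable statistical distance. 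This is the promised $k^*$-imbalanced $\EFI$, which Fact~\ref{fact:imbalanced_EFI_to_EFI} then upgrades to quantum commitments. I expect the main obstacle to be the flattening step: making precise that the non-uniform $(J,B)$-extension is simultaneously $2$-flat, compatible with the specific parameter choice $(\alpha,\beta,\gamma)$ required by Fact~\ref{fact:renyi_chain_rule}, and consistent with the one-wayness of the underlying $\OWSG$ when the analytic registers are discarded.
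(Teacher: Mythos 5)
Your roadmap matches the paper's proof essentially step for step: pigeonhole over the drops of $S_2(X|Q^i)$ to find $i^*$, flatten $XQ^{i^*}$ with an analysis-only classical extension, append an extractor output plus either a quantum-hardcore string or fresh uniform bits, trace out the analysis registers, tensorize, and feed the result through the quantum extractor to obtain the imbalanced $\EFI$. Two details, however, are stated in a way that would not work as written. First, you describe $B$ as ``a second classical copy of the bin label,'' i.e.\ $B=J$. That extension leaves the eigenvalue spectrum of $XQ^{i^*}JB$ identical to that of $XQ^{i^*}J$, so it cannot flatten anything; what is actually needed (and what the paper does) is to append \emph{fresh uniform randomness} in an amount that depends on the bin, namely $B\sim U_{n^{5+c''}-j_k}$ when $J=j_k$, so that each surviving eigenvalue gets padded out to roughly the same magnitude. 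Second, you set the extractor output length $\ell=S_2(X|Q^{i^*+1})_\tau$ and claim Fact~\ref{fact:hashing} then yields near-uniformity; plugging $l'=\ell$ into Fact~\ref{fact:hashing} gives error $2^{-(\ell-\ell)/2}=1$, so a security margin is mandatory. Moreover the margin must be taken relative to the \emph{flattened} side information, as in the paper's choice $l_{i^*}=S_2(X|Q^{i^*+1}JB)_\tau-O(\log n)$, both so that the extractor error is negligible and so that the $O(\log n)$ slacks from $|J|$, the flattening, and Claim~\ref{claim*:difference_of_H_2_small} can all be absorbed by a hardcore output of length $O(\log n)$. These are exactly the ingredients that make the ``main obstacle'' you flag at the end go through; with them inserted your proposal coincides with the paper's argument.
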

\begin{proof}
Let $\ext$ be the extractor from Fact~\ref{fact:hashing} and $\ext_Q$ be the quantum extractor from Fact~\ref{fact:quantum_extractor}. Let $i^*$ be from Claim~\ref{claim:D_close} and the total number of qubits in $\tau^{XQ^{i^*}}$ be at most $n^{c''}$ for some constant $c'' > 0$. Let $g(\cdot)$  be the quantum hardcore function from Fact \ref{fact:quantum_hardcore} with output size  $c_1\log(n)$  where $c_1=14c''+3c^{-1}+100$.  Let $l_{i^*}$ be from Claim \ref{claim:new_hardcore} and $\sigma_0(i,l_{i})$ and $\sigma_1(i,l_{i})$ be from Claim~\ref{claim:final_entropy_diff}.

\begin{figure}[h]
    \centering
\fbox{\begin{minipage}{40em} \textbf{\underline{EFI$_k(1^\lambda,b)$}:}
\begin{enumerate}
\item The algorithm takes as input the security parameter $\lambda$ and a bit $b$. 
\item It has access to an $m$-copy $\sOWSG$ with key-length $n$ that generates the one-way state $\tau^{XQ^m}$.
\item   For all $i\in[m],l\in [n]$ define, 
\begin{equation} \label{eq:tau_0_def}
    \tau_0(i,l)\defeq Q^{i}HH^l(X)Rg(X,R)_\tau.
\end{equation}
\begin{equation*}
\rho_0(\tau) \defeq\sum_{i=1}^m\sum_{l=1}^n \frac{1}{mn}\ketbra{i,l}{i,l}\otimes {\tau_0(i,l)}_{}.
\end{equation*} 
\item Let $n_0\in \poly(\lambda)$ be the number of qubits in $\rho_0(\tau)$ and 
\[t_0=\max(n,\lambda), ~t=t_0^{47+2c''}, ~s_Q(k)=4n_0t-k+O(\log(t_0)).\]
\item If $b=0$, output $\ext_Q(\rho_0(\tau)^{\otimes t},U_{s_Q(k)}).$ \item If $b=1$, output $U_{4n_0t+1}$. 
\end{enumerate}
\end{minipage}}
    \caption{$k^*$-imbalanced $\EFI$ }
    \label{fig:EFI}
\end{figure}
\noindent The $k^*$-imbalanced $\EFI$  is given in Figure~\ref{fig:EFI} where $k^*$ is defined as follows.  Define,
\[\rho_1(\tau) \defeq \sum_{j=1}^m\sum_{l=1}^n
 \mathbb{1}(j\neq i^* \text{ or }l\neq l_{i^*}) \cdot \frac{1}{mn}\ketbra{j,l}{j,l}\otimes\sigma_0(j,l)+\frac{1}{mn}\ketbra{i^*,l_{i^*}}{{i^*},l_{i^*}}\otimes\sigma_1(i^*,l_{i^*}),\]
 \[t_0\defeq\max(n,\lambda), ~t\defeq t_0^{47+2c''},~k^*(\lambda) \defeq S_\infty^\eps({\rho}_1(\tau)^{\otimes t}).\]
  From the construction in eq. \ref{eq:def_sigma0}, for all $i$ and $l$, we have $\sigma_0(i,l)=\tau_0(i,l)$.
 This gives
\[
\rho_0(\tau) =\sum_{j=1}^m\sum_{l=1}^n
 \frac{1}{mn}\ketbra{j,l}{j,l}\otimes\sigma_0(j,l).
\]
From Claim \ref{claim:final_entropy_diff}, since $\sigma_0(i^*,l_{i^*})\approx_{\negl_C}\sigma_1(i^*,l_{i^*})$, {we get $\rho_0(\tau)\approx_{\negl_C}\rho_1(\tau)$ (Fact \ref{fact:indistinguishability_convex_combination}).} Consider,
\begin{align}
\label{eq:before_repetition_extractor}
S(\rho_1(\tau))-S(\rho_0(\tau)) & = \frac{1}{mn}\left[S(\sigma_1(i^*,l_{i^*}))-S(\sigma_0(i^*,l_{i^*}))\right] & \mbox{(Fact~\ref{fact:chainentropy})} \nonumber\\
&\geq \frac{48\log(n)-4}{n^{22+c''}}. & \mbox{(Claim~\ref{claim:final_entropy_diff})}
\end{align}
 Using Fact \ref{fact:smooth_entopy_multiple_copies}  and Fact \ref{fact:alphabit} and setting $\eps=2^{-\log^2(\lambda)}$, we get   
\[S_0^\eps(\rho_0(\tau)^{\otimes t})\leq t\cdot S(\rho_0(\tau))+O(\sqrt{tn})\log(\lambda),\] 
\[S_\infty^\eps({\rho}_1(\tau)^{\otimes t})\geq t\cdot S({\rho}_1(\tau))-O(\sqrt{tn})\log(\lambda).\] 
Since $t_0=\max(n,\lambda)$ and $t=t_0^{47+2c''}$, we get \begin{equation}
    \label{eq:support_difference}
S_\infty^\eps({\rho}_1(\tau)^{\otimes t})-S_0^\eps(\rho_0(\tau)^{\otimes t})\geq t_0^{25+c''}\cdot {(48\log(n)-4)}-O(t_0^{24+c''}\log(t_0))\geq t_0^{25+c''}.\end{equation}
 Note that \begin{align*}
    k^*(\lambda)&=  S_\infty^\eps({\rho}_1(\tau)^{\otimes t}) \\
    &\leq n_0t \in \poly(\lambda). &\mbox{(Fact~\ref{fact:alpha_renyi_inequalities})} 
    \end{align*} 
 Define,
\[\widetilde{\sigma}_0(k,\tau) \defeq \ext_Q(\rho_0(\tau)^{\otimes t},U_{s_Q(k)}),
\]
\[\widetilde{\sigma}_1(k,\tau) \defeq \ext_Q(\rho_1(\tau)^{\otimes t},U_{s_Q(k)}).
\]
\begin{enumerate}
    \item  When $k\leq k^*(\lambda)$:

From Fact \ref{fact:quantum_extractor}, we see that if $k\leq k^*(\lambda)$, we get
\begin{equation}  \label{eq:def_sigma_1}\widetilde{\sigma}_1(k,\tau)=\ext_Q({\rho}_1(\tau)^{\otimes t},U_{s_Q(k)})\approx_{\negl(\lambda)}U_{4n_0t+1}.
\end{equation}
Since $\rho_0(\tau)\approx_{\negl_C}\rho_1(\tau)$, Fact \ref{fact:indistinguishability_multiple_samples}, Fact~\ref{fact:datanegl} and eq. \ref{eq:def_sigma_1}  gives us $\widetilde{\sigma}_0(k,\tau)\approx_{\negl_C}U_{4n_0t+1}$ for all $k\leq k^*(\lambda)$. 

  \item  When $k\geq k^*(\lambda)$:
  
From eq. \ref{eq:support_difference},  $S_0^\eps(\rho_0(\tau)^{\otimes t})\leq k^*(\lambda)-t_0^{25+c''}$. Using Fact \ref{fact:quantum_extractor}, we get \[ S_0^\eps(\widetilde{\sigma}_0(k,\tau))\leq k^*(\lambda)+ 4n_0t-k+O(\log(t_0))-t_0^{25+c''}\leq  4n_0t-t_0^{25+c''}+O(\log(t_0)).\]

{This means there exists a state $\widetilde{\sigma}^\prime \in \mathcal{B}^\eps(\widetilde{\sigma}_0(k,\tau))$ such that $|\support(\widetilde{\sigma}^\prime)| \leq 2^{4n_0t-t_0^{25+c''}+O(\log(t_0))}$, which is much smaller than the support-size of $U_{4n_0t+1}$. 

Thus, we get \[\onenorm{\widetilde{\sigma}^\prime}{U_{4n_0t+1}}\geq 1-\frac{2^{4n_0t-t_0^{25+c''}+O(\log(t_0))}}{2^{4n_0t+1}}\geq 
1-\negl(\lambda).\] 

Since $\eps=\negl(\lambda)$ and $\Delta_B(\widetilde{\sigma}^\prime,\widetilde{\sigma}_0(k,\tau))\leq \eps$, using Fact \ref{fact:fuchs} and triangle inequality, we get
\[\onenorm{\widetilde{\sigma}_0(k,\tau)}{U_{4n_0t+1}}\geq 
1-\negl(\lambda).\] }
\end{enumerate}
We see that for $k\leq k^*(\lambda)$, we get computational indistinguishability of $\widetilde{\sigma}_0(k,\tau)$ and $U_{4n_0t+1}$ and  for $k\geq k^*(\lambda)$, we get statistical difference between  $\widetilde{\sigma}_0(k,\tau)$ and $U_{4n_0t+1}$. Thus, we get a $k^*$-imbalanced $\EFI$ where EFI$_k(1^\lambda, 0)=\widetilde{\sigma}_0(k,\tau)$ and EFI$_k(1^\lambda, 1)=U_{4n_0t+1}$.
\end{proof}
\begin{claim}\label{claim:D_close}
    There exists  $i^*\in [m]$ such that \[
\mathrm{D}(\tau^{XQ^{i^*+1}}\|\Id^X\otimes\tau^{Q^{i^*+1}})-\mathrm{D}(\tau^{XQ^{i^*}}\|\Id^X\otimes\tau^{Q^{i^*}})\leq c^{-1}\log(n). \]
\end{claim}
\begin{proof}
     Consider,
    \begin{align*}
        &\sum_{i=0}^{m-1}\left(\mathrm{D}(\tau^{XQ^{i+1}}\|U^X\otimes\tau^{Q^{i+1}})-\mathrm{D}(\tau^{XQ^{i}}\|U^X\otimes\tau^{Q^{i}})\right)\\&=\mathrm{D}(\tau^{XQ^{m}}\|U^X\otimes\tau^{Q^{m}})-\mathrm{D}(\tau^{X}\|U^X)
         \\&\leq \mathrm{D}(\tau^{XQ^{m}}\|U^X\otimes\tau^{Q^{m}}) 
         &\mbox{(Fact \ref{fact:non_negative})} 
         \\&\leq \dmax{\tau^{XQ^{m}}}{U^X\otimes\tau^{Q^{m}}}  &\mbox{(Fact \ref{fact:monotonicity_renyi})} 
       \\&\leq n.  &\mbox{(Fact~\ref{fact:identity_upper})}
    \end{align*}
     For all $i$, using Fact \ref{fact:data}, we also have\[
\mathrm{D}(\tau^{XQ^{i+1}}\|U^X\otimes\tau^{Q^{i+1}})-\mathrm{D}(\tau^{XQ^{i}}\|U^X\otimes\tau^{Q^{i}})\geq 0.
   \]
    Since $m=cn/\log(n)$,  there exists  $i^*\in [m]$ such that \begin{align*} 
\mathrm{D}(\tau^{XQ^{i^*+1}}\|\Id^X\otimes\tau^{Q^{i^*+1}})-\mathrm{D}(\tau^{XQ^{i^*}}\|\Id^X\otimes\tau^{Q^{i^*}})&=\mathrm{D}(\tau^{XQ^{i^*+1}}\|U^X\otimes\tau^{Q^{i^*+1}})-\mathrm{D}(\tau^{XQ^{i^*}}\|U^X\otimes\tau^{Q^{i^*}})\\&\leq c^{-1}\log(n). \qedhere
     \end{align*}
\end{proof}
\noindent In Claim \ref{claim:D_close}, we identified an $i^*\in [m]$ where\[
\mathrm{D}(\tau^{XQ^{i^*+1}}\|\Id^X\otimes\tau^{Q^{i^*+1}})\approx\mathrm{D}(\tau^{XQ^{i^*}}\|\Id^X\otimes\tau^{Q^{i^*}}).
\] 
 Here $\approx$ means the quantities have a difference at most $O(\log(n))$. In order to get the imbalanced-$\EFI$, we want $
 S(X|Q^{i^*}HH^l(X))_\tau\approx 0,$
which requires $l\approx S(X|Q^{i^*})_\tau$. However, due to the properties of the extractor, we can only extract up to $l= S_2(X|Q^{i^*})_\tau$. Therefore, we want $l\approx S(X|Q^{i^*})_\tau\approx S_2(X|Q^{i^*})_\tau$. From the properties of the quantum hardcore function, we can argue that \[
Q^{i^*}HH^l(X)Rg(X,R)_{\tau}\approx_{\negl_C}Q^{i^*}HH^l(X)_{\tau}\otimes U_{|R|} \otimes U_{|g(X,R)|},
    \] 
    if we can ensure that $XQ^{i^*}HH^l(X)_{\tau}$ is a one-way state (see Definition \ref{def:OWSG}).  For this to hold, we require \[l\approx S_2(X|Q^{i^*+1})_{\tau}\approx S_2(X|Q^{i^*})_{\tau}.\] 
    In Claim \ref{claim:combined}, we identify a substate $\gamma$ of ${\tau}$ (see Definition \ref{def:substate}) with weight at least $\frac{1}{\poly(n)}$ which satisfies the above requirements. In other words, $\gamma$ satisfies
\begin{align*}
         l_{i^*}\approx S_2(X|Q^{i^*+1})_{\gamma}\approx S_2(X|Q^{i^*})_{\gamma}\approx S(X|Q^{i^*})_{\gamma}.
         \end{align*}
This claim is a key technical contribution of this work. To enhance readability, we provide a proof sketch here and defer the full proof to the appendix. 
\begin{claim}\label{claim:combined}
    There exists a state $\gamma$ such that
    \begin{enumerate}[(1)]
        \item $\dmax{\gamma}{\tau}\leq (20+c'')\log(n)$. \label{itm1}
        \item $
S(X|Q^{i^*})_{\gamma}-S_2(X|Q^{i^*+1})_{\gamma} \leq (30+3c^{-1}+11c'')\log(n)+1.$ \label{itm2}
    \end{enumerate}
 \end{claim}
 \begin{proof}(Sketch) Let $i^*$ be from Claim~\ref{claim:D_close}. We start by noting that for ``flat'' states (that is all non-zero eigenvalues are the same) all R\'enyi entropies are the same. This may help us relate $S_2(\cdot)$ and $S(\cdot)$. Since we need to do this for conditional entropies we define \begin{equation*}
\theta^{XQ^m}\defeq(\tau^{Q^{i^*}})^{\frac{-1}{2}}\tau^{XQ^m}(\tau^{Q^{i^*}})^{\frac{-1}{2}}.
\end{equation*} 
We write $\theta$ as
\[
\theta^{XQ^m}=\sum_{j\in \poly(n)}\theta_j^{XQ^m},
\] where for all $j\neq0$, all eigenvalues of $\theta_j$ are within a factor of $2$ of each other, i.e., $\theta_j$ is ``2-flat". Note that the total number of $j$'s is at most a polynomial in $n$. We get 
\begin{equation*}
{\tau}^{XQ^{m}}=(\tau^{Q^{i^*}})^{\frac{1}{2}}{\theta}^{XQ^{m}}(\tau^{Q^{i^*}})^{\frac{1}{2}}=\sum_j(\tau^{Q^{i^*}})^{\frac{1}{2}}\theta_j^{XQ^m}(\tau^{Q^{i^*}})^{\frac{1}{2}}.
\end{equation*}
We extend ${\tau}$ as \begin{equation*}
{\tau}^{JXQ^{m}}\defeq\sum_j p_j\ketbra{j}{j}^J\otimes{\tau}_j^{XQ^{m}}, 
\end{equation*}
where \[p_j\defeq\Tr((\tau^{Q^{i^*}})^{\frac{1}{2}}\theta_j^{XQ^m}(\tau^{Q^{i^*}})^{\frac{1}{2}}), \]\[\sum_jp_j=1,\]\begin{equation}
\tau_j^{XQ^{m}}\defeq\frac{(\tau^{Q^{i^*}})^{\frac{1}{2}}\theta_j^{XQ^m}(\tau^{Q^{i^*}})^{\frac{1}{2}}}{\Tr((\tau^{Q^{i^*}})^{\frac{1}{2}}\theta_j^{XQ^m}(\tau^{Q^{i^*}})^{\frac{1}{2}})}.\label{eq:tau-j}\end{equation}     
The above process helps us with $S_2(X|Q^{i^*})_{\tau_j}\approx S(X|Q^{i^*})_{\tau_j}$. However, we furthermore need 
$S_2(X|Q^{i^*+1})_\gamma\approx S(X|Q^{i^*})_\gamma$. For this, we consider the eigendecomposition 
\[(\tau^{Q^{i^*+1}})^{\frac{-1}{2}}{\tau}_j^{XQ^{i^*+1}}(\tau^{Q^{i^*+1}})^{\frac{-1}{2}}=\sum_i\lambda_{i,j}\ketbra{e_{i,j}}{e_{i,j}}^{XQ^{i^*+1}}.\] 
We consider two parts, one with large eigenvalues (crossing a threshold) and the residual part. We define 
\[\good_j \defeq\{i: \lambda_{i,j}<2^{t_{j}}, ~t_{j}:=\diver{{\tau}_j^{XQ^{i^*}}}{\Id^X\otimes\tau^{Q^{i^*}}}+ O(\log(n))\},\]
 \begin{equation*} 
\Theta_j \defeq\sum_{i\in \good_j}\lambda_{i,j}\ketbra{e_{i,j}}{e_{i,j}}^{XQ^{i^*+1}}, 
\end{equation*}
\[
\bar{\Theta}_j \defeq \sum_{i\notin \good_j}\lambda_{i,j}\ketbra{e_{i,j}}{e_{i,j}}^{XQ^{i^*+1}}.
\] 
We first show that if for all $j$ such that $p_j>\frac{1}{\poly(n)}$, \begin{align*}
weight(\good_j)\defeq\Tr((\tau^{Q^{i^*+1}})^{\frac{1}{2}}\Theta_j^{XQ^{i^*+1}}(\tau^{Q^{i^*+1}})^{\frac{1}{2}})< \frac{1}{\poly(n)}, \end{align*} then we get a substate of $\tau_j$ with weight at least  $1-\frac{1}{\poly(n)}$ for which relative-entropy (with respect to $\Id^X\otimes \tau^{Q^{i^*+1}}$) is large (at least $t_j$). From Fact \ref{fact:quantum_chaining}, we get that $\mathrm{D}(\tau_j^{XQ^{i^*+1}}\|\Id^X\otimes \tau^{Q^{i^*+1}})$ is large. Since this holds for all $j$ such that $p_j>\frac{1}{\poly(n)}$, using Fact \ref{fact:quantum_chaining} again, we are able to argue that $\mathrm{D}(\tau^{XQ^{i^*+1}}\|\Id^X\otimes\tau^{Q^{i^*+1}}) - \mathrm{D}(\tau^{XQ^{i^*}}\|\Id^X\otimes\tau^{Q^{i^*}})$ is large and hence reaching a contradiction to Claim \ref{claim:D_close}. 

Thus, there exists a $j$ (we also argue that it is not $0$) for which $p_j>\frac{1}{\poly(n)}$ and 
\begin{align}
weight(\good_j)=\Tr((\tau^{Q^{i^*+1}})^{\frac{1}{2}}\Theta_j^{XQ^{i^*+1}}(\tau^{Q^{i^*+1}})^{\frac{1}{2}})\geq \frac{1}{\poly(n)}. \label{eq:identigy_j}\end{align}
We conjugate $\Theta_j$ by $(\tau^{Q^{i^*+1}})^\frac{1}{2}$ and normalise to get a substate $\sigma_j$ of $\tau$ with weight at least $\frac{1}{\poly(n)}$. From the definition of $\Theta_j$, we see that all eigenvalues of $\Theta_j$ are small (below the threshold $2^{t_{j}}$). This helps in lower-bounding  $S_2(X|Q^{i^*+1})_{\sigma_j}$ (via upper bounding $\mathrm{D}_2(\cdot)$).
Since $\sigma_j$ is a substate of $\tau_j$, eq. \ref{eq:tau-j} (and that $\theta_j$ is $2$-flat) allows us to lower-bound $\mathrm{D}(\sigma_j^{XQ^{i^*}}\|\Id^X\otimes \tau^{Q^{i^*}})$. Combining this with the observation that $\sigma_j$ is a substate of $\tau$ (with good weight)  allows us to upper-bound $S(X|Q^{i^*})_{\sigma_j}$. Setting $\gamma=\sigma_j$, we get         \[\dmax{\gamma}{\tau}\leq O(\log(n)),\] \[S(X|Q^{i^*})_{\gamma}\approx S_2(X|Q^{i^*+1})_{\gamma}.\]
 \end{proof}
\begin{claim} \label{claim:new_oneway_rhoj*}
   For any non-uniform $\QPT$ adversary $A$,   \[
\Pr[\tang \leftarrow \ver(A(1^\lambda,Q^{i^*}_{\gamma}), {\gamma}^{Q_{i^*+1}} )]=\negl(\lambda).
\]
\end{claim}
\begin{proof}
Let us assume for contradiction that there exists a non-uniform $\QPT$ algorithm $A$ and a positive polynomial $p(\cdot)$, such that for infinitely many $\lambda$,  \begin{equation}
    \label{eq:inversion1}
\delta_\lambda\defeq
\Pr[\tang \leftarrow \ver(A(1^\lambda,Q^{i^*}_{\gamma}), {\gamma}^{Q_{i^*+1}} )]\geq \frac{1}{p(\lambda)}. 
\end{equation}
 From Claim \ref{claim:combined}, we have \begin{equation} \label{eq:inversion2}
    \tau^{XQ^{m}}
=\frac{1}{n^{20+c''}}\gamma^{XQ^{m}}+(1-\frac{1}{n^{20+c''}}) \widetilde{\rho}^{XQ^{m}}. 
\end{equation}
Combining eqs.~\ref{eq:inversion1} and \ref{eq:inversion2}, we get\[
\Pr[\tang \leftarrow \ver(A(1^\lambda,Q^{i^*}_{\tau}), \tau^{Q_{i^*+1}} )]\geq \frac{1}{n^{20+c''}p(\lambda)}. 
\]
This contradicts that $\tau^{XQ^{m}}$ is a one-way state. 
Thus, for any non-uniform $\QPT$ adversary $A$,   \[
\Pr[\tang \leftarrow \ver(A(1^\lambda,Q^{i^*}_{\gamma}), {\gamma}^{Q_{i^*+1}} )]=\negl(\lambda). \qedhere
\]
\end{proof}

\begin{claim}\label{claim:new_hardcore}
  Let $l_{i^*}\defeq S_2({X}|Q^{i^*+1})_{{\gamma}}-(2c''+2)\log(n)$. 
There exists a function $g:\{0,1\}^n\times \{0,1\}^{2n} \rightarrow \{0,1\}^{c_1\log(n)}$ such that 
  \[\tau^\prime_0({i^*},l_{i^*})\approx_{\negl_C}\tau^\prime_1({i^*},l_{i^*}),\]

  \noindent where $c_1=14c''+3c^{-1}+100$ and
\[\tau^\prime_0({i},l)\defeq Q^{i}HH^l({X}) R g(X,R)_{\gamma},\] 
\[\tau^\prime_1({i},l)\defeq Q^{i}HH^l({X})_{\gamma} \otimes U_{2n}   \otimes U_{c_1\log(n)}.\]
\end{claim}
\begin{proof}
 Let us assume for contradiction that there exists a non-uniform $\QPT$ algorithm $A$ and a positive polynomial $p(\cdot)$, such that for infinitely many $\lambda$,  \begin{equation}
    \label{eq:invert_1}
\delta_\lambda\defeq
\Pr[\tang \leftarrow \ver(A(1^\lambda,{\gamma}^{Q^{i^*}HH^{l_{i^*}}(X)}), {\gamma}^{Q_{i^*+1}} )]\geq \frac{1}{p(\lambda)}. 
\end{equation}
Let $l^\prime= l_{i^*} -4\log(\frac{4}{\delta_\lambda})$.  From Fact \ref{fact:hashing}, we get
\begin{equation}
         \label{eq:ext_closeness_1}
     Q^{i^*+1}HH^{l^\prime}(X)_{\gamma}\approx_{{\delta_\lambda^2/16}} Q^{i^*+1}_{\gamma}\otimes U_s\otimes U_{l^\prime}.
     \end{equation}
From Fact~\ref{fact:extension}, using \[\rho_A \leftarrow Q^{i^*+1}HH^{l^\prime}(X)_{\gamma}~,~ \rho_{AB} \leftarrow Q^{i^*+1}HH^{l_{i^*}}(X)_{\gamma}~,~ \sigma_A \leftarrow Q^{i^*+1}_{\gamma}\otimes U_s\otimes U_{l^\prime},\] we get an extension $\theta_{AB}$ of $\sigma_A$ such that 
\begin{align*}
    \onenorm{\rho_{AB}}{\theta_{AB}}
    &\leq 2 \sqrt{2} \cdot \Delta_B (\rho_{AB}, \theta_{AB}) & \mbox{(Fact~\ref{fact:fuchs})}\\
    & = 2 \sqrt{2} \cdot \Delta_B(\rho_A, \sigma_A) \\
    &\leq \frac{\delta_\lambda}{2}. & \mbox{(eq. \ref{eq:ext_closeness_1} and Fact~\ref{fact:fuchs})}\\
\end{align*}
Hence,
\[
\Pr[\tang \leftarrow \ver(A(1^\lambda,\theta^{Q^{i^*}HH^{l_{i^*}}(X)}), \theta^{Q_{i^*+1}} )] \geq \delta_\lambda-\frac{\delta_\lambda}{{2}}\geq \frac{\delta_\lambda}{2}.
\]
From Fact \ref{fact:identity_upper}, we have
\begin{align*}
    &\theta \leq 2^{8\log\left(\frac{4}{\delta_\lambda}\right)}Q^{i^*+1}_{\gamma}\otimes U_s\otimes U_{l_{i^*}}.
\end{align*}
Thus, 
\begin{equation}
    \label{eq:invert_2}
\Pr[\tang \leftarrow \ver(A(1^\lambda,{Q^{i^*}_{\gamma}\otimes U_s\otimes U_{l_{i^*}}}), {\gamma}^{Q_{i^*+1}} )] \geq \frac{\delta_\lambda}{2}\cdot \left(\frac{\delta_\lambda}{4}\right)^8.
\end{equation}
We now construct a non-uniform adversary $A^\prime$ as follows. \\\\
\textbf{$A^\prime(1^\lambda, Q^{i^*}_{\gamma})$:}
\begin{enumerate}
\item $A^\prime$ takes as input   $Q^{i^*}_{\gamma}$ along with the security parameter $\lambda$.
\item $A^\prime$ takes as (non-uniform) advice $l_{i^*}$.
 \item $A^\prime$ appends $U_s\otimes U_{l_{i^*}}$ to $Q^{i^*}_{\gamma}$. 
 \item $A^\prime$ returns $A(1^\lambda, Q^{{i^*}}_{\gamma}\otimes U_s\otimes U_{l_{i^*}} )$. 
\end{enumerate}
From eqs. \ref{eq:invert_1} and \ref{eq:invert_2}, we see that   for infinitely many $\lambda$,
\begin{equation*}
\Pr[\tang \leftarrow \ver(A^\prime(1^\lambda,{\gamma}^{Q^{i^*}}), {\gamma}^{Q_{i^*+1}} )]\geq \frac{1}{2^{17} \cdot p(\lambda)^9}. 
\end{equation*}
This contradicts Claim \ref{claim:new_oneway_rhoj*}. Thus, for any non-uniform $\QPT$ adversary $A$,   \[
\Pr[\tang \leftarrow \ver(A(1^\lambda,{\gamma}^{Q^{i^*}HH^{l_{i^*}}(X)}), {\gamma}^{Q_{i^*+1}} )]= \negl(\lambda).
\]
 The claim now follows by taking $g(\cdot)$ as the quantum hardcore function from Fact \ref{fact:quantum_hardcore}.
\end{proof}

\begin{claim}\label{claim:new_entropy_diff_large}
\[S(\tau^\prime_1(i^*,l_{i^*} ))-S(\tau^\prime_0(i^*,l_{i^*}))\geq (68+c'')\log(n)-2.\]
 \end{claim}
\begin{proof}
{Since $l_{i^*}=S_2({X}|Q^{{i^*}+1})_{\gamma}-(2c''+2) \log(n)$}, Fact \ref{fact:hashing} gives
     \begin{equation}\label{eq1}
     Q^{i^*+1}HH^{l_{i^*}}({X})_{\gamma}\approx_{2^{-{(c''+1)}\log(n)}} Q^{i^*+1}_{\gamma}\otimes U_s\otimes U_{l_{i^*}}.\end{equation}
        On tracing out $Q_{i^*+1}$, we have
    \begin{equation}
        \label{eq*:hash_traced_out}
     Q^{i^*}HH^{l_{i^*}}(X)_{\gamma}\approx_{2^{-(c''+1)\log(n)}} Q^{i^*}_{\gamma}\otimes U_s\otimes U_{l_{i^*}}.\end{equation}
     Let 
     \[\widetilde{\tau}_1({i^*},l_{i^*})=Q^{i^*}_{\gamma}\otimes U_{s} \otimes U_{l_{i^*}} \otimes U_{2n}\otimes U_{c_1\log(n)}.\] 
\begin{align*}
        &\onenorm{\tau^\prime_1({i^*},l_{i^*})}{\widetilde{\tau}_1({i^*},l_{i^*})}\\
        &=\onenorm{Q^{i^*}HH^{l_{i^*}}(X)_{\gamma} \otimes U_{2n}\otimes U_{c_1\log(n)}}{Q^{i^*}_{\gamma}\otimes U_{s} \otimes U_{l_{i^*}} \otimes U_{2n}\otimes U_{c_1\log(n)}}\\&=\onenorm{Q^{i^*}HH^{l_{i^*}}(X)_{\gamma} }{Q^{i^*}_{\gamma}\otimes U_{s} \otimes U_{l_{i^*}} }\\&\leq \frac{1}{n^{c''+1}}. \tab \tab \tab \tab \tab \mbox{(eq. \ref{eq*:hash_traced_out})}       \end{align*}
From Fact \ref{fact:fannes}, we get (here $\log(d)$ is the number of qubits in $\tau^\prime_1({i^*},l_{i^*})$)
\begin{equation}
           \label{eq*:fannes}
\vert S(\tau^\prime_1({i^*},l_{i^*})) -S(\widetilde{\tau}_1({i^*},l_{i^*}))\vert \leq \log(d)\onenorm{\tau^\prime_1({i^*},l_{i^*})}{\widetilde{\tau}_1({i^*},l_{i^*})}+\frac{1}{e} \leq \frac{O(n^{c''})}{n^{c''+1}}+\frac{1}{e}\leq 1.
 \end{equation}
Consider,
 \begin{align}
&S(\tau^\prime_0({i^*},l_{i^*})) \nonumber \\
&=S(Q^{i^*}HH^{l_{i^*}}({X}) R g(X,R))_{\gamma} \nonumber
     \\
     &\leq S({X}Q^{i^*}HH^{l_{i^*}}({X}) R g(X,R))_{\gamma}  \nonumber&\mbox{(Fact \ref{fact:entropy_inequalities})}\\&= S({X}Q^{i^*}HR)_{\gamma}  \nonumber&\mbox{(since $h^{l_{i^*}}({x}),g(x,r)$ are functions of ${x},h,r$)}
     \\&= S({X}Q^{i^*}\otimes U_s\otimes U_{2n})_{\gamma}\nonumber
     \\
     &= S({X}Q^{i^*})_{\gamma}+s+2n  \nonumber\\
      &= S({X}|Q^{i^*})_{\gamma}+S(Q^{i^*})_{\gamma}+s+2n  &\mbox{(Fact \ref{fact:chainentropy})}\nonumber\\
      &\leq   S_2(X|Q^{i^*+1})_{\gamma} +S(Q^{i^*})_{\gamma}+s+2n \nonumber\\&\quad+(30+3c^{-1}+11c'')\log(n)+1. &\mbox{(Claim \ref{claim:combined})}\label{eq:rho_0_entropy}
 \end{align}
Consider,
\begin{align}
&S(\widetilde{\tau}_1({i^*},l_{i^*})) \nonumber \\
&=S(Q^{i^*})_{\gamma}+s+l_{i^*}+2n+c_1\log(n)\nonumber\\&= S(Q^{i^*})_{\gamma}+s+S_2({X}|Q^{i^*+1})_{{\gamma}}-(2c''+2)\log(n)+2n+c_1\log(n).\label{eq:rho_1_entropy}
\end{align}
Combining eqs. \ref{eq*:fannes}, \ref{eq:rho_0_entropy} and \ref{eq:rho_1_entropy}, using $c_1=14c''+3c^{-1}+100$, we get  
\[S({\tau^\prime_1}(i^*,l_{i^*}))-S(\tau^\prime_0(i^*,l_{i^*}))\geq (68+c'')\log(n)-2. \qedhere\] 

\end{proof}

\begin{claim}\label{claim:final_entropy_diff}
\noindent Let $\widetilde{\rho}$ be according to eq. \ref{eq:inversion2}. Define \[\tau^\prime_2({i},l)\defeq Q^{i}HH^l({X}) R g(X,R)_{\widetilde{\rho}}.\]
    \begin{equation} \label{eq:def_sigma0}
         \sigma_0(i,l)\defeq\frac{1}{n^{20+c''}}\tau^\prime_0({i},l)+(1-\frac{1}{n^{20+c''}})\tau^\prime_2(i,l).
    \end{equation}
     \begin{equation} \label{eq:def_sigma1}
    \sigma_1(i,l)=\frac{1}{n^{20+c''}}\tau^\prime_1({i},l)+(1-\frac{1}{n^{20+c''}})\tau^\prime_2(i,l).
    \end{equation}
    Then,
\begin{equation*}
\sigma_0(i^*,l_{i^*})\approx_{\negl_C}\sigma_1(i^*,l_{i^*}).   \end{equation*}
    \[ S(\sigma_{1}(i^*,l_{i^*}))-S(\sigma_{0}(i^*,l_{i^*}))\geq \frac{48\log(n)-4}{n^{20+c''}} .\]
\end{claim}
\begin{proof}
  From eqs. \ref{eq:def_sigma0} and \ref{eq:def_sigma1}, using Claim \ref{claim:new_hardcore} and Fact \ref{fact:indistinguishability_convex_combination}, we get \begin{equation}
    \label{eq:indis}\sigma_0(i^*,l_{i^*})\approx_{\negl_C}\sigma_1(i^*,l_{i^*}).   \end{equation}
 From eqs. \ref{eq:def_sigma0} and \ref{eq:def_sigma1}, using Fact \ref{fact:convex_entropy_diff} with $p=\frac{1}{n^{20+c''}}$,  we get
  \begin{align*}
      S(\sigma_{1}(i^*,l_{i^*}))-S(\sigma_{0}(i^*,l_{i^*}))&\geq \frac{1}{n^{20+c''}} \cdot (S(\tau_1^\prime(i^*,l_{i^*}))-S(\tau_0^\prime(i^*,l_{i^*}))-\log(n^{20+c''})-2)
      \\&=\frac{1}{n^{20+c''}} \cdot ((68+c'')\log(n)-2-(20+c'')\log(n)-2) &\mbox{(Claim \ref{claim:new_entropy_diff_large})}
       \\&=\frac{48\log(n)-4}{n^{20+c''}}.  \qedhere
  \end{align*} 

\end{proof}

\section{An $\EFI$ implies an $\sOWSG$}\label{sec:EFItoOWSG}
\begin{theorem}
    An $\EFI$  implies a $\poly(n)$-copy $\sOWSG$.
\end{theorem}
\begin{proof}
    
Suppose we have an $\EFI$ pair $(\rho_0,\rho_1)$. Then, $\rho_0\approx_{\negl_C}\rho_1$.
From Fact \ref{fact:EFI_amplification}, we can also assume without loss of generality that $\frac{1}{2}\onenorm{\rho_0}{\rho_1}\geq 1-\negl(\lambda)$.

We create an $\sOWSG$ as follows:
\begin{enumerate}
\item $\keygen(1^\lambda)\rightarrow x$ : $x\leftarrow U_n$ for $n=\lambda$. 
    \item $\gen(1^\lambda, x)\rightarrow \phi_x: \phi_x=\rho_{x_1}\otimes \rho_{x_2}\dots\rho_{x_n} $ where $x_i$ represents the $i^{th}$-bit of $x$.
    \item $\ver(x^\prime, \phi_x)\rightarrow \tang /\bot:$
      Let $\{\pi_0, \pi_1\}$ be the optimal distinguisher for $\rho_0$ and $\rho_1$. $\ver$ measures  $\phi_x$ according to the projectors   $\{\pi_{x^\prime_1}\otimes \pi_{x^\prime_2} \dots \pi_{x^\prime_n},\Id-\pi_{x^\prime_1}\otimes \pi_{x^\prime_2} \dots \pi_{x^\prime_n}\} $. It outputs $\tang$ if the first result is obtained and outputs $\perp$  otherwise.
\end{enumerate}
We know that since $\pi_0,\pi_1$ are the optimal distinguishers and $\frac{1}{2}\onenorm{\rho_0}{\rho_1}\geq 1-\negl(\lambda)$, we have the following: 
\[\Tr(\pi_0\rho_0)\geq 1-\negl(\lambda) ~,~ \Tr(\pi_1\rho_1)\geq 1-\negl(\lambda),\]
 \[\Tr(\pi_1\rho_0)= \negl(\lambda) ~,~ \Tr(\pi_0\rho_1)= \negl(\lambda).\]
\subsubsection*{Correctness:}
\noindent From the union bound, we get  
\begin{equation}
    \label{eq:ver_1}
\Pr(\tang\leftarrow\ver(x,\phi_x))\geq 1-n\cdot\negl(\lambda)\geq 1-\negl(\lambda). 
\end{equation} 
The above gives us the correctness condition required for an $\sOWSG$.
\subsubsection*{Security:}
\noindent We see that for any $x^\prime\neq x$,
\begin{equation}
\label{eq:ver_2}\Pr(\tang\leftarrow\ver(x^\prime,\phi_x))= \negl(\lambda). \end{equation}
Let $X\leftarrow U_{n}$. Assume for contradiction that there exists a non-uniform $\QPT$ adversary $A$ and $t\in \poly(n)$ such that  
\begin{align*}
    \Pr(\tang \leftarrow \ver(A(1^\lambda,\phi_X^{\otimes t}),\phi_X))\geq \frac{1}{q(n)},
\end{align*}
for some positive polynomial $q(\cdot)$. Let $A(1^\lambda,\rho_{X_1}^{\otimes t}\otimes \rho_{X_2}^{\otimes t}\dots\rho_{X_n}^{\otimes t})=X^\prime$. Combining the above with eqs. \ref{eq:ver_1} and \ref{eq:ver_2} gives us 
\begin{equation}
    \Pr_{}(X'=X) = \Pr_{}(A(1^\lambda,\phi_X^{\otimes t})=X)\geq \frac{1}{2q(n)}. \label{eq:highprob}
\end{equation}
 We also have\[
\Pr(X^\prime=X)=\prod_{i=1}^{n}\Pr(X_i^\prime=X_i|X^\prime_{[i-1]}=X_{[i-1]}).
\]
Since $\Pr(X^\prime=X)\geq \frac{1}{2q(n)}$, there exists an $i\in [n]$ such that \begin{equation}
    \label{eq:nonuniform_i}
\Pr(X_i^\prime=X_i|X^\prime_{[i-1]}=X_{[i-1]})\geq \frac{3}{4}.\end{equation}
We now construct a non-uniform adversary $A^\prime$ as described below. Let $B \leftarrow U_1$. \\\\
\textbf{$A^\prime(1^\lambda, \rho_B^{\otimes t})$:}
\begin{enumerate}
\item $A^\prime$ takes as input $t$-copies of the state $\rho_B$ along with the security parameter $\lambda$. It also has access to an $\EFI$  that generates the $\EFI$ pair $(\rho_0,\rho_1)$.
\item $A^\prime$ takes as advice $i$ from  eq.~\ref{eq:nonuniform_i}.
\item $A^\prime$ generates $X_1 \dots X_{i-1} X_{i+1} \dots X_n$ each drawn i.i.d from $U_1$.
\item $A^\prime$ sets $X^\prime=A(1^\lambda,\rho_{X_1}^{\otimes t} \otimes \dots \rho_{X_{i-1}}^{\otimes t} \otimes \rho_B^{\otimes t} \otimes \rho_{X_{i+1}}^{\otimes t} \dots \rho_{X_n}^{\otimes t})$.
\item  $A^\prime$ checks if $X_{[i-1]}^\prime=X_{[i-1]}$. \begin{itemize}
    \item If $X_{[i-1]}^\prime=X_{[i-1]}$, then $A^\prime$ outputs $X^\prime_i$.
    \item Otherwise, $A^\prime$ outputs a random bit $U_1$.
\end{itemize}
\end{enumerate}
Define $X = X_1 \dots X_{i-1} B X_{i+1} \dots X_n$. Note that $X \leftarrow U_n$. From the above construction, we get
\begin{align*}
    &\Pr(A^{\prime}(1^\lambda, \rho_B^{\otimes t})=B)\\
    &=\Pr(X^\prime_{[i-1]}=X_{[i-1]}).\Pr(X^\prime_i=B|X^\prime_{[i-1]}=X_{[i-1]}) + \Pr(X^\prime_{[i-1]} \neq X_{[i-1]}). \frac{1}{2}.
\end{align*}
From eq.~\ref{eq:highprob} we have $\Pr(X^\prime=X)\geq \frac{1}{2q(n)}$ which gives $\Pr(X^\prime_{[i-1]}=X_{[i-1]})\geq \frac{1}{2q(n)}$. 
Combining this with eq. \ref{eq:nonuniform_i}, we get
\begin{align*}
    &\Pr(A^{\prime}(1^\lambda, \rho_B^{\otimes t})=B)\geq \frac{1}{2}+\frac{1}{8q(n)},
    \\ \Rightarrow~ & \frac{1}{2}\Pr(A^{\prime}(1^\lambda, \rho_0^{\otimes t})=0)+\frac{1}{2}\Pr(A^{\prime}(1^\lambda, \rho_1^{\otimes t})=1) \geq \frac{1}{2}+\frac{1}{8q(n)},
      \\ \Rightarrow~ & \frac{1}{2}\Pr(A^{\prime}(1^\lambda, \rho_0^{\otimes t})=0)+\frac{1}{2}\left(1-\Pr(A^{\prime}(1^\lambda, \rho_1^{\otimes t})=0)\right) \geq \frac{1}{2}+\frac{1}{8q(n)},
      \\ \Rightarrow~  &\Pr(A^{\prime}(1^\lambda, \rho_0^{\otimes t})=0)-\Pr(A^{\prime}(1^\lambda, \rho_1^{\otimes t})=0) \geq \frac{1}{4q(n)}.
    \end{align*} 
This along with Fact \ref{fact:indistinguishability_multiple_samples} contradicts that $\rho_0\approx_{\negl_C}\rho_1$. 
\end{proof}

\section*{Acknowledgment}
\noindent R.B. would like to thank Naresh Goud Boddu, Upendra Kapshikar and Sayantan Chakraborty for useful discussions. R.J. would like to thank Fang Song for pointing out the reference~\cite{cavalar2023computational}. We would also like to thank Marco Tomamichel for pointing out that a result in the reference~\cite{Renyi_chain_rules} was incorrectly stated.   The work of R.J. is supported by the NRF grant NRF2021-QEP2-02-P05 and
the Ministry of Education, Singapore, under the Research Centres of Excellence
program. This work was done in part while R.J. was visiting the Simons
Institute for the Theory of Computing, Berkeley, CA, USA.  R.B. is supported by the Singapore Ministry of Education and the
National Research Foundation through the core grants of the Centre for Quantum Technologies.


\bibliography{name}
\bibliographystyle{alpha}

\appendix
\section{Proof of Claim \ref{claim:combined}}
\begin{proof}
\noindent Let $i^*$ be from Claim~\ref{claim:D_close}. Let $l_1=\mathrm{D}(\tau^{XQ^{i^*+1}}\|\Id^X\otimes\tau^{Q^{i^*+1}}),~l_2=\mathrm{D}(\tau^{XQ^{i^*}}\|\Id^X\otimes\tau^{Q^{i^*}})$. Define \begin{equation}
    \label{eq:define_theta}
\theta^{XQ^m}\defeq(\tau^{Q^{i^*}})^{\frac{-1}{2}}\tau^{XQ^m}(\tau^{Q^{i^*}})^{\frac{-1}{2}}.
\end{equation} Note that this is not necessarily a state but it is a PSD matrix. Consider its spectral decomposition (on $XQ^{i^*}$) of the form: 
\begin{equation*}
XQ^{i^*}_{\theta}=\sum_{x,k}p_{x,k}\ketbra{x}{x}^{X}\otimes \ketbra{e_{x,k}}{e_{x,k}}^{Q^{i^*}}.
\end{equation*}
From the above expression, we can extend from $XQ^{i^*}$ to $XQ^m$ as below: \begin{equation}
XQ^m_{\theta}=\sum_{x,k}p_{x,k}\ketbra{x}{x}^{X}\otimes \ketbra{e_{x,k}}{e_{x,k}}^{Q^{i^*}} \otimes \tau_x^{Q^{i^*+1}\cdots Q^m}. \label{eq:natural_extension}
\end{equation}
Consider a function $J$ where $J(p_{x,k})=r$ if $p_{x,k}\in \Big(\frac{1}{2^{r}},\frac{1}{2^{r-1}}\Big]$ for $r\in[n^{2+c''}]$ and $J(p_{x,k})=0$ otherwise. Here $|J|\leq (c''+2)\log(n)$. We have
\begin{align*}
    {\theta_j}^{XQ^m}&=\sum_{x,k :J(p_{x,k})=j}p_{x,k}\ketbra{x}{x}^{X}\otimes \ketbra{e_{x,k}}{e_{x,k}}^{Q^{i^*}}\otimes \tau_x^{Q^{i^*+1}\cdots Q^m}.
\end{align*}
Then \[
\theta^{XQ^m}=\sum_j\theta_j^{XQ^m}.
\] We see that {\begin{equation}
\Tr(\theta^{XQ^m}_{0})=\Pr(J=0)_\theta\leq \frac{2^{n^{c''}}}{2^{n^{c''+2}}} =\negl(n). \label{eq:prob_zero_negl}
\end{equation}}
From eq. \ref{eq:define_theta}, using $\supp(\tau^{XQ^m})\subseteq \supp(\tau^{Q^{i^*}})$, we get 
\begin{equation*}
{\tau}^{XQ^{m}}=(\tau^{Q^{i^*}})^{\frac{1}{2}}{\theta}^{XQ^{m}}(\tau^{Q^{i^*}})^{\frac{1}{2}}=\sum_j(\tau^{Q^{i^*}})^{\frac{1}{2}}\theta_j^{XQ^m}(\tau^{Q^{i^*}})^{\frac{1}{2}}.
\end{equation*}
We can extend ${\tau}$ as \begin{equation}
{\tau}^{JXQ^{m}}\defeq\sum_j p_j\ketbra{j}{j}^J\otimes{\tau}_j^{XQ^{m}}, \label{eq:tau_extension}
\end{equation}
where \[p_j\defeq\Tr((\tau^{Q^{i^*}})^{\frac{1}{2}}\theta_j^{XQ^m}(\tau^{Q^{i^*}})^{\frac{1}{2}}), \]\[\sum_jp_j=1,\]\[\tau_j^{XQ^{m}}\defeq\frac{(\tau^{Q^{i^*}})^{\frac{1}{2}}\theta_j^{XQ^m}(\tau^{Q^{i^*}})^{\frac{1}{2}}}{\Tr((\tau^{Q^{i^*}})^{\frac{1}{2}}\theta_j^{XQ^m}(\tau^{Q^{i^*}})^{\frac{1}{2}})}.\] 
Now {\begin{align}   p_{0}&=\Tr((\tau^{Q^{i^*}})^{\frac{1}{2}}\theta_{0}^{XQ^m}(\tau^{Q^{i^*}})^{\frac{1}{2}}) \nonumber \\&=\Tr(\theta_{0}^{XQ^m}\tau^{Q^{i^*}})&\mbox{(cyclicity)} \nonumber
\\&\leq \Tr(\theta_{0}^{XQ^m}\Id^{Q^{i^*}}) \nonumber
\\&=\negl(n). &\mbox{(eq. \ref{eq:prob_zero_negl})} \label{eq:p_j=0_negl}
\end{align}}Note that for the inequality above, we have used that for operators $A,B, M$ such that $A\geq B$ and $M\geq 0$,  $\Tr (M A) \geq \Tr (MB)$.

Consider{\begin{align}
&\support((\tau^{Q^{i^*}})^{\frac{1}{2}}\Pi_{\support({\tau}_j^{XQ^{i^*}})} (\tau^{Q^{i^*}})^{\frac{1}{2}})\nonumber\\&=\support((\tau^{Q^{i^*}})^{\frac{-1}{2}}\Pi_{\support({\tau}_j^{XQ^{i^*}})} (\tau^{Q^{i^*}})^{\frac{-1}{2}})\nonumber &\mbox{(support of the pseudo-inverse)}\nonumber
\\&=\support((\tau^{Q^{i^*}})^{\frac{-1}{2}}({\tau}_j^{XQ^{i^*}})(\tau^{Q^{i^*}})^{\frac{-1}{2}})\nonumber &\mbox{(since $\tau^{Q^{i^*}}\geq 0$ and ${\tau}_j^{XQ^{i^*}}\geq 0$)}\nonumber
\\&=  {\support(\Pi_{\support({\tau}^{Q^{i^*}})}\theta_{j}^{XQ^{i^*}}\Pi_{\support({\tau}^{Q^{i^*}})})} \nonumber
\\&={\support(\theta_{j}^{XQ^{i^*}})}. \label{eq:supports}
\end{align}}
For all $j\neq 0$, there exists some $\alpha>0$ such that \begin{equation*}
\alpha\Pi_{\support(\theta_{j}^{XQ^{i^*}})} \leq  \theta_{j}^{XQ^{i^*}} \leq  2\alpha\Pi_{\support(\theta_{j}^{XQ^{i^*}})} \leq 2\alpha \Id^{XQ^{i^*}} .
\end{equation*}
From Fact \ref{fact:conjugation}, we have \begin{align}
&\alpha\Pi_{\support(\tau^{Q^{i^*}})}\Pi_{\support(\theta_{j}^{XQ^{i^*}})}\Pi_{\support(\tau^{Q^{i^*}})} \leq  (\tau^{Q^{i^*}})^{\frac{-1}{2}}{\tau}_{j}^{XQ^{i^*}}(\tau^{Q^{i^*}})^{\frac{-1}{2}} 
 \nonumber\\&\tab\tab\tab\tab\leq  2\alpha\Pi_{\support(\tau^{Q^{i^*}})}\leq 2\alpha \Id^{XQ^{i^*}}.\label{eq:projections}
\end{align}
Using Fact \ref{fact:conjugation} and  $\supp(\tau^{XQ^m})\subseteq \supp(\tau^{Q^{i^*}})$, we get
\[
\alpha(\tau^{Q^{i^*}})^{\frac{1}{2}}\Pi_{\support(\theta_{j}^{XQ^{i^*}})} (\tau^{Q^{i^*}})^{\frac{1}{2}}\leq  {\tau}_{j}^{XQ^{i^*}}.
\]
Using Fact \ref{fact:conjugation}, we have
\[
\alpha\Pi_{\support({\tau}_{j}^{XQ^{i^*}})} (\tau^{Q^{i^*}})^{\frac{1}{2}}\Pi_{\support(\theta_{j}^{XQ^{i^*}})} (\tau^{Q^{i^*}})^{\frac{1}{2}} \Pi_{\support({\tau}_{j}^{XQ^{i^*}})} \leq  {\tau}_{j}^{XQ^{i^*}}.
\]
Tracing out on both sides and taking $\log$ gives \begin{align*}
0&\geq\log \alpha+ \log \Tr\left(\Pi_{\support({\tau}_{j}^{XQ^{i^*}})} (\tau^{Q^{i^*}})^{\frac{1}{2}}\Pi_{\support(\theta_{j}^{XQ^{i^*}})} (\tau^{Q^{i^*}})^{\frac{1}{2}} \Pi_{\support({\tau}_{j}^{XQ^{i^*}})} \right)\\ &= \log \alpha+ \log \Tr \left(\Pi_{\support(\theta_{j}^{XQ^{i^*}})} (\tau^{Q^{i^*}})^{\frac{1}{2}} \Pi_{\support({\tau}_{j}^{XQ^{i^*}})} (\tau^{Q^{i^*}})^{\frac{1}{2}}\right) &\mbox{(cyclicity of trace)}
\\ &= \log \alpha+ \log \Tr \left( (\tau^{Q^{i^*}})^{\frac{1}{2}} \Pi_{\support({\tau}_{j}^{XQ^{i^*}})} (\tau^{Q^{i^*}})^{\frac{1}{2}}\right) &\mbox{(eq. \ref{eq:supports})}
\\ &= \log \alpha+ \log \Tr \left( \Pi_{\support({\tau}_{j}^{XQ^{i^*}})} \tau^{Q^{i^*}}\right).
\end{align*}
This gives \begin{align}
-\log\Tr\left(\Pi_{\support({\tau}_j^{XQ^{i^*}})} \tau^{Q^{i^*}}\right) & \geq \log(\alpha) .\label{eq:D_ineq_1}
\end{align}
We also have \begin{align} \dtwo{{\tau}_j^{XQ^{i^*}}}{\Id^X\otimes\tau^{Q^{i^*}}} &= \log\Tr\left((\tau^{Q^{i^*}})^{\frac{-1}{2}}{\tau}_j^{XQ^{i^*}}(\tau^{Q^{i^*}})^{\frac{-1}{2}} {\tau}_j^{XQ^{i^*}}\right) &\mbox{(Definition \ref{def:renyi_conditional})}\nonumber
\\&\leq \log\Tr(2 \alpha\Id^{XQ^{i^*}} {\tau}_j^{XQ^{i^*}})&\mbox{(eq. \ref{eq:projections})}\nonumber
    \\&\leq \log(\alpha)+1. \label{eq:D_ineq_11}
\end{align}
Combining eqs. \ref{eq:D_ineq_1} and \ref{eq:D_ineq_11}, for all $j\neq0$, we get, 
\begin{equation}
    \label{eq:closeness}
\dtwo{{\tau}_j^{XQ^{i^*}}}{\Id^X\otimes\tau^{Q^{i^*}}}+\log\Tr\left(\Pi_{\support({\tau}_j^{XQ^{i^*}})} \tau^{Q^{i^*}}\right) \leq 1.
\end{equation}
Define \[l_{2,j} \defeq\diver{{\tau}_j^{XQ^{i^*}}}{\Id^X\otimes\tau^{Q^{i^*}}}.\]
Consider the eigendecomposition 
\[(\tau^{Q^{i^*+1}})^{\frac{-1}{2}}{\tau}_j^{XQ^{i^*+1}}(\tau^{Q^{i^*+1}})^{\frac{-1}{2}}=\sum_i\lambda_{i,j}\ketbra{e_{i,j}}{e_{i,j}}^{XQ^{i^*+1}}.\] 
Define \[\good_j \defeq\{i: \lambda_{i,j}<2^{t_{j}}, ~t_{j}=l_{2,j}+(3c^{-1}+10c'')\log(n)\},\]
 \begin{equation} 
\Theta_j \defeq\sum_{i\in \good_j}\lambda_{i,j}\ketbra{e_{i,j}}{e_{i,j}}^{XQ^{i^*+1}}, \label{eq:def_big_theta}
\end{equation}
\[
\bar{\Theta}_j \defeq \sum_{i\notin \good_j}\lambda_{i,j}\ketbra{e_{i,j}}{e_{i,j}}^{XQ^{i^*+1}}.
\]
Define (the normalized state) \begin{equation}  \sigma_j^{XQ^{i^*+1}}\defeq\frac{(\tau^{Q^{i^*+1}})^{\frac{1}{2}}\Theta_j^{XQ^{i^*+1}}(\tau^{Q^{i^*+1}})^{\frac{1}{2}}}{\Tr((\tau^{Q^{i^*+1}})^{\frac{1}{2}}\Theta_j^{XQ^{i^*+1}}(\tau^{Q^{i^*+1}})^{\frac{1}{2}})}. \label{eq:def_sigma_j}
\end{equation}
{We extend $\sigma_j^{{XQ^{i^*+1}}}$ to $\sigma_j^{{XQ^{m}}}$ in the same way as eq.~\ref{eq:natural_extension}.}

\noindent Consider eq. \ref{eq:tau_extension}. Let us assume for contradiction that for all $p_j>\frac{1}{n^{10+c''}}$,
\[weight(\good_j)\defeq\Tr((\tau^{Q^{i^*+1}})^{\frac{1}{2}}\Theta_j^{XQ^{i^*+1}}(\tau^{Q^{i^*+1}})^{\frac{1}{2}})< \frac{1}{n^{10}}.\]

\noindent Define \[B\defeq\{j:p_j\leq \frac{1}{n^{10+c''}}\}.\]
Using $|J|\leq (c''+2)\log(n)$, we have\[\sum_{j\in {B}}p_j\leq \frac{2^{|J|}}{n^{10+c''}}\leq \frac{1}{n^8}.\] This gives \begin{equation}
    \sum_{j\in \widebar{B}}p_j\geq 1-\frac{1}{n^8}.  \label{eq:prob_bar_b}
\end{equation}
Define {unnormalised} $\rho_j^{XQ^{i^*+1}}\defeq(\tau^{Q^{i^*+1}})^{\frac{1}{2}}\bar{\Theta}_j^{XQ^{i^*+1}}(\tau^{Q^{i^*+1}})^{\frac{1}{2}}$ where \begin{equation}
\bar{\Theta}_j^{XQ^{i^*+1}} \geq 2^{t_{j}} \Pi_{\support(\bar{\Theta}_j^{XQ^{i^*+1}})}. \label{eq:theta-bar}
\end{equation}
Let \[
\rho_{0,j}^{XQ^{i^*+1}}\defeq\frac{\rho_j^{XQ^{i^*+1}}}{\Tr(\rho_j^{XQ^{i^*+1}})}.
\]
For all $j\in \Bar{B}$, we have\[
{\tau}_j^{XQ^{i^*+1}}=\alpha\rho_{0,j}^{XQ^{i^*+1}}+(1-\alpha)\sigma_j^{XQ^{i^*+1}},
\]
where \begin{align}
    \alpha=\Tr(\rho_j^{XQ^{i^*+1}})\geq 1-\frac{1}{n^{10}}. \label{eq:alpha_weight}
\end{align}
Using Fact \ref{fact:conjugation} and eq. \ref{eq:theta-bar}, we get\[
(\tau^{Q^{i^*+1}})^{\frac{-1}{2}}\rho_j^{XQ^{i^*+1}}(\tau^{Q^{i^*+1}})^{\frac{-1}{2}} \geq 2^{t_{j}} \Pi_{\support(\tau^{Q^{i^*+1}})}\Pi_{\support(\bar{\Theta}_j^{XQ^{i^*+1}})}\Pi_{\support(\tau^{Q^{i^*+1}})}
\] which also gives \begin{equation}
    \label{eq:as_before}
(\tau^{Q^{i^*+1}})^{\frac{-1}{2}}\rho_{0,j}^{XQ^{i^*+1}}(\tau^{Q^{i^*+1}})^{\frac{-1}{2}} \geq 2^{t_{j}} \Pi_{\support(\tau^{Q^{i^*+1}})}\Pi_{\support(\bar{\Theta}_j^{XQ^{i^*+1}})}\Pi_{\support(\tau^{Q^{i^*+1}})}.
\end{equation}
As before, we have
\begin{align}
&\support((\tau^{Q^{i^*+1}})^{\frac{1}{2}}\Pi_{\support(\rho_{0,j}^{XQ^{i^*+1}})} (\tau^{Q^{i^*+1}})^{\frac{1}{2}})\nonumber\\&=\support((\tau^{Q^{i^*+1}})^{\frac{-1}{2}}\Pi_{\support(\rho_{0,j}^{XQ^{i^*+1}})} (\tau^{Q^{i^*+1}})^{\frac{-1}{2}})\nonumber
\\&=\support((\tau^{Q^{i^*+1}})^{\frac{-1}{2}}(\rho_{0,j}^{XQ^{i^*+1}})(\tau^{Q^{i^*+1}})^{\frac{-1}{2}})\nonumber
\\&=\support((\tau^{Q^{i^*+1}})^{\frac{-1}{2}}(\rho_j^{XQ^{i^*+1}})(\tau^{Q^{i^*+1}})^{\frac{-1}{2}}) \nonumber
\\&={\support(\bar{\Theta}_{j}^{XQ^{i^*+1}})}. \label{eq:supp_equal}
\end{align}
From eq. \ref{eq:as_before} and Fact \ref{fact:conjugation}, using $\support(\rho_{0,j}^{XQ^{i^*+1}})\subseteq \support(\tau^{Q^{i^*+1}})$, we have \[
\rho_{0,j}^{XQ^{i^*+1}}\geq 2^{t_{j}} (\tau^{Q^{i^*+1}})^{\frac{1}{2}}\Pi_{\support(\bar{\Theta}_j^{XQ^{i^*+1}})}(\tau^{Q^{i^*+1}})^{\frac{1}{2}}.
\]
Fact \ref{fact:conjugation} gives us \[
\rho_{0,j}^{XQ^{i^*+1}}\geq 2^{t_{j}} \Pi_{\support(\rho_{0,j}^{XQ^{i^*+1}})}(\tau^{Q^{i^*+1}})^{\frac{1}{2}}\Pi_{\support(\bar{\Theta}_j^{XQ^{i^*+1}})}(\tau^{Q^{i^*+1}})^{\frac{1}{2}}\Pi_{\support(\rho_{0,j}^{XQ^{i^*+1}})}. 
\]
As before, tracing out on both sides and taking $\log$ gives \begin{align*}
0&\geq t_{j}+ \log \Tr\left(\Pi_{\support(\rho_{0,j}^{XQ^{i^*+1}})}(\tau^{Q^{i^*+1}})^{\frac{1}{2}}\Pi_{\support(\Bar{\Theta}_{j}^{XQ^{i^*+1}})} (\tau^{Q^{i^*+1}})^{\frac{1}{2}} \Pi_{\support(\rho_{0,j}^{XQ^{i^*+1}})} \right)\\ &= t_{j}+ \log \Tr \left(\Pi_{\support(\Bar{\Theta}_{j}^{XQ^{i^*+1}})} (\tau^{Q^{i^*+1}})^{\frac{1}{2}} \Pi_{\support(\rho_{0,j}^{XQ^{i^*+1}})} (\tau^{Q^{i^*+1}})^{\frac{1}{2}}\right) &\mbox{(cyclicity)}
\\ &= t_{j}+ \log \Tr \left( (\tau^{Q^{i^*+1}})^{\frac{1}{2}} \Pi_{\support(\rho_{0,j}^{XQ^{i^*+1}})} (\tau^{Q^{i^*+1}})^{\frac{1}{2}}\right) &\mbox{(eq. \ref{eq:supp_equal})}
\\ &= t_{j}+ \log \Tr \left( \Pi_{\support(\rho_{0,j}^{XQ^{i^*+1}})} \tau^{Q^{i^*+1}}\right). 
\end{align*}
This gives\begin{align}
    \diver{\rho_{0,j}^{XQ^{i^*+1}}}{\Id^X\otimes\tau^{Q^{i^*+1}}} & \geq -\log\Tr(\Pi_{\support(\rho_{0,j}^{XQ^{i^*+1}})}\tau^{Q^{i^*+1}}) \nonumber&\mbox{(Fact \ref{fact:D_0})}
    \\& \geq t_{j}. \label{eq:lower_bound_D}
\end{align}
We also have
\begin{align}
&\diver{\sigma_j^{XQ^{i^*+1}}}{U_X\otimes\tau^{Q^{i^*+1}}} \geq 0. &\mbox{(Fact \ref{fact:non_negative})} \nonumber\\
\implies& \diver{\sigma_j^{XQ^{i^*+1}}}{\Id^X\otimes\tau^{Q^{i^*+1}}} \geq -n. &\mbox{($|X|=n$)} \label{eq:lower-minus-n}\end{align}
Thus, for all $j \in \Bar{B}$, from Fact \ref{fact:quantum_chaining}, 
\begin{align}  \diver{{\tau}_j^{XQ^{i^*+1}}}{\Id_X\otimes\tau^{Q^{i^*+1}}}&\geq \alpha \diver{\rho_{0,j}^{XQ^{i^*+1}}}{\Id_X\otimes\tau^{Q^{i^*+1}}} +(1-\alpha) \diver{\sigma_j^{XQ^{i^*+1}}}{\Id_X\otimes\tau^{Q^{i^*+1}}} - 1 \nonumber
\\ &\geq (1-\frac{1}{n^{10}})t_{j} -\frac{n}{n^{10}} - 1 \nonumber
\\ &\geq (1-\frac{1}{n^{10}})t_{j} -2. \label{eq:lowerbounding}
\end{align}
The first inequality above follows from eqs. \ref{eq:alpha_weight}, \ref{eq:lower_bound_D} and eq. \ref{eq:lower-minus-n}.
From Fact \ref{fact:quantum_chaining}, we get \begin{align}
&\E_j\left[\mathrm{D}({\tau}_j^{XQ^{i^*}}\|\Id^X\otimes\tau^{Q^{i^*}})\right] \geq \mathrm{D}(\tau^{XQ^{i^*}}\|\Id^X\otimes\tau^{Q^{i^*}}). \nonumber\end{align}
We can rewrite this as
\begin{align}
     \sum_{j}p_jl_{2,j} \geq l_2, \label{eq:convexity}
\end{align}where $l_2=\mathrm{D}(\tau^{XQ^{i^*}}\|\Id^X\otimes\tau^{Q^{i^*}})$. {Using Fact \ref{fact:identity_upper} and eq. \ref{eq:tau_extension}, for $j\in \bar{B}$, we have \begin{equation}
\Id^X\otimes\tau^{Q^{i^*}}\geq \tau^{XQ^{i^*}}\geq \frac{1}{n^{10+c''}} {\tau_j}^{XQ^{i^*}}.
\label{eq:dmax_dominance}\end{equation}
Thus, for $j\in \bar{B}$, we get \begin{align}
    l_{2,j}&=\mathrm{D}({\tau}_j^{XQ^{i^*}}\|\Id^X\otimes\tau^{Q^{i^*}}) \nonumber\\
    &\leq \dmax{{\tau}_j^{XQ^{i^*}}}{\Id^X\otimes\tau^{Q^{i^*}}} &\mbox{(Fact \ref{fact:monotonicity_renyi})} \nonumber\\
    &\leq (10+c'')\log(n). &\mbox{(eq. \ref{eq:dmax_dominance})} \label{eq:l_2}
\end{align}
}We have\begin{align*}
    &\diver{\tau^{XQ^{i^*+1}}}{\Id^X \otimes \tau^{Q^{i^*+1}}}\\&\geq \E_j \left[\diver{{\tau}_j^{XQ^{i^*+1}}}{\Id^X \otimes \tau^{Q^{i^*+1}}}\right] - |J|&\mbox{(Fact \ref{fact:quantum_chaining})}
    \\&= \sum_{j\in \Bar{B}} p_j\diver{{\tau}_j^{XQ^{i^*+1}}}{\Id^X \otimes \tau^{Q^{i^*+1}}} +\sum_{j\in {B}} p_j\diver{{\tau}_j^{XQ^{i^*+1}}}{\Id^X \otimes \tau^{Q^{i^*+1}}} - |J|\\
    &\geq \sum_{j\in \bar{B}}p_j\left[\diver{{\tau}_j^{XQ^{i^*+1}}}{\Id^X \otimes \tau^{Q^{i^*+1}}} \right] + \sum_{j\in {B}}p_j\left[\mathrm{D}({\tau}_j^{XQ^{i^*}}\|\Id^X \otimes \tau^{Q^{i^*}}) \right] - |J| &\mbox{(Fact \ref{fact:data})}
    \\ & \geq \sum_{j\in \bar{B}}p_j\left[(1-\frac{1}{n^{10}})t_{j} -2  \right] +\sum_{j\in {B}}p_jl_{2,j}- |J| &\mbox{(eq. \ref{eq:lowerbounding})}
     \\ & \geq \sum_{j\in \bar{B}}p_j\left[(1-\frac{1}{n^{10}})t_{j} \right]-2   +\sum_{j\in {B}}p_jl_{2,j}- |J| &\mbox{($\sum_{j\in \bar{B}}p_j\leq 1$)}
     \end{align*}
     \begin{align*}
     & = \sum_{j\in \bar{B}}p_j\left[(1-\frac{1}{n^{10}})(l_{2,j}+(3c^{-1}+10c'')\log(n)) \right] -2  +\sum_{j\in {B}}p_jl_{2,j}- |J|  &\mbox{($t_{j}=l_{2,j}+(3c^{-1}+10c'')\log(n)$)} 
    \\ & \geq \sum_{j}p_jl_{2,j}+\left(1-\frac{1}{n^{10}}\right)\left(1-\frac{1}{n^{8}}\right)\cdot(3c^{-1}+10c'')\log(n) \\&\tab -\sum_{j\in \bar{B}}p_j\cdot\frac{(10+c'')\log(n)}{n^{10}}-2  - |J| &\mbox{(eqs. \ref{eq:prob_bar_b} and \ref{eq:l_2})}
   \\ & \geq \sum_{j}p_jl_{2,j}+\left(1-\frac{1}{n^{10}}\right)\left(1-\frac{1}{n^{8}}\right)\cdot(3c^{-1}+10c'')\log(n) -3 - |J| 
   \\&\geq l_2+\left(1-\frac{1}{n^{10}}\right)\left(1-\frac{1}{n^{8}}\right)\cdot(3c^{-1}+10c'')\log(n) -3 - |J| &\mbox{(eq. \ref{eq:convexity})}\\ 
   &\geq l_2 +2c^{-1} \log(n) &\mbox{($|J|\leq (c''+2)\log(n)$)}\\&= \mathrm{D}(\tau^{XQ^{i^*}}\|\Id^X\otimes\tau^{Q^{i^*}}) +2c^{-1} \log(n).
\end{align*}
This is a contradiction to Claim \ref{claim:D_close}. Thus, there exists a  $j$  such that $p_j>\frac{1}{n^{10+c''}}$ and \begin{align}
    weight(\good_j)=\Tr((\tau^{Q^{i^*+1}})^{\frac{1}{2}}\Theta_j^{XQ^{i^*+1}}(\tau^{Q^{i^*+1}})^{\frac{1}{2}})\geq \frac{1}{n^{10}}. \label{eq:weight_good_large}\end{align} Furthermore, eq. \ref{eq:p_j=0_negl} ensures that the $j$ we identified above is non-zero. From eq. \ref{eq:def_big_theta}, we have \[\Theta_j^{XQ^{i^*+1}}\leq 2^{t_{j}}\Id^{XQ^{i^*+1}}.\] 
Fact \ref{fact:conjugation} gives
 \[(\tau^{Q^{i^*+1}})^{\frac{1}{2}}\Theta_j^{XQ^{i^*+1}}(\tau^{Q^{i^*+1}})^{\frac{1}{2}}\leq 2^{t_{j}}\Id^{X} \otimes\tau^{Q^{i^*+1}}.\]
 Thus,
 \begin{align*}
 &\frac{(\tau^{Q^{i^*+1}})^{\frac{1}{2}}\Theta_j^{XQ^{i^*+1}}(\tau^{Q^{i^*+1}})^{\frac{1}{2}}}{\Tr((\tau^{Q^{i^*+1}})^{\frac{1}{2}}\Theta_j^{XQ^{i^*+1}}(\tau^{Q^{i^*+1}})^{\frac{1}{2}})}\leq 2^{t_{j}}\cdot n^{10}\cdot\Id^{X} \otimes\tau^{Q^{i^*+1}}. &\mbox{(eq. \ref{eq:weight_good_large})}\end{align*}
Using eq. \ref{eq:def_sigma_j}, we get\begin{equation}
    \sigma_j^{XQ^{i^*+1}}\leq 2^{t_{j}}\cdot n^{10}\cdot\Id^{X} \otimes\tau^{Q^{i^*+1}}. \label{eq:dmax_equation}
\end{equation}
Consider,
\begin{align}
    S_2(X|Q^{i^*+1})_{\sigma_j} &\geq S_\infty(X|Q^{i^*+1})_{\sigma_j}  &\mbox{(Fact \ref{fact:monotonicity_renyi})}\nonumber\\ &=-\min_{\rho^{Q^{i^*+1}}}\mathrm{D}_\infty(\sigma_j^{XQ^{i^*+1}}\|\Id^X\otimes \rho^{Q^{i^*+1}} ) &\mbox{(Definition \ref{def:renyi_conditional})}\nonumber
    \\ &\geq -\mathrm{D}_\infty(\sigma_j^{XQ^{i^*+1}}\|\Id^X\otimes \tau^{Q^{i^*+1}} ) \nonumber
    \\&\geq -t_{j}-10\log
    (n)&\mbox{(eq. \ref{eq:dmax_equation})}\nonumber\\&= -l_{2,j}-(10+3c^{-1}+10c'') \log(n). 
 &\mbox{($t_{j}=l_{2,j}+(3c^{-1}+10c'')\log(n)$)}\label{eq_s2}
\end{align}
Since $p_j\geq \frac{1}{n^{10+c''}}$, from eqs.  \ref{eq:tau_extension} and   \ref{eq:weight_good_large}, we have
\begin{align}
\tau^{XQ^{i^*}}\geq \frac{1}{n^{10+c''}}{\tau}_j^{XQ^{i^*}} \geq \frac{1}{n^{20+c''}}\sigma_j^{XQ^{i^*}}. 
\label{eq:inequality_tau_tilde}
\end{align}
Thus, {on extending $\sigma_j^{XQ^{i^*}}$ to $\sigma_j^{XQ^m}$ (like eq. \ref{eq:natural_extension})}, we get \begin{equation}
\dmax{\sigma_j}{\tau}\leq (20+c'')\log(n). \label{eq:condition_1} \end{equation}
From eq. \ref{eq:inequality_tau_tilde}, we get \begin{align}
   & \tau^{Q^{i^*}} \geq \frac{1}{n^{20+c''}}\sigma_j^{Q^{i^*}} \label{eq:inequality_tau}
   \end{align}
   Tensoring with $\Id^X$ and taking $\log$, we get
   \begin{align}
       &\log(\Id^X\otimes \tau^{Q^{i^*}}) \geq \log\left(\frac{\Id^X\otimes\sigma_j^{Q^{i^*}}}{n^{20+c''}}\right)\nonumber
\\\implies &\Tr\left(\sigma_j^{XQ^{i^*}}\left(\log(\Id^X\otimes \tau^{Q^{i^*}}) -\log(\Id^X\otimes\sigma_j^{Q^{i^*}})\right)\right)\geq -(20+c'')\log(n). \label{eq:ineq_D2}
\end{align}
Note that above, we have used that for operators $A,B, M$ such that $A\geq B$ and $M\geq 0$,  $\Tr (M A) \geq \Tr (MB)$.
Now,
\begin{align}
\diver{\sigma_j^{XQ^{i^*}}}{\Id^X\otimes\tau^{Q^{i^*}}} & \geq -\log\Tr(\Pi_{\supp(\sigma_j^{XQ^{i^*}})}(\Id^X\otimes\tau^{Q^{i^*}})) &\mbox{(Fact \ref{fact:D_0})} \nonumber
    \\  & \geq -\log\Tr(\Pi_{\supp({\tau}_j^{XQ^{i^*}})}(\Id^X\otimes\tau^{Q^{i^*}})) &\mbox{(eq. \ref{eq:inequality_tau_tilde})}\nonumber
    \\  & \geq \dtwo{\tau_j^{XQ^{i^*}}}{(\Id^X\otimes\tau^{Q^{i^*}})} -1 &\mbox{(eq. \ref{eq:closeness})}\nonumber
     \\  & \geq \diver{\tau_j^{XQ^{i^*}}}{(\Id^X\otimes\tau^{Q^{i^*}})}  -1  &\mbox{(Fact \ref{fact:monotonicity_renyi})}\nonumber
    \\& =l_{2,j} -1. \label{eq:ineq_D}
\end{align}
Consider,
\begin{align}
&S(X|Q^{i^*})_{\sigma_j}\nonumber\\&=-\diver{\sigma_j^{XQ^{i^*}}}{\Id^X\otimes\sigma_j^{Q^{i^*}}} \nonumber
    \\&= -\Tr\left(\sigma_j^{XQ^{i^*}}\left(\log(\sigma_j^{XQ^{i^*}})-\log(\Id^X\otimes\sigma_j^{Q^{i^*}})\right)\right) \nonumber
    \\&= -\Tr\left(\sigma_j^{XQ^{i^*}}\left(\log(\sigma_j^{XQ^{i^*}})-\log(\Id^X\otimes\tau^{Q^{i^*}})\right)\right)\nonumber\\ &\tab- \Tr\left(\sigma_j^{XQ^{i^*}}\left(\log(\Id^X\otimes\tau^{Q^{i^*}})-\log(\Id^X\otimes\sigma_j^{Q^{i^*}})\right)\right) \nonumber
     \\&= -\diver{\sigma_j^{XQ^{i^*}}}{\Id^X\otimes\tau^{Q^{i^*}}}-\Tr\left(\sigma_j^{XQ^{i^*}}\left(\log(\Id^X\otimes\tau^{Q^{i^*}})-\log(\Id^X\otimes\sigma_j^{Q^{i^*}})\right)\right) \nonumber
     \\&\leq -l_{2,j} +(20+c'')\log(n)+1. &\mbox{(eqs. \ref{eq:ineq_D2} and \ref{eq:ineq_D})}\label{eq:s_upper1}
\end{align} 
From eqs. \ref{eq_s2} and \ref{eq:s_upper1}, we get
\begin{align}
 S(X|Q^{i^*})_{\sigma_j}- S_2(X|Q^{i^*+1})_{\sigma_j}\leq (30+3c^{-1}+11c'')\log(n)+1. \label{eq:condition_2}
\end{align}
From eqs. \ref{eq:condition_1} and \ref{eq:condition_2}, on setting $\gamma=\sigma_j$, we get         \[\dmax{\gamma}{\tau}\leq (20+c'')\log(n),\]  
       \[
S(X|Q^{i^*})_{\gamma}-S_2(X|Q^{i^*+1})_{\gamma} \leq (30+3c^{-1}+11c'')\log(n)+1. \qedhere\]
\end{proof}

\end{document}